\documentclass[11pt]{article}
\usepackage{amsmath, amssymb, amsfonts}
\usepackage{amsthm}
\usepackage{enumitem}
\usepackage{cite}
\newtheorem{lemma}{Lemma}       
\newtheorem{theorem}{Theorem}

\newtheorem{remark}{Remark}
\usepackage{helvet}
\usepackage{graphicx}
\usepackage{hyperref}
\usepackage{url}
\usepackage{geometry}
\usepackage{graphicx}
\usepackage{caption}
\usepackage{float} 

\DeclareCaptionLabelSeparator{pipe}{\ \textbar\ }
\usepackage{xcolor}

\renewcommand{\u}{\mathbf{u}}
\newcommand{\balpha}{\boldsymbol{\alpha}}
\newcommand{\bbeta}{\boldsymbol{\beta}}

\newcommand{\bPhi}{\boldsymbol{\Phi}}
\newcommand{\x}{\mathbf{x}}

\usepackage{amsmath, amssymb, amsfonts}
\usepackage{amsthm}
\usepackage{enumitem}
\usepackage{cite}
\usepackage{helvet}
\usepackage{graphicx}
\usepackage{hyperref}
\usepackage{url}
\usepackage{geometry}
\usepackage{graphicx}
\usepackage{caption}
\usepackage{float} 

\DeclareCaptionLabelSeparator{pipe}{\ \textbar\ }
\renewcommand{\thetable}{S\arabic{table}}
\usepackage{xcolor}
\renewcommand{\u}{\mathbf{u}}
\newcommand{\bv}{\mathbf{v}}

\usepackage{booktabs}
\usepackage{longtable}
\usepackage{array}
\usepackage{bibunits}

\title{Solving and Learning Navier-Stokes Dynamics with the Neural Basis Method}
\date{} 
\author{Yuhe Wang\thanks{Asia Pacific Technology, Inc., Houston, TX 77042, USA; Institute for Scientific Computation, Texas A\&M University, College Station, TX 77843, USA. \texttt{yuhe.wang@me.com}}, Zhu Wang\thanks{Department of  Mathematics, University of South Carolina, Columbia, SC 29208, USA. \texttt{wangzhu@math.sc.edu}}, Min Wang\thanks{Department of  Mathematics, University of Houston, TX 77204, USA. Email: \texttt{mwang50@uh.edu}}} 

\begin{document}

\makeatletter
\def\bibcite#1#2{%
  \global\@namedef{b@#1}{%
    \hyper@@link[cite]{}{cite.#1}{#2}%
  }%
}
\makeatother

\begin{bibunit}[unsrt]
\renewcommand{\thefootnote}{\fnsymbol{footnote}}

\maketitle
\vspace{-5ex}

\begin{abstract}
\noindent
Physics-informed machine learning has become a major direction in predictive modeling of fluid dynamics, combining governing physics with learned representations.
Yet prevailing formulations neither guarantee that small PDE-residual losses imply small solution errors nor scale effectively with increasing problem size.
Here we present the neural basis method (NBM) as a new approach for Navier--Stokes dynamics.
NBM is built on a ``representation-then-projection'' principle: fixed neural bases define the representation space, while projection of the governing equations determines the expansion coefficients.
The same bases define common coordinates for the parametric solution manifold, recasting operator learning as a parameter-to-coefficient mapping and enabling scalability by removing explicit spatial dependence.
For incompressible flow, NBM yields a velocity-only formulation whose residual provides an a posteriori velocity-error estimate, establishing a certified, data-free training objective for coefficient learning.
Theoretical analysis and numerical experiments demonstrate the accuracy, efficiency, generalization, and scalability of NBM, with substantial improvements over baselines.
\vspace{1em}  

\noindent\textbf{Keywords:} neural basis method, operator learning, physics-informed machine learning, parametric PDE, Navier-Stokes, incompressible flow and velocity-only formulation
\end{abstract}

\vspace{-15pt}
\section*{ }
\vspace{-15pt}
Computational methods for the Navier--Stokes systems translate fluid-motion laws into predictive tools for scientific inquiry and engineering practice~\cite{ferziger_computational_2020,girault2012finite,temam2024navier}.
Incompressible Newtonian flow is a canonical setting for theoretical and numerical studies of fluid motion and describes a broad class of practically important computational fluid dynamics (CFD) problems~\cite{elman2005finite,guermond2006overview,quartapelle2013numerical,buaria2020self}.
Established CFD approaches, including finite volume and finite element solvers, remain the gold standard, but each new parametric configuration generally requires a separate discretized solve~\cite{quarteroni2011certified,prud2002reliable,najm2009uncertainty}, 
creating a fundamental per-instance cost bottleneck in many-query settings such as uncertainty quantification, PDE-constrained optimization, and design~\cite{veroy2005certified,quarteroni2007numerical,benner2015survey}.

Physics-informed machine learning (PIML) has emerged as a compelling response to this bottleneck, seeking predictive models that retain contact with governing physics while exploiting the capabilities of learned representations~\cite{brunton2024promising,karniadakis2021physics,brunton2020machine}.
Incompressible Navier--Stokes systems are a central target and testbed for PIML, spanning physics-informed neural networks (PINNs) for individual PDE instances~\cite{raissi2019physics}, neural operators for reusable solution maps~\cite{lu2021learning,li2020fourier,kovachki2023neural}, reduced-order models for latent dynamics~\cite{rowley2017model,solera2024beta}, and hybrid approaches such as physics-informed neural operator learning (PINO)~\cite{li2024physics,wang2021learning,karumuri2026physics}.
Despite substantial progress, trade-offs remain among data requirements, physical consistency, optimization robustness, generalization, and computational cost~\cite{karniadakis2021physics,brunton2024promising,vinuesa2026decoding}, and no single framework has yet combined solver-level accuracy, physics-consistent prediction, scalable operator learning, and practical many-query efficiency~\cite{krishnapriyan2021characterizing,subedi2025operator}.

Two structural limitations are particularly important here. 
First, the governing equations are imposed externally through soft penalty losses over generic neural approximators, rather than built into the underlying neural formulation~\cite{karniadakis2021physics,li2024physics}.
Such losses generally act only as optimization surrogates during training, with no intrinsic guarantee that their reduction correspondingly reduces the solution error~\cite{de2024numerical,qiu2026variationally,bachmayr2025variationally}; the resulting optimization is also susceptible to ill-conditioning and training pathologies~\cite{wang2021understanding,wang2022and,krishnapriyan2021characterizing,chen2026training}.
Second, scalability remains fundamentally constrained by the explicit coupling to spatial resolution, with computational and memory costs increasing rapidly with spatial degrees of freedom or evaluation points~\cite{shukla2022scalable,subedi2025operator}.

The neural basis method (NBM), introduced in our previous work~\cite{wang2026solving}, is built on a ``representation-then-projection'' principle combining neural construction of the representation space~\cite{cybenko1989approximation,barron1993universal,mao2026approximation,takhanov2024multi} with the projection structure of classical numerical methods~\cite{babuvska1978posteriori,brenner2008mathematical}. 
Concretely, the representation is supported by fixed basis functions generated by a neural network, with physics embedded directly into the representation and discretization, while only the basis expansion coefficients are determined through projection.
The same basis space provides a common coordinate system for the parametric solution manifold, with each parameter instance mapped to the coordinates of its corresponding solution, i.e. the basis expansion coefficients, thereby recasting operator learning as a parameter-to-coefficient map.
This design provides a numerical foundation for rigorous convergence analysis and a posteriori error estimation, while removing the explicit dependence of operator learning on spatial resolution.
NBM is related to random feature methods through fixed neural bases and least-squares coefficient solves for individual PDE instances~\cite{huang2006extreme,chen2022bridging,rahimi2007random,dong2021local,zhang2024transferable,dwivedi2020physics,xu2026weak}, but differs in basis construction, PDE discretization, and more fundamentally, its operator learning viewpoint: rather than fit the operator through linear regression over parameter-dependent bases~\cite{nelsen2024operator,fabiani2025randonets}, NBM learns a general mapping from the parameters to the expansion coefficients in the fixed neural basis space.

Here, we present NBM as a new approach for solving and learning incompressible Navier--Stokes dynamics, addressing the two structural limitations identified above.
Specifically, we embed incompressibility a priori through a divergence-free velocity basis construction and cast Navier--Stokes into a conservative neural discretization compatible with classical finite-volume principles.
This naturally incorporates the Leray projection~\cite{leray2016motion} into NBM, yielding a velocity-only formulation in which pressure is no longer a primary unknown and avoiding the coupled velocity--pressure solution, a persistent source of difficulty in incompressible CFD~\cite{wang2009robust,guermond2006overview,brown2001accurate,john2017divergence}, while pressure can be recovered using proper postprocessing when needed. 
Building on NBM's numerical ground and the specific flow physics here, we provide a theoretical result on the convergence behavior characterizing the contributions of neural approximation, residual discretization, and algebraic solution to the overall error, and establish that the NBM residual provides a rigorous a posteriori estimate of velocity error.
This turns the residual from a surrogate loss into a certified operator-learning objective whose minimization directly controls velocity error without reference data, providing a training-loss-to-solution-error link generally absent from existing PIML approaches.
Meanwhile, by transforming operator learning as a parameter-to-velocity-coefficient map in a fixed divergence-free velocity basis space, NBM decouples learning from spatial resolution, addressing the second limitation.

We validate NBM on analytical benchmarks, boundary-driven flows, and shear-layer dynamics, obtaining accurate single-instance solutions with refinement behavior and residual--error correspondence consistent with the theoretical results.
For parametric problems, NBM achieve sub-percent velocity errors substantially below those of the baseline neural operators, and this advantage retains under generalization.
On a single NVIDIA H100 GPU, NBM achieves \(>10^{2}\times\) and \(>10\times\) measured speedups in training and inference, respectively, over these baselines.
Relative to the OpenFOAM finite volume solver, the inference speedup exceeds \(10^{4}\times\) and increases with prediction horizon and problem size.
We also show that the residual-based loss shapes the training landscape to align with the velocity error, mechanistically explaining the enhanced operator-learning performance.

\section*{Results}\label{results}
\vspace{-15pt}
We first present the NBM formulation for incompressible Newtonian flow, with convergence analysis, a posterior error estimate, and its implication for operator learning.
We then demonstrate single-instance PDE solutions and parametric accuracy, efficiency, robustness, and scalability, and investigate the mechanism underlying the improved performance.
\begin{figure}[h]
    \centering
    \includegraphics[width=\linewidth]{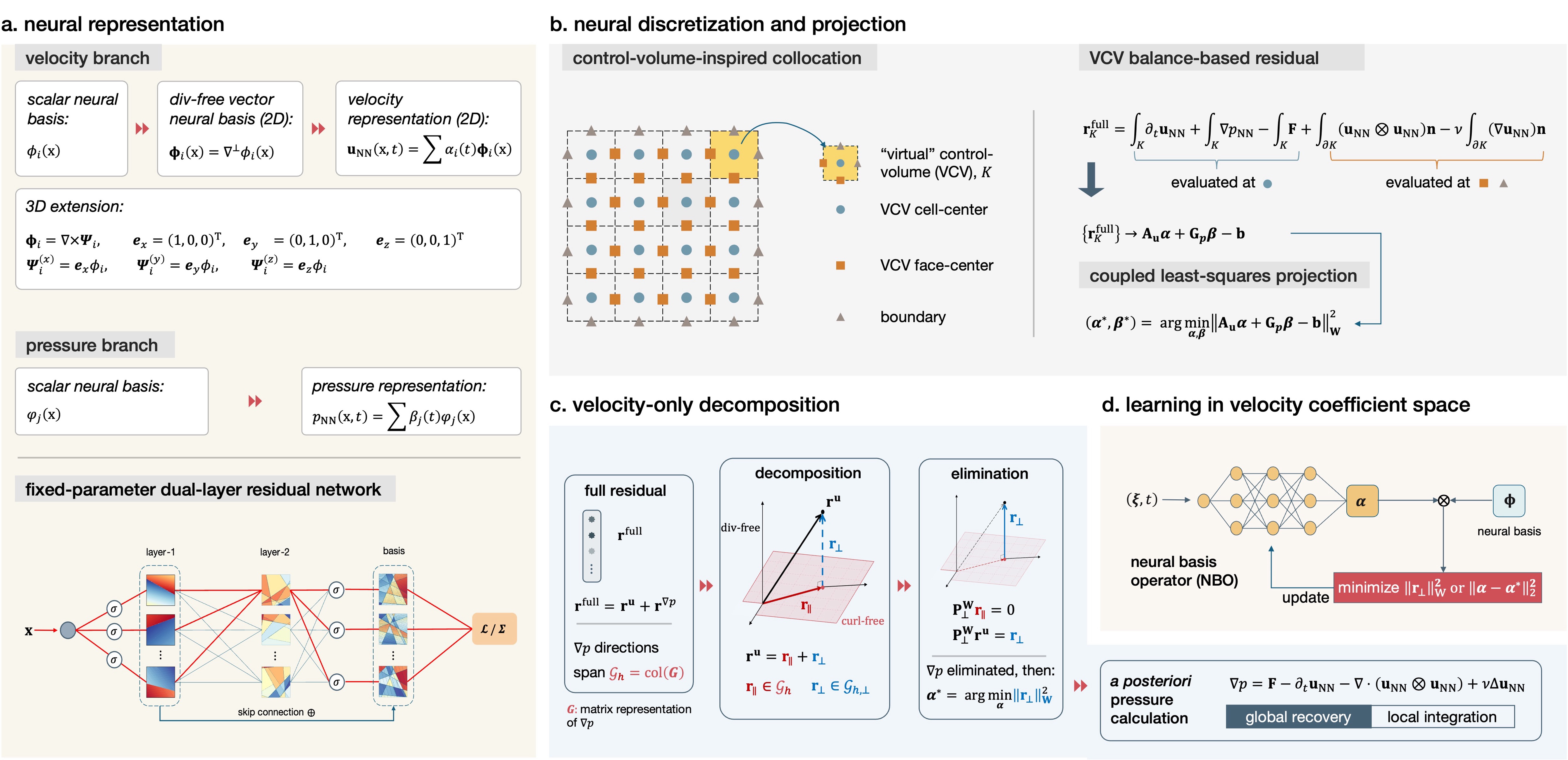}
    \caption{%
    \bfseries NBM formulation for incompressible Newtonian fluid dynamics.\,
    \bfseries a,
    \normalfont Neural representation. 
    Scalar neural bases are generated by fixed-parameter, dual-layer residual networks. 
    Divergence-free velocity bases are then constructed via rotated gradient in 2D, or from curls of vector-valued neural potentials in 3D. 
    Pressure field is represented using scalar neural bases.
    \bfseries b,
    \normalfont Neural discretization and projection.
    Residuals are evaluated on ``virtual'' control-volumes and assembled into a linearized least-squares system to determine the velocity and pressure coefficients.
    \bfseries c,
    \normalfont Velocity-only decomposition.
    The full momentum residual consists of velocity-dependent and
    pressure-gradient contributions,
    \(\mathbf{r}^{\mathrm{full}}=\mathbf{r}^{\mathbf{u}}+\mathbf{r}^{\nabla p}\),
    where
    \(\mathbf{r}^{\mathbf{u}}=\mathbf{A}_{\mathbf{u}}\boldsymbol{\alpha}-\mathbf{b}\)
    and
    \(\mathbf{r}^{\nabla p}=\mathbf{G}_p\boldsymbol{\beta}\).
    The velocity-dependent residual is decomposed into components
    parallel (\(\mathbf{r}_{\parallel}\)) and orthogonal
    (\(\mathbf{r}_{\perp}\)) to the discrete pressure-gradient space
    \(\mathcal{G}_h\), such that
    \(\mathbf{r}^{\mathbf{u}}=\mathbf{r}_{\parallel}+\mathbf{r}_{\perp}\).
    With incompressibility satisfied by the divergence-free
    velocity representation, applying the Leray projection
    eliminates the component parallel to the pressure-gradient
    space, leaving
    \(\mathbf{r}_{\perp}=\mathbf{P}_{\perp}^{\mathbf{W}}\mathbf{r}^{\mathbf{u}}\)
    and yielding the equivalent velocity-only least-squares formulation:
    \(\min_{\boldsymbol{\alpha},\boldsymbol{\beta}}
    \left\|\mathbf{r}^{\mathrm{full}}\right\|_{\mathbf{W}}^{2}
    =
    \min_{\boldsymbol{\alpha}}
    \left\|\mathbf{P}_{\perp}^{\mathbf{W}}
    \mathbf{r}^{\mathbf{u}}\right\|_{\mathbf{W}}^{2}\).
    Pressure can be recovered a posteriori.
    \bfseries d,
    \normalfont Learning in velocity-coefficient space.
    NBO maps parameter-time inputs to velocity coefficients.%
    }
    \label{fig:overview}
\end{figure}

\vspace{-15pt}
\subsection*{NBM formulation for incompressible Newtonian flow}
\vspace{-10pt}
We consider incompressible Newtonian flow governed by the Navier--Stokes equation in conservative form \eqref{eq:NS}, with incompressibility enforced by the continuity equation \eqref{eq:continuity}. 
\begin{equation}
\partial_t \mathbf{u}+\nabla\cdot(\mathbf{u}\otimes\mathbf{u})
-\nu\Delta\mathbf{u}+\nabla p=\mathbf{F},
\label{eq:NS}
\end{equation}
\begin{equation}
\nabla\cdot\mathbf{u}=0.
\label{eq:continuity}
\end{equation}
Here, \(\mathbf{u}\), \(p\), \(\mathbf{F}\), and \(\nu\) are the velocity, kinematic pressure, body force, and kinematic viscosity.
We present the formulation in 2D Cartesian coordinates for clarity;
it extends to 3D and other coordinate systems.
The formulation integrates a physics-conforming neural
representation, conservative ``virtual'' control-volume discretization, discrete pressure-gradient elimination, and operator learning directly in velocity-coefficient space (Fig.~\ref{fig:overview}).

\vspace{3pt}
\textbf{Neural representation (Fig.~\ref{fig:overview}a).}
We approximate velocity in a divergence-free neural basis space, satisfying incompressibility at the representation level.
Specifically, the velocity is represented as 
\(
\mathbf{u}_{\rm NN}(\mathbf{x},t)=\sum_i \alpha_i(t)\boldsymbol{\Phi}_i(\mathbf{x}),
\)
where \(\alpha_i(t)\) is the corresponding neural expansion coefficient. 
In 2D, each divergence-free vector basis $\mathbf{\Phi}_i$ is generated from a scalar neural basis \(\phi_i\) through
\(
\boldsymbol{\Phi}_i(\mathbf{x})
=
\nabla^\perp\phi_i(\mathbf{x})
=
\left(
\partial_y\phi_i(\mathbf{x}),
-\partial_x\phi_i(\mathbf{x})
\right)^{\top}
\) (See ~\cite{wang2026solving} for 3D extension).
Because \(\nabla\cdot\boldsymbol{\Phi}_i=0\), the represented velocity satisfies
\(
\nabla\cdot\mathbf{u}_{\rm NN}=0
\)
for any coefficient vector \(\boldsymbol{\alpha}\).
The kinematic pressure is represented independently as $p_{\rm NN}(\mathbf{x},t)=\sum_j \beta_j(t)\varphi_j(\mathbf{x})$, 
with scalar neural bases $\{\varphi_j\}$ and expansion coefficients $\{\beta_j(t)\}$.
All scalar neural bases use the fixed dual-layer ResNet architecture introduced in~\cite{wang2026solving}.
Once constructed, the basis spaces remain fixed; velocity and pressure depend linearly only on the coefficient vectors $\boldsymbol{\alpha}$ and $\boldsymbol{\beta}$, while the neural
bases retain nonlinear spatial expressivity.

\vspace{3pt}
\textbf{Neural discretization and projection (Fig.~\ref{fig:overview}b).}
With this neural representation, we construct a conservative neural discretization using a control-volume-inspired collocation scheme. 
Each collocation point defines a ``virtual'' control-volume (VCV), over which the integral balance of \eqref{eq:NS} is enforced. 
For each VCV \(K\), Gauss' theorem gives
\begin{equation}
\int_{K} \partial_t \mathbf{u}_{\rm NN}\, d\mathbf{x}
+
\int_{\partial K}
(\mathbf{u}_{\rm NN}\otimes\mathbf{u}_{\rm NN})
\mathbf{n}\, ds
-
\nu
\int_{\partial K}
\nabla\mathbf{u}_{\rm NN}\cdot\mathbf{n}\, ds
+
\int_{K}
\nabla p_{\rm NN}\, d\mathbf{x}
=
\int_{K}
\mathbf{F}\, d\mathbf{x}.
\label{eq:vlcv}
\end{equation}
The transient, pressure-gradient, and forcing terms are evaluated at VCV centers, whereas the convective and viscous fluxes are at VCV faces. 
Neural bases provide field values and spatial derivatives in \eqref{eq:vlcv} analytically.
The resulting residual retains the conservative flux structure of \eqref{eq:NS}, while the divergence-free representation enforces incompressibility. 
The flux form permits established schemes for transient and nonlinear convection terms (see Methods).
Assembling the residuals over all VCV and boundary collocation points yields an overdetermined system in the neural expansion coefficients. 
The NBM solution is then defined by weighted least-squares projection
\begin{equation}
\left(
\boldsymbol{\alpha}^{*},
\boldsymbol{\beta}^{*}
\right)
=
\arg\min_{\boldsymbol{\alpha},\boldsymbol{\beta}}
\left\|
\mathbf{A}_{\mathbf{u}}\boldsymbol{\alpha}
+
\mathbf{G}_{p}\boldsymbol{\beta}
-
\mathbf{b}
\right\|_{\mathbf{W}}^{2}.
\label{eq:nbm_ls_full}
\end{equation}
Here, $\mathbf W=\mathbf W^T\succ0$ specifies the discrete weighting, or
equivalently the discrete measure, on the VCV residuals, \(\mathbf{A}_{\mathbf{u}}\) collects velocity-dependent contribution, \(\mathbf{G}_{p}\) represents pressure-gradient contribution, and \(\mathbf{b}\) contains forcing, velocity boundary, and, for unsteady problems, known contributions from previous time levels. 
Unlike velocity, explicit pressure boundary condition is not required in the present formulation~\cite{sani2006pressure} and thus \(\mathbf{b}\) contains no pressure-dependent entries.
This deterministic convex least-squares problem determines the coupled velocity and pressure coefficients. 

\vspace{3pt}
\textbf{Velocity-only construction via pressure-gradient elimination (Fig.~\ref{fig:overview}c).}
In incompressible flow, pressure is not an independent variable but act as a Lagrange multiplier enforcing incompressibility.
Because the divergence-free velocity neural bases satisfy
incompressibility a priori, pressure need not be a primary NBM
unknown, echoing the classical Leray-projection
idea~\cite{leray2016motion}.
Let \(\mathbb{P}_{\perp}\) denote the projection onto a divergence-free space, equivalently the orthogonal complement of a gradient-field space. 
Applying \(\mathbb{P}_{\perp}\) to \eqref{eq:NS} eliminates the pressure-gradient term, yielding
\(
\mathbb{P}_{\perp}
\left[\partial_t \mathbf{u}
+\nabla\cdot(\mathbf{u}\otimes\mathbf{u})
-
\nu\Delta\mathbf{u}
-
\mathbf{F}
\right]=0
\)
and defining the continuous residual 
\(\mathcal R(\u) := 
\mathbb{P}_{\perp}
\left[\partial_t \mathbf{u}
+\nabla\cdot(\mathbf{u}\otimes\mathbf{u})
-
\nu\Delta\mathbf{u}
-
\mathbf{F}
\right].\)
At the discrete level, we construct a space to represent the pressure-gradient directions in the residuals over VCVs. 
This construction is not unique. 
We denote this discrete pressure-gradient space by \(\mathcal{G}_{h}\) and its matrix representation by \(\mathbf{G}\), whose column space spans these pressure-gradient directions (\(\mathbf{G}\) is not required to coincide with the pressure-gradient block \(\mathbf{G}_{p}\) in \eqref{eq:nbm_ls_full}). 
Let \(\mathbf{P}_{\perp}^{\mathbf{W}}\) denote the weighted orthogonal projection onto the complement of \(\mathcal{G}_{h}\), satisfying 
\(
\mathbf{P}_{\perp}^{\mathbf{W}}\mathbf{G}
=
\mathbf{0}.
\)
With
\(\|\mathbf{r}\|_{\mathbf{W}}^{2}
=
\mathbf{r}^{\top}\mathbf{W}\mathbf{r}\),
one realization of this projector is
\(
\mathbf{P}_{\perp}^{\mathbf{W}}
=
\mathbf{I}
-
\mathbf{G}
\left(
\mathbf{G}^{\top}\mathbf{W}\mathbf{G}
\right)^{\dagger}
\mathbf{G}^{\top}\mathbf{W},
\)
where \((\cdot)^{\dagger}\) denotes the pseudoinverse. 
The velocity coefficients follow from the velocity-only least-squares problem
\begin{equation}
\boldsymbol{\alpha}^{*}
=
\arg\min_{\boldsymbol{\alpha}}
\left\|
\mathbf{P}_{\perp}^{\mathbf{W}}
(\mathbf{A}_{\mathbf{u}}\boldsymbol{\alpha}
-
\mathbf{b})
\right\|_{\mathbf{W}}^{2}.
\label{eq:velocity_only_projection}
\end{equation}
We interpret this formulation as eliminating the pressure-gradient contributions from \eqref{eq:nbm_ls_full} by projecting out the component of the momentum residual in the discrete pressure-gradient space $\operatorname{Col}(\mathbf G)$, thereby yielding a least-squares problem in the velocity-coefficient space.
Indeed, for a fixed $\balpha$, minimizing over the pressure-gradient coefficients gives
\(
\min_{\bbeta}
\left\|
\mathbf A_{\u}\balpha
+
\mathbf G_p\bbeta
-
\mathbf b
\right\|_{\mathbf W}^{2}
=
\left\|
\mathbf P_\perp^{\mathbf W}
\left(
\mathbf A_{\u}\balpha-\mathbf b
\right)
\right\|_{\mathbf W}^{2}.
\)
Then,
\(
\min_{\balpha,\bbeta}
\left\|
\mathbf A_{\u}\balpha
+
\mathbf G_p\bbeta
-
\mathbf b
\right\|_{\mathbf W}^{2}
=
\min_{\balpha}
\left\|
\mathbf P_\perp^{\mathbf W}
\left(
\mathbf A_{\u}\balpha-\mathbf b
\right)
\right\|_{\mathbf W}^{2}.
\)
Pressure can be computed as postprocessing when required. 
This velocity-only construction brings two immediate advantages. 
The velocity-only projected least-squares system significantly reduces primary unknowns relative to the coupled velocity–pressure problem. For operator learning, predicting only velocity coefficients also greatly simplifies its formulation and analysis.

\vspace{3pt}
\textbf{Learning in the velocity coefficient space (Fig.~\ref{fig:overview}d).}
A single-instance NBM solve may cost \(O(N_cN_b^2)\), with \(N_b\) neural bases and \(N_c\) collocation VCVs.
Repeated solutions can quickly become demanding in many-query settings with varying physical parameters, boundary conditions or forcing. 
Building on our previous NBM-powered operator-learning framework~\cite{wang2026solving}, we use the present velocity-only construction to learn the parametric map directly in velocity-coefficient space. 
For a parameter--time input \(\boldsymbol{\zeta}=(\boldsymbol{\xi},t)\), the neural basis operator (NBO) approximates this map as \(\mathcal{N}_{\boldsymbol{\eta}}:\boldsymbol{\zeta}\mapsto\boldsymbol{\alpha}_{\boldsymbol{\eta}}(\boldsymbol{\zeta})\), where \(\boldsymbol{\eta}\) denotes the trainable opertor-network parameters. 
The corresponding velocity field is reconstructed in the fixed divergence-free neural basis as \(\mathbf{u}_{\rm NN}(\mathbf{x};\boldsymbol{\zeta})=\sum_i\alpha_{\boldsymbol{\eta},i}(\boldsymbol{\zeta})\boldsymbol{\Phi}_i(\mathbf{x})\), so that the velocity coefficients provide a finite-dimensional representation of the parametric flow dynamics. 
Training the NBO can be fully self-supervised if using the following loss function based on the pressure-eliminated residual in \eqref{eq:velocity_only_projection}:
\begin{equation}
\mathcal{L}_{\rm RC}(\boldsymbol{\eta})
=
\mathbb{E}_{\boldsymbol{\zeta}\in\mathcal{D}_{\rm tr}}
\left[
\left\|
\mathbf{P}_{\perp}^{\mathbf{W}}
\left(
\mathbf{A}_{\mathbf{u}}(\boldsymbol{\zeta})
\boldsymbol{\alpha}_{\boldsymbol{\eta}}(\boldsymbol{\zeta})
-
\mathbf{b}(\boldsymbol{\zeta})
\right)
\right\|_{\mathbf{W}}^{2}
\right].
\label{eq:nbm_ol_loss}
\end{equation}
Here, \(\mathcal{D}_{\rm tr}\) is the parameter--time training set, and the expectation $\mathbb{E}_{\boldsymbol{\zeta}\in\mathcal{D}_{\rm tr}}$ denotes its empirical average. 
We show below that this NBM residual-based loss provides a posteriori estimate of the velocity error and thus a certificate for monitoring solution accuracy and guiding the operator-learning process;
we call this approach residual-certified NBO (NBO-RC).
Alternatively, direct training on reference velocity-coefficient data yields the data-driven NBO (NBO-DD).

\vspace{-0pt}
\subsection*{NBM convergence analysis and a posteriori error estimation}
\vspace{-10pt}
We establish two theoretical results characterizing the convergence from neural approximation, VCV discretization, and algebraic solution, and providing a posteriori velocity-error estimate underpinning residual-certified operator learning.

\vspace{3pt}
\textbf{Triple-resolution convergence analysis.}
Let
\(V=\{\mathbf{v}\in H^1(\Omega)^d:
\nabla\cdot\mathbf{v}=0\}\)
denote the divergence-free velocity space equipped
with the norm
\(\|\cdot\|_V=\|\cdot\|_{H^1(\Omega)^d}\).
Let \(V_{N_b}\subset V\) be the neural basis space
spanned by \(N_b\) independently sampled
divergence-free neural basis functions, i.e.,
\(V_{N_b}:=\operatorname{span}
\{\boldsymbol{\Phi}_i,\ 1\leq i\leq N_b\}\).
We denote the exact discrete NBM solution by
\(
\u_{N_b,N_c}^{*}
:=
\sum_{i=1}^{N_b}\alpha_i^{*}\bPhi_i(\x),
\)
where $\balpha^{*}$ is the exact minimizer defined in
\eqref{eq:velocity_only_projection} subject to $N_c$ VCVs. The computed solution is denoted by
\(
\widetilde{\u}_{N_b,N_c}
:=
\sum_{i=1}^{N_b}\widetilde{\alpha}_i\bPhi_i(\x),
\)
where the computed coefficient vector $\widetilde{\balpha}$ is controlled by the least-squares optimality residual 
\(
\left\|
\mathbf{A}_{N_b,N_c}^{T}
\left(
\mathbf{A}_{N_b,N_c}\widetilde{\balpha}
-
\mathbf{b}_{N_c}
\right)
\right\|_2
\le
\varepsilon_{\mathrm{LS}},
\)
with $\varepsilon_{\mathrm{LS}}>0$ denoting the prescribed tolerance, 
$\mathbf A_{N_b,N_c} := \mathbf W^{1/2} \mathbf P_\perp^{\mathbf W} \mathbf A_{\u},$ and $\mathbf b_{N_c} := \mathbf W^{1/2} \mathbf P_\perp^{\mathbf W} \mathbf b.$
The minimizer $\balpha^*$ is unique if $\mathbf A_{N_b,N_c}$ has full column rank.

\begin{theorem}
\label{thm:nbm_error_estimate}
Assume that the continuous residual operator $\mathcal{R}(\cdot)$ is locally stable
and locally Lipschitz continuous in a neighborhood of the exact solution $\u$. Assume also the neural basis representation is uniformly stable in $V$, namely,
$\left\|
\sum_{i=1}^{N_b}\alpha_i\bPhi_i
\right\|_V
\le
C_{\mathrm{basis}} \|\balpha\|_2,$ for all $\balpha \in \mathbb{R}^{N_b}$, with $C_{\mathrm{basis}}>0$ independent of $N_b$.
Assume further that $\mathbf A_{N_b,N_c}$ has full column rank
and that the hypotheses of Lemmas~S1-S3 hold.
Let $q>0$ denote the order of accuracy of the VCV cell
and face quadrature rules.
Then, for any $0<\delta<1$ and sufficiently large $N_b$ and $N_c$, with
probability at least $1-\delta$, 
\[
\left\|
\u-\widetilde{\u}_{N_b,N_c}
\right\|_{V}
\leq
C_1
\exp\!\left[
-\alpha
\left(
\frac{N_b}{\log(N_b/\delta)}
\right)^{\beta}
\right]
+
C_2N_c^{-q/d}
+
C_3
\frac{\varepsilon_{\mathrm{LS}}}
{\left[\sigma_{\min}\!\left(\mathbf A_{N_b,N_c}\right)\right]^2},
\]
where \(C_1,C_2,C_3>0\), \(\alpha>0\), and \(0<\beta\leq1\) are independent of \(N_b\) and \(N_c\).
\end{theorem}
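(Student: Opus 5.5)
The plan is to split the total error with the triangle inequality into three pieces:
\[
\|\u-\widetilde{\u}_{N_b,N_c}\|_V\le \|\u-\u_{N_b,N_c}^{*}\|_V+\|\u_{N_b,N_c}^{*}-\widetilde{\u}_{N_b,N_c}\|_V .
\]
The first piece will in turn be reduced to a best-approximation error plus a discretization (quadrature) error. Each of the three contributions in the statement comes from one of Lemmas~S1--S3, which I take to supply, respectively, (i) a probabilistic approximation rate for the random neural basis space, namely that with probability at least \(1-\delta\) there is \(\u_{N_b}\in V_{N_b}\) with \(\|\u-\u_{N_b}\|_V\le C\exp[-\alpha(N_b/\log(N_b/\delta))^\beta]\); (ii) a consistency bound showing that the discrete weighted, projected residual \(\|\mathbf A_{N_b,N_c}\balpha-\mathbf b_{N_c}\|_2\) approximates the continuous \(\|\mathcal R(\cdot)\|\) up to \(O(N_c^{-q/d})\) uniformly on bounded sets of \(V_{N_b}\), using the order-\(q\) VCV quadrature and the fact that \(\mathbf P_\perp^{\mathbf W}\) is a discrete realization of \(\mathbb P_\perp\); and (iii) stability of the discrete projection.

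The algebraic term is the easiest, so I will do it first. Since \(\balpha^*\) satisfies the normal equations \(\mathbf A^T(\mathbf A\balpha^*-\mathbf b)=0\), we get \(\mathbf A^T\mathbf A(\widetilde\balpha-\balpha^*)=\mathbf A^T(\mathbf A\widetilde\balpha-\mathbf b)\). Full column rank then gives
\[
\|\widetilde\balpha-\balpha^*\|_2\le \varepsilon_{\rm LS}/\sigma_{\min}(\mathbf A_{N_b,N_c})^2 .
\]
Applying the uniform basis stability \(\|\sum_i\alpha_i\bPhi_i\|_V\le C_{\rm basis}\|\balpha\|_2\) turns this into the third term, with \(C_3=C_{\rm basis}\).

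For \(\|\u-\u^*_{N_b,N_c}\|_V\), I will use local stability of \(\mathcal R\) at \(\u\), i.e. \(\|\bv-\u\|_V\le C_s\|\mathcal R(\bv)\|\) for \(\bv\) near \(\u\), together with \(\mathcal R(\u)=0\). This reduces the task to bounding \(\|\mathcal R(\u^*_{N_b,N_c})\|\). By Lemma~S2 the continuous residual is bounded by the discrete residual plus \(C N_c^{-q/d}\). Minimality of \(\balpha^*\) bounds the discrete residual at \(\u^*\) by the discrete residual at the best approximant \(\u_{N_b}\) of step (i). Applying Lemma~S2 again, that quantity is bounded by \(\|\mathcal R(\u_{N_b})\|+CN_c^{-q/d}\). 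Finally, local Lipschitz continuity and \(\mathcal R(\u)=0\) give \(\|\mathcal R(\u_{N_b})\|\le L\|\u-\u_{N_b}\|_V\), and this is the exponential term on the probability-\(1-\delta\) event from Lemma~S1.

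The main obstacle is that stability and Lipschitz continuity are only local, so both \(\u_{N_b}\) and \(\u^*_{N_b,N_c}\) must lie in the neighborhood where they hold, and the consistency estimate of Lemma~S2 needs a uniform bound on \(\|\balpha^*\|\). For \(\u_{N_b}\), membership in the neighborhood follows from the approximation rate once \(N_b\) is large. For \(\u^*\), I plan to argue by a continuity/fixed-point (or bootstrap) argument: the discrete problem is locally well posed in the ball around \(\u_{N_b}\) (this is where Lemma~S3 and full rank enter), and its minimizer has residual no larger than that of \(\u_{N_b}\). Hence, for \(N_b\) and \(N_c\) sufficiently large, the a priori bound keeps \(\u^*\) inside the ball, and the argument above closes.
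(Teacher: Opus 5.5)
Your proposal is essentially the paper's proof. It has the same triangle-inequality split, the same normal-equations bound $\|\widetilde{\balpha}-\balpha^{*}\|_2\le\varepsilon_{\rm LS}/\sigma_{\min}(\mathbf A_{N_b,N_c})^2$ with $C_3=C_{\rm basis}$, and the same chain: Lipschitz at the approximant, VCV consistency, minimality of $\balpha^{*}$, consistency again, then local stability. Only the lemma labels differ: in the paper, S1 is the deterministic best-$M$-term rate and S2 the random coverage, which together with the choice $M_{N_b}$ give your step (i), while S3 is the VCV consistency estimate.

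One caution concerns your closing bootstrap. There is no discrete well-posedness lemma in the paper for it to rest on. Full column rank alone only gives $\|\balpha^{*}-\balpha_{N_b}\|_2\le 2\|\mathbf A_{N_b,N_c}\balpha_{N_b}-\mathbf b_{N_c}\|_2/\sigma_{\min}(\mathbf A_{N_b,N_c})$, where $\balpha_{N_b}$ are the coefficients of your $\u_{N_b}$, and this need not be small if $\sigma_{\min}$ degrades with $N_b$. The paper does not prove this membership either; it simply asserts that $\u^{*}_{N_b,N_c}$ lies in the local-stability neighborhood for large $N_b,N_c$. So either adopt that reading of the hypothesis, or add a uniform lower bound on $\sigma_{\min}(\mathbf A_{N_b,N_c})$ as an explicit assumption, under which your bootstrap does close.
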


\begin{proof}
The proof including Lemmas S1-S3 are provided in Supplementary Information.
\end{proof}

This convergence estimate separates three NBM error components and suggests corresponding refinement: 1) 
Neural approximation error decreases at a spectral-like high-probability rate with increasing \(N_b\), with a logarithmic correction from the random basis coverage; 
2) VCV discretization error decreases algebraically with increasing \(N_c\). 
3) Least-squares error is controlled by solver tolerance and condition number. 

\vspace{3pt}
\textbf{Residual--error certificate.}
Under appropriate stability conditions, the NBM residual estimates velocity error a posteriori. 
The following result establishes a residual--error certificate for the velocity-only VCV formulation, with its continuous counterpart provided in Supplementary Section S4. 
Consider the residual associated with one linearized implicit time step,
\(
R_{\perp}(\boldsymbol{\alpha})
=
\mathbf{L}_{\perp}\boldsymbol{\alpha}
-
\mathbf{g}_{\perp},
\)
where \(\mathbf{L}_{\perp}\) is the linear system matrix and \(\mathbf{g}_{\perp}\) is the right-hand side containing all known contributions.
Let \(\boldsymbol{\alpha}^{\ast}\) denote the coefficient vector of the exact projected discrete solution, such that \(R_{\perp}(\boldsymbol{\alpha}^{\ast})=\mathbf{0}\), and let \(\widetilde{\boldsymbol{\alpha}}\) denote an approximate coefficient vector. 
Defining \(\mathbf{u}^{\ast}=\mathbf{u}(\boldsymbol{\alpha}^{\ast})\), \(\widetilde{\mathbf{u}}=\mathbf{u}(\widetilde{\boldsymbol{\alpha}})\), and \(\delta\boldsymbol{\alpha}=\widetilde{\boldsymbol{\alpha}}-\boldsymbol{\alpha}^{\ast}\), the residual satisfies the discrete error equation
\(
R_{\perp}(\widetilde{\boldsymbol{\alpha}})
=
\mathbf{L}_{\perp}\delta\boldsymbol{\alpha}.
\)
We measure the velocity error using the discrete energy norm
\(
\left\|
\widetilde{\mathbf{u}}
-
\mathbf{u}^{\ast}
\right\|_{\mathcal{E}}^{2}
=
\delta\boldsymbol{\alpha}^{T}
\mathbf{E}
\delta\boldsymbol{\alpha}
\) with
\(
\mathbf{E}
=
\frac{1}{\Delta t}\mathbf{M}
+
\nu\mathbf{K}.
\)
\(\mathbf{M}\) and \(\mathbf{K}\) denote the assembled velocity and velocity-gradient contributions. 
For the residual, we measure it using the weighted norm 
\(
\left\|R_{\perp}\right\|_{\mathbf{W}}^{2}
=
R_{\perp}^{T}
\mathbf{W}
R_{\perp},
\)
where \(\mathbf{W}\) contains the VCV-based weights. 
Defining the corresponding coefficient-space residual Gram matrix \(
\mathbf{G}_{\rm R}
=
\mathbf{L}_{\perp}^{T}
\mathbf{W}
\mathbf{L}_{\perp},
\)
we can write 
\(
\left\|
R_{\perp}(\widetilde{\boldsymbol{\alpha}})
\right\|_{\mathbf{W}}^{2}
=
\delta\boldsymbol{\alpha}^{T}
\mathbf{G}_{\rm R}
\delta\boldsymbol{\alpha}.
\)
Let
\(
\mathbf{B}_{\perp}
=
\mathbf{W}^{1/2}
\mathbf{L}_{\perp}
\mathbf{E}^{-1/2}.
\)
For the same linearized NBM system in
\eqref{eq:velocity_only_projection},
\(\mathbf L_\perp
=\mathbf P_\perp^{\mathbf W}\mathbf A_{\mathbf u}\).
Consequently,
\(\mathbf B_\perp
=\mathbf A_{N_b,N_c}\mathbf E^{-1/2}\),
so the full-column-rank conditions on
\(\mathbf B_\perp\) and
\(\mathbf A_{N_b,N_c}\) are equivalent
when \(\mathbf E\) is positive definite.

\begin{theorem}
\label{thm:projected_ns_residual_error_certificate}
Assume that \(\mathbf{E}\) is positive definite and that \(\mathbf{B}_{\perp}\) has full column rank. Then, for any NBM approximation \(\widetilde{\mathbf{u}}=\mathbf{u}(\widetilde{\boldsymbol{\alpha}})\),
\(
\sigma_{\min}(\mathbf{B}_{\perp})
\left\|
\widetilde{\mathbf{u}}
-
\mathbf{u}^{\ast}
\right\|_{\mathcal{E}}
\leq
\left\|
R_{\perp}(\widetilde{\boldsymbol{\alpha}})
\right\|_{\mathbf{W}}
\leq
\sigma_{\max}(\mathbf{B}_{\perp})
\left\|
\widetilde{\mathbf{u}}
-
\mathbf{u}^{\ast}
\right\|_{\mathcal{E}}.
\)
Equivalently,
\(
\left\|
R_{\perp}(\widetilde{\boldsymbol{\alpha}})
\right\|_{\mathbf{W}}
\simeq
\left\|
\widetilde{\mathbf{u}}
-
\mathbf{u}^{\ast}
\right\|_{\mathcal{E}}.
\)
\end{theorem}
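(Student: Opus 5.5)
The plan is to reduce both inequalities to a Rayleigh-quotient bound for the matrix $\mathbf{B}_{\perp}$, after a change of variables that turns the energy norm into the Euclidean norm.

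First, I will use the discrete error equation $R_{\perp}(\widetilde{\boldsymbol{\alpha}})=\mathbf{L}_{\perp}\delta\boldsymbol{\alpha}$ stated before the theorem. It follows from linearity, since $R_{\perp}(\widetilde{\boldsymbol{\alpha}})-R_{\perp}(\boldsymbol{\alpha}^{\ast})=\mathbf{L}_{\perp}(\widetilde{\boldsymbol{\alpha}}-\boldsymbol{\alpha}^{\ast})$ and $R_{\perp}(\boldsymbol{\alpha}^{\ast})=\mathbf{0}$.

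Because $\mathbf{E}$ is symmetric positive definite, it has a unique symmetric positive definite square root $\mathbf{E}^{1/2}$ with inverse $\mathbf{E}^{-1/2}$. Set $\mathbf{z}=\mathbf{E}^{1/2}\delta\boldsymbol{\alpha}$. Then
\[
\|\widetilde{\mathbf{u}}-\mathbf{u}^{\ast}\|_{\mathcal{E}}^{2}=\delta\boldsymbol{\alpha}^{T}\mathbf{E}\,\delta\boldsymbol{\alpha}=\|\mathbf{z}\|_2^{2}.
\]
Similarly, using the symmetric square root $\mathbf{W}^{1/2}$ of $\mathbf{W}\succ 0$,
\[
\|R_{\perp}(\widetilde{\boldsymbol{\alpha}})\|_{\mathbf{W}}^{2}
=\|\mathbf{W}^{1/2}\mathbf{L}_{\perp}\delta\boldsymbol{\alpha}\|_2^{2}
=\|\mathbf{W}^{1/2}\mathbf{L}_{\perp}\mathbf{E}^{-1/2}\mathbf{z}\|_2^{2}
=\|\mathbf{B}_{\perp}\mathbf{z}\|_2^{2}.
\]
So the claim reduces to the two-sided bound $\sigma_{\min}(\mathbf{B}_{\perp})\|\mathbf{z}\|_2\le\|\mathbf{B}_{\perp}\mathbf{z}\|_2\le\sigma_{\max}(\mathbf{B}_{\perp})\|\mathbf{z}\|_2$ for every $\mathbf{z}\in\mathbb{R}^{N_b}$.

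This bound is the standard singular-value (Courant--Fischer) characterization. We have $\|\mathbf{B}_{\perp}\mathbf{z}\|_2^{2}=\mathbf{z}^{T}\mathbf{B}_{\perp}^{T}\mathbf{B}_{\perp}\mathbf{z}$, and the eigenvalues of $\mathbf{B}_{\perp}^{T}\mathbf{B}_{\perp}$ are the squared singular values of $\mathbf{B}_{\perp}$, so the Rayleigh quotient lies between $\sigma_{\min}^{2}$ and $\sigma_{\max}^{2}$. Here $\sigma_{\min}$ means the $N_b$-th singular value, i.e.\ the smallest over the column dimension. Full column rank is what makes $\sigma_{\min}(\mathbf{B}_{\perp})>0$, so the lower bound is a genuine certificate and gives the norm equivalence $\simeq$. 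Taking square roots and substituting back $\mathbf{z}=\mathbf{E}^{1/2}\delta\boldsymbol{\alpha}$ gives the statement.

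There is no real obstacle; the only points needing care are bookkeeping. First, the symmetric square roots must be used so that $\delta\boldsymbol{\alpha}^T\mathbf{E}\,\delta\boldsymbol{\alpha}=\|\mathbf{E}^{1/2}\delta\boldsymbol{\alpha}\|_2^2$ holds. Second, $\sigma_{\min}$ must be understood as the smallest of the $N_b$ singular values, and it is nonzero exactly under the full column rank assumption. Finally, the equivalence remark that $\mathbf{B}_{\perp}=\mathbf{A}_{N_b,N_c}\mathbf{E}^{-1/2}$ has full column rank iff $\mathbf{A}_{N_b,N_c}$ does follows because $\mathbf{E}^{-1/2}$ is invertible.
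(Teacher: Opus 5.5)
Your proof is correct and follows essentially the same route as the paper. Both start from the error equation $R_\perp(\widetilde{\boldsymbol{\alpha}})=\mathbf{L}_\perp\delta\boldsymbol{\alpha}$ and substitute $\mathbf{z}=\mathbf{E}^{1/2}\delta\boldsymbol{\alpha}$, which gives $\|\widetilde{\mathbf u}-\mathbf u^\ast\|_{\mathcal E}=\|\mathbf z\|_2$ and $\|R_\perp\|_{\mathbf W}=\|\mathbf B_\perp\mathbf z\|_2$, and then both apply the standard singular-value bounds. Your remarks on using symmetric square roots and reading $\sigma_{\min}$ as the $N_b$-th singular value are sound bookkeeping that the paper leaves implicit.
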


\begin{proof}
The proof is provided in Supplementary Information.
\end{proof}

The singular values \(\sigma_{\min}(\mathbf{B}_{\perp})\) and
\(\sigma_{\max}(\mathbf{B}_{\perp})\) are discrete stability constants. 
They are independent of the approximation but may depend on the discretization, including neural basis space, VCV weights, time-step, viscosity, linearized convective field, and pressure-gradient projection. 
The theorem establishes two-sided equivalence between projected residual and discrete velocity-error, providing a certificate for characterizing parametric NBO-RC training.

\vspace{3pt}
\textbf{Certificate-induced properties for NBM operator learning}
The residual--error certificate can apply instancewise to operator learning over the parameter--time domain. 
For each \(\boldsymbol{\zeta}_i\), we identify the NBO prediction with \(\widetilde{\boldsymbol{\alpha}}(\boldsymbol{\zeta}_i)=\boldsymbol{\alpha}_{\boldsymbol{\eta}}(\boldsymbol{\zeta}_i)\) and denote the corresponding reconstructed velocity by \(\widetilde{\mathbf{u}}(\boldsymbol{\zeta}_i)\). 
If the discrete stability constants in Theorem~\ref{thm:projected_ns_residual_error_certificate} are uniformly bounded over \(\mathcal{D}_{\rm tr}\), then
\(
\mathcal{L}_{\rm RC}(\boldsymbol{\eta})
\simeq
\mathbb{E}_{\boldsymbol{\zeta}\in\mathcal{D}_{\rm tr}}
\left[
\left\|
\widetilde{\mathbf{u}}(\boldsymbol{\zeta})
-
\mathbf{u}^{\ast}(\boldsymbol{\zeta})
\right\|_{\mathcal{E}}^{2}
\right].
\)
Thus, minimizing the NBO-RC training loss \eqref{eq:nbm_ol_loss} is equivalent, up to the discrete stability constants, to minimizing the average discrete velocity-energy error over the training set.
The certificate also specifies coefficient-error weighting during NBO-RC training. 
Rewrite \eqref{eq:nbm_ol_loss} as
\(
\mathcal{L}_{\rm RC}(\boldsymbol{\eta})
=
\mathbb{E}_{\boldsymbol{\zeta}\in\mathcal{D}_{\rm tr}}
\left[
\delta\boldsymbol{\alpha}(\boldsymbol{\zeta})^{T}
\mathbf{G}_{\rm R}(\boldsymbol{\zeta})
\delta\boldsymbol{\alpha}(\boldsymbol{\zeta})
\right],
\)
where
\(
\mathbf{G}_{\rm R}(\boldsymbol{\zeta})
=
\mathbf{L}_{\perp}(\boldsymbol{\zeta})^{T}
\mathbf{W}(\boldsymbol{\zeta})
\mathbf{L}_{\perp}(\boldsymbol{\zeta})
\)
is the residual Gram matrix evaluated at each parameter--time training input \(\boldsymbol{\zeta}\), defining the certificate-induced metric in velocity-coefficient space. 
By contrast, NBO-DD minimizes
\(
\mathcal{L}_{\rm DD}(\boldsymbol{\eta})
=
\mathbb{E}_{\boldsymbol{\zeta}\in\mathcal{D}_{\rm tr}}
\left[
\delta\boldsymbol{\alpha}(\boldsymbol{\zeta})^{T}
\mathbf{I}
\delta\boldsymbol{\alpha}(\boldsymbol{\zeta})
\right],
\)
and hence measures coefficient errors using the mean-squared error (MSE) metric rather than NBO-RC's residual-induced coefficient-space metric. 

\vspace{-0pt}
\subsection*{NBM delivers residual-certified accuracy as a single-instance PDE solver}
\vspace{-10pt}
\textbf{Kovasznay flow and Taylor--Green vortex (Fig.~\ref{fig:kovasznay-taylor-green}).}
We first evaluate NBM on steady Kovasznay flow and transient Taylor--Green vortex, for which analytical solutions are available. 
NBM uses \(N_b=500\) dual-layer neural bases and \(N_c=50\times50\) structured uniform VCV collocation, OpenFOAM uses a \(50\times50\) structured finite-volume mesh, and PINN uses the same number of collocation points sampled using scrambled Sobol sequences. 
For both benchmarks, NBM achieves substantially lower errors than OpenFOAM and PINN in the velocity and pressure fields, as well as in the statistical quantities \(U_{\rm rms}\) and \(E(k)\). 
The refinement results further agree with the error structure established in Theorem~\ref{thm:nbm_error_estimate}: increasing \(N_b\) produces spectral-like decay of the solution error by reducing the neural approximation error, whereas increasing \(N_c\) produces algebraic decay by reducing the VCV residual-discretization error. 
Moreover, the residual and velocity error decrease concurrently under both refinements, echoing the residual--error equivalence in Theorem~\ref{thm:projected_ns_residual_error_certificate}.

\vspace{3pt}
\textbf{Lid-driven cavity and shear-layer flow (Fig.~\ref{fig:cavity-shear-layer}).}
We then evaluate NBM on steady lid-driven cavity flow and transient shear-layer flow, extending the assessment beyond analytically tractable benchmarks to boundary-driven recirculation and instability-dominated flow dynamics.
NBM uses \(N_b=1000\) double-layer neural bases and \(N_c=100\times100\) structured uniform VCV collocation, while OpenFOAM uses a \(100\times100\) structured finite-volume mesh and PINN uses the same number of scrambled Sobol collocation points. 
Because closed-form solutions are unavailable, the corresponding OpenFOAM solutions are used as numerical references. 
For the lid-driven cavity at ($Re=1000$), standard PINN trained solely by minimizing the PDE loss fails to converge. 
Sparse velocity anchors extracted from the OpenFOAM solution are therefore added to obtain a usable PINN solution. 
Even with these velocity anchors, PINN's discrepancies from OpenFOAM remain substantially larger than NBM's in both resolved flow fields and derived statistical quantities. 
Refinement again produces concurrent decreases in residual and solution error, as in the analytical benchmarks. 
Together, these benchmarks connect the high-probability triple-resolution error estimate and residual--error certificate to measurable quantities in computed solutions, demonstrating that NBM functions as an accurate and residual-certified single-instance neural PDE solver.

\clearpage
\begin{figure}[h!]
  \centering
  \includegraphics[width=\linewidth]{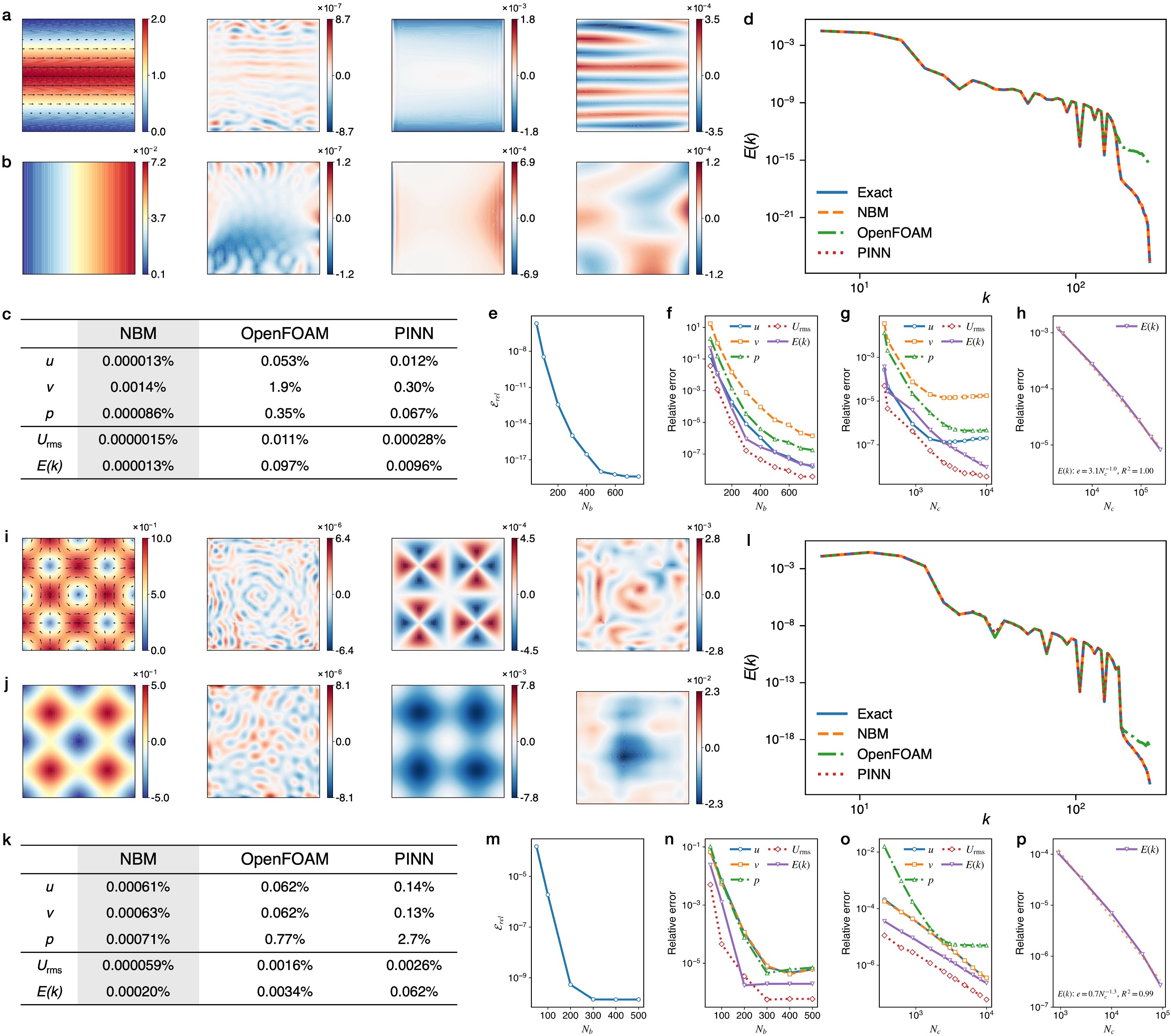}
  \caption{%
    \bfseries Kovasznay flow and Taylor--Green vortex benchmarks with analytical reference solutions.\,
    \bfseries a--h,
    \normalfont Kovasznay flow.
    \bfseries a,b,
    \normalfont Speed field with velocity quivers and pressure field, followed by absolute error fields w.r.t.\ the analytical solution for NBM, OpenFOAM, and PINN.
    \bfseries c,
    \normalfont Relative errors.
    \bfseries d,
    \normalfont Energy spectrum.
    \bfseries e,
    \normalfont Residual vs. \(N_b\) (\(N_c=50\times50\)).
    \bfseries f,
    \normalfont Relative errors vs. \(N_b\) (\(N_c=50\times50\)).
    \bfseries g,
    \normalfont Relative errors vs. \(N_c\) (\(N_b=500\)).
    \bfseries h,
    \normalfont Relative error of \(E(k)\) vs.\ \(N_c\) (\(N_b=500\)).
    \bfseries i--p,
    \normalfont Taylor--Green vortex.
    \bfseries i,j,
    \normalfont Speed field with velocity quivers and pressure field, followed by absolute error fields w.r.t.\ the analytical solution for NBM, OpenFOAM, and PINN.
    \bfseries k,
    \normalfont Relative errors.
    \bfseries l,
    \normalfont Energy spectrum.
    \bfseries m,
    \normalfont Residual \(\varepsilon_{\rm rel}\) vs.\ \(N_b\), (\(N_c=50\times50\)).
    \bfseries n,
    \normalfont Relative errors vs.\ \(N_b\) (\(N_c=50\times50\)).
    \bfseries o,
    \normalfont Relative errors vs.\ \(N_c\) (\(N_b=500\)).
    \bfseries p,
    \normalfont Relative error of \(E(k)\) vs.\ \(N_c\)) (\(N_b=500\)).
    \bfseries Note:\normalfont\ Kovasznay flow: $Re=500$. Taylor--Green vortex: $Re=1000$; results are at \(t=0.01\). See further details in Methods and Supplementary Information sections.
  }
  \label{fig:kovasznay-taylor-green}
\end{figure}
\clearpage
\begin{figure}[h!]
  \centering
  \includegraphics[width=\linewidth]{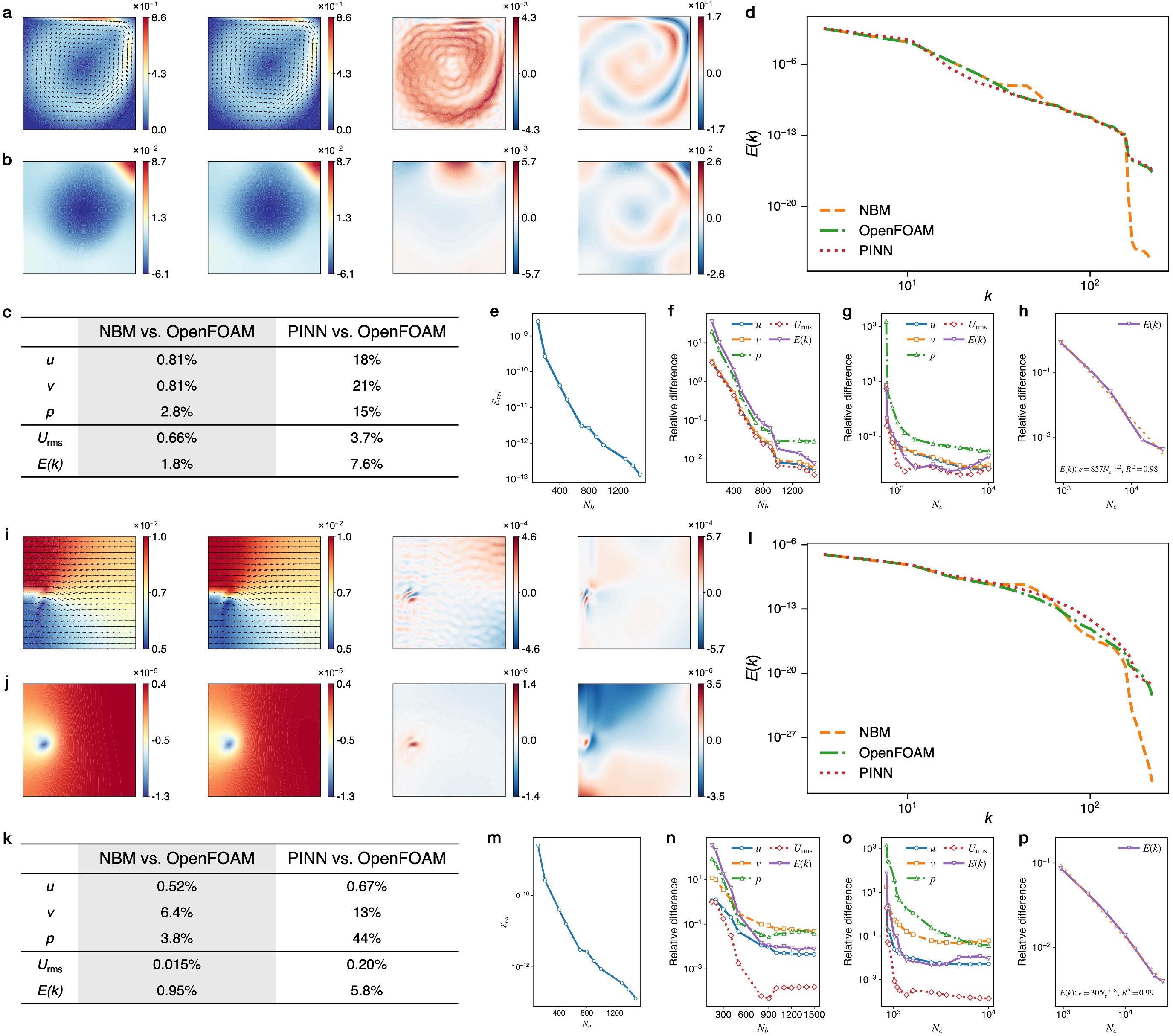}
  \caption{%
    \bfseries Lid-driven cavity and shear-layer flow benchmarks.\,
    \bfseries a--h,
    \normalfont Lid-driven cavity.
    \bfseries a,b,
    \normalfont Speed field with velocity quivers and pressure field, followed by OpenFOAM reference, NBM solution, absolute difference between NBM and OpenFOAM, and absolute difference between PINN and OpenFOAM.
    \bfseries c,
    \normalfont Relative differences using OpenFOAM as reference.
    \bfseries d,
    \normalfont Energy spectrum.
    \bfseries e,
    \normalfont Residual vs. \(N_b\) (\(N_c=100\times100\)).
    \bfseries f,
    \normalfont Relative differences vs. \(N_b\) (\(N_c=100\times100\)).
    \bfseries g,
    \normalfont Relative differences vs. \(N_c\) (\(N_b=1000\)).
    \bfseries h,
    \normalfont Relative difference of \(E(k)\) vs.\ \(N_c\) (\(N_b=800\)).
    \bfseries i--p,
    \normalfont Shear-layer flow.
    \bfseries i,j,
    \normalfont Speed field with velocity quivers and pressure field, followed by OpenFOAM reference, NBM solution, absolute difference between NBM and OpenFOAM, and absolute difference between PINN and OpenFOAM.
    \bfseries k,
    \normalfont Relative differences using OpenFOAM as reference.
    \bfseries l,
    \normalfont Energy spectrum.
    \bfseries m,
    \normalfont Residual vs.\ \(N_b\) (\(N_c=100\times100\)).
    \bfseries n,
    \normalfont Relative differences vs. \(N_b\) (\(N_c=100\times100\)).
    \bfseries o,
    \normalfont Relative differences vs. \(N_c\) (\(N_b=1000\)).
    \bfseries p,
    \normalfont Relative difference of \(E(k)\) vs. \(N_c\) (\(N_b=800\)).
    \bfseries Note:\normalfont\ Lid-driven cavity: $Re=1000$. Shear-layer flow: $Re \sim 10^{4}$; results are at \(t=30\). See further details in Methods and Supplementary Information sections.
  }
  \label{fig:cavity-shear-layer}
\end{figure}

\vspace{-0pt}
\subsection*{NBM enables residual-certified operator learning for parametric dynamics}
\vspace{-10pt}
\textbf{NBO-RC achieves substantially higher accuracy.}
We now assess parametric operator learning on Kolmogorov flow parameterized by forcing amplitude \(\chi\) and Taylor--Couette swirling flow by inner-cylinder rotation rate \(\Omega_i\) (Figs.~\ref{fig:kolmogorov} and \ref{fig:taylor_couette}); the latter is more practically relevant, combining rotating boundaries, cylindrical geometry, and a parameter range spanning the onset of Taylor-vortex instability.
We compare NBO-RC against its data-driven counterpart NBO-DD and DeepONet (DeepO) baselines trained with continuous-time physics-informed (DeepO-PI-CT), discrete-time physics-informed (DeepO-PI-DT), and data-driven (DeepO-DD) objectives.
NBO-RC achieves sub-percent velocity errors substantially below the baselines (Fig.~\ref{fig:kolmogorov}a,b; Fig.~\ref{fig:taylor_couette}a) and closely tracks NBM and OpenFOAM energy spectra, whereas the baselines show markedly larger spectral deviations (Fig.~\ref{fig:kolmogorov}c; Fig.~\ref{fig:taylor_couette}b).
DeepO-PI-CT fails to produce a meaningful Kolmogorov-flow solution and is substantially more expensive to train, so it is omitted from the Taylor--Couette study.

\begin{figure}[b!]
  \centering
  \includegraphics[width=\linewidth]{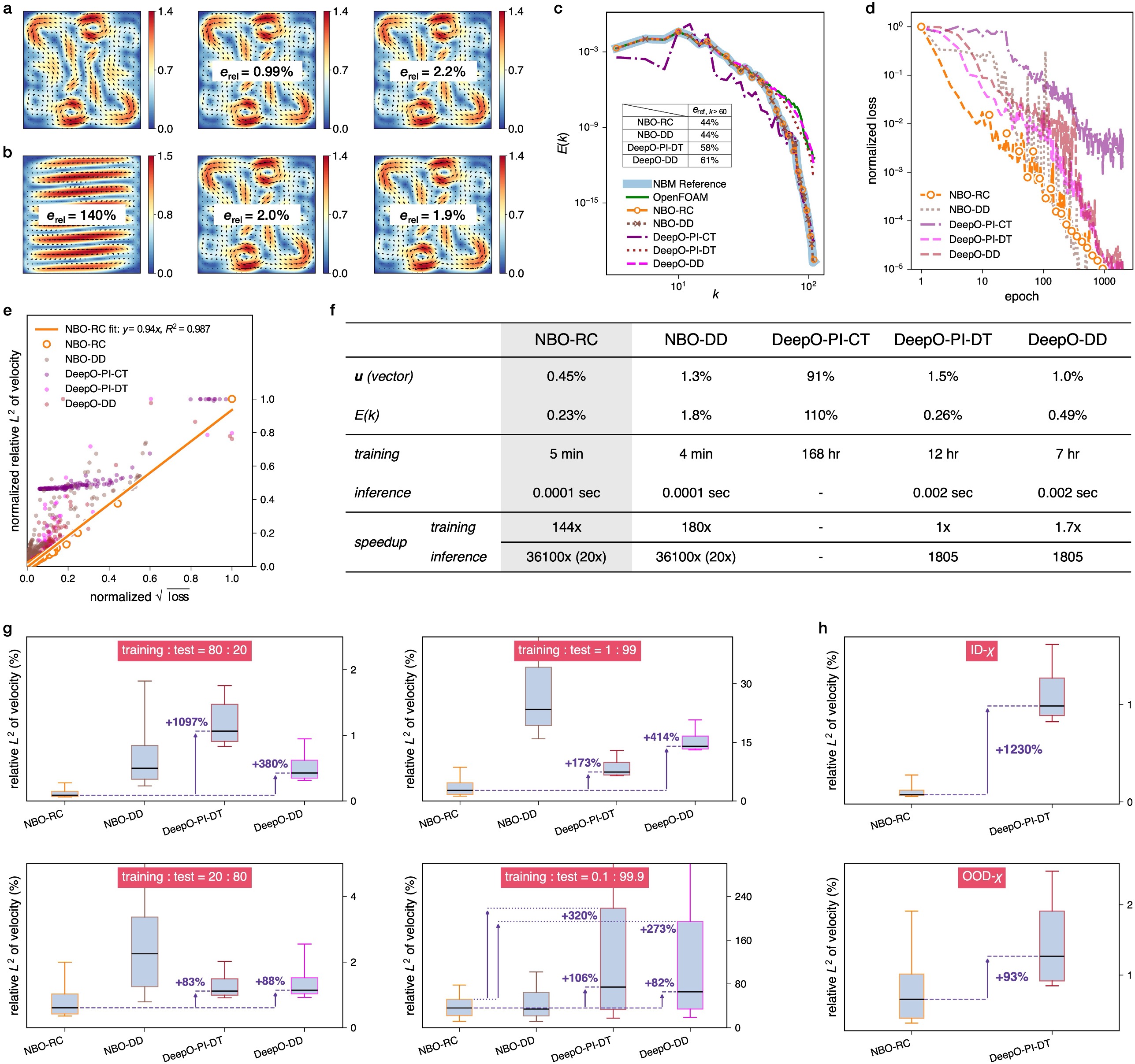}
  \caption{%
    \bfseries Residual-certified neural basis operator learning for parametric Kolmogorov flow.\,
    \bfseries a,
    \normalfont Representative speed fields with velocity quivers at \(t=3\) for NBM reference, NBO-RC, and NBO-DD.
    \bfseries b,
    \normalfont Corresponding results of DeepO-PI-CT, DeepO-PI-DT, and DeepO-DD.
    In \textbf{a,b}, \(e_{\rm rel}:\) relative \(L^2\) velocity error.
    \bfseries c,
    \normalfont Kinetic-energy spectra \(E(k)\) of NBM reference, OpenFOAM, and learned operators.
    The inset shows relative \(L^2\) errors of the high-pass-filtered velocity field for \(k>60\), using OpenFOAM as reference.
    \bfseries d,
    \normalfont Normalized training-loss histories.
    \bfseries e,
    \normalfont Normalized relative \(L^2\) velocity error vs.\ normalized square root of the training loss with linear fit for NBO-RC.
    \bfseries f,
    \normalfont Relative errors of the velocity field and energy spectrum, with training and inference timing. Training speedups are relative to DeepO-PI-DT, while inference speedups are relative to OpenFOAM; parenthetical values compare NBO with DeepO.
    \bfseries g,
    \normalfont Test relative \(L^2\) velocity-error distributions under random train--test splits of \(\zeta_{i,j}=(\chi_i,t_j)\), from \(80{:}20\) to \(0.1{:}99.9\).
    \bfseries h,
    \normalfont Relative \(L^2\) velocity-error distributions for in-distribution (ID) and out-of-distribution (OOD) magnitude values.
    \bfseries Note:\normalfont\ 
    See Methods and Supplementary Information Sections for further details.
  }
  \label{fig:kolmogorov}
\end{figure}

\begin{figure}[b!]
  \centering
  \includegraphics[width=\linewidth]{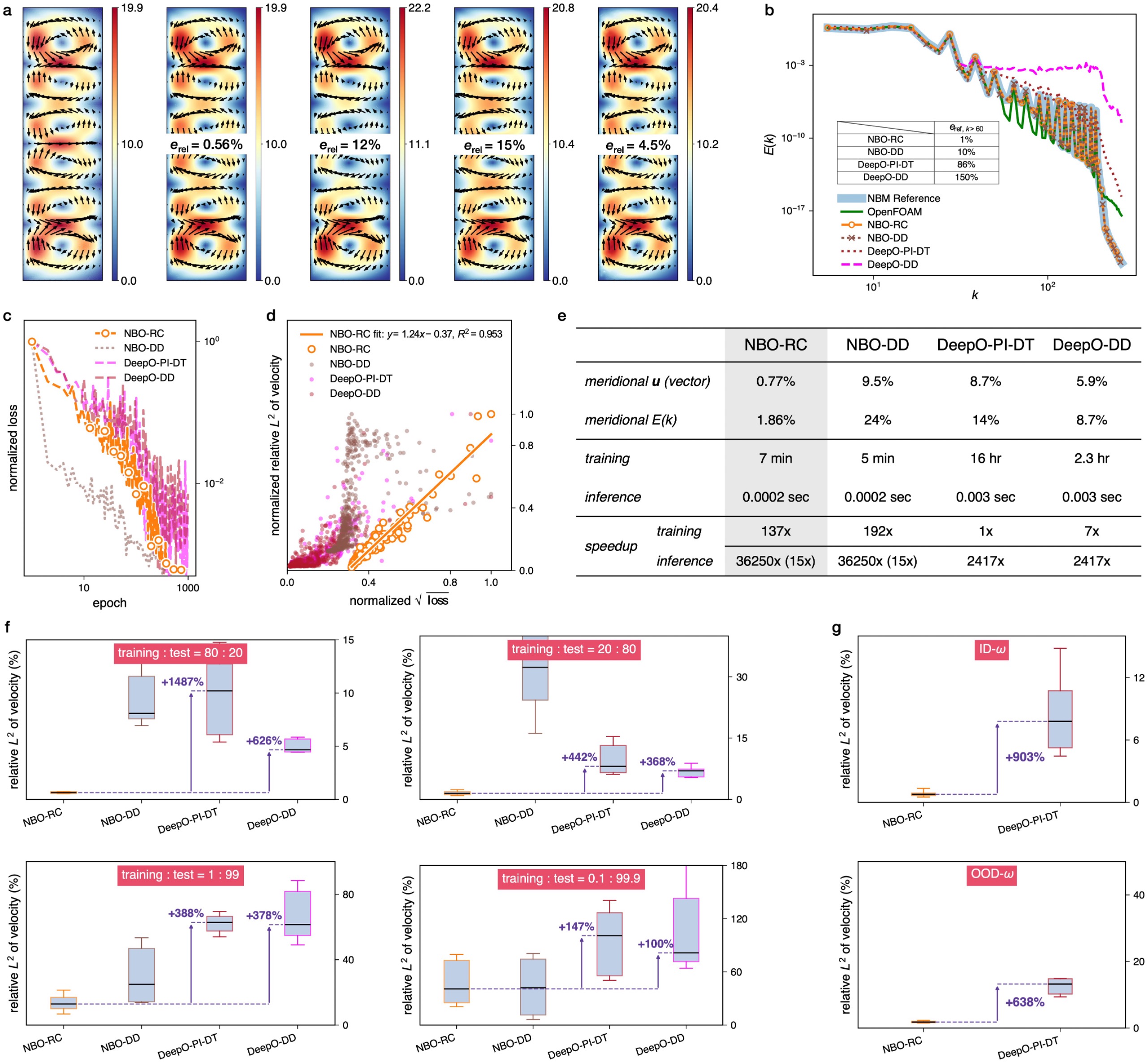}
  \caption{%
    \bfseries Residual-certified neural basis operator learning for parametric Taylor--Couette flow.\,
    \bfseries a,
    \normalfont Representative meridional-speed fields with velocity quivers at \(t=0.2\). Left to right: NBM reference, NBO-RC, NBO-DD, DeepO-PI-DT, and DeepO-DD; \(e_{\rm rel}:\) relative \(L^2\) velocity error.
    \bfseries b,
    \normalfont Meridional kinetic-energy spectra \(E(k)\) of NBM reference, OpenFOAM, and learned operators. The inset shows relative \(L^2\) errors of the high-pass-filtered meridional velocity for \(k>60\), using OpenFOAM as reference.
    \bfseries c,
    \normalfont Normalized training-loss histories.
    \bfseries d,
    \normalfont Normalized relative \(L^2\) velocity error vs.\ normalized square root of training loss with linear fit for NBO-RC.
    \bfseries e,
    \normalfont Relative errors of meridional velocity and its energy spectrum, with training and inference timing and corresponding speedups. Training speedups are relative to DeepO-PI-DT, while inference speedups are relative to OpenFOAM; parenthetical values compare NBO with DeepO.
    \bfseries f,
    \normalfont Test relative \(L^2\) meridional velocity-error distributions under random train--test splits of \(\zeta_{i,j}=(\Omega_i,t_j)\), from \(80{:}20\) to \(0.1{:}99.9\).
    \bfseries g,
    \normalfont Relative \(L^2\) meridional velocity-error distributions for in-distribution (ID) and out-of-distribution (OOD) rotation rates.
    \bfseries Note:\normalfont\ See Methods and Supplementary Information Section for further details.
  }
  \label{fig:taylor_couette}
\end{figure}

\vspace{3pt}
\textbf{NBO-RC combines error-aligned learning with superior scalability.}
NBO-DD replaces only NBO-RC's residual-based loss \eqref{eq:nbm_ol_loss} with coefficient-space MSE, isolating residual certification as the source of NBO-RC's advantage.
The normalized NBO-RC loss is nearly linear in normalized relative velocity error, whereas other losses show no comparable monotone relation(Fig.~\ref{fig:kolmogorov}e; Fig.~\ref{fig:taylor_couette}d).
Thus, \eqref{eq:nbm_ol_loss} provides an error-aligned training signal, a property sought in operator-learning but rarely achieved by existing approaches. 
Fixed neural bases also substantially reduce computational cost: spatial basis evaluations and PDE-dependent residual quantities can be decoupled and precomputed. NBO-RC thus trains only the parameter–time-to-coefficient network with a low-dimensional quadratic coefficient-space residual, whereas DeepO-PI-DT reevaluates its spatial reconstruction network and PDE-required spatial derivatives at every training update.
On a H100 GPU, NBO-RC trains approximately \(144\times\) faster for Kolmogorov flow and \(137\times\) faster for Taylor--Couette flow, while NBO inference is about \(20\times\) faster than DeepO models (Fig.~\ref{fig:kolmogorov}f; Fig.~\ref{fig:taylor_couette}e). 
These easured speedups are below the \(>10^{3}\times\) work ratio implied by an operation-count analysis (see Methods) because the much larger DeepO-PI-DT workload exploits H100 parallelism more efficiently, whereas lighter NBO-RC updates provide less work to saturate the GPU.

\vspace{3pt}
\textbf{NBO-RC generalizes robustly under sparse sampling and parameter shift.}
Across random train--test splits from \(80{:}20\) to \(0.1{:}99.9\), NBO-RC consistently yields substantially lower and generally less dispersed test errors than DeepO (Fig.~\ref{fig:kolmogorov}g; Fig.~\ref{fig:taylor_couette}f). 
Under parameter shift, it maintains markedly lower errors at both in-distribution and out-of-distribution \(\chi\) and \(\Omega_i\) values (Fig.~\ref{fig:kolmogorov}h; Fig.~\ref{fig:taylor_couette}g), demonstrating robust generalization under sparse sampling and to unseen parameters.

\vspace{-0pt}
\subsection*{Residual certification shapes velocity-aligned operator-network training}
\vspace{-10pt}
We trace how residual certification shapes operator-network training
by carrying the residual-based metric from neural basis coefficient
space into network parameter space (see Supplementary Section).
Figure~\ref{fig:mechanistic_probe}a compares the sample-averaged residual Gram matrix
\(
\mathbf{G}_{{\rm R},{\rm mean}}
=
N_s^{-1}\sum_{i=1}^{N_s}\mathbf{G}_{{\rm R},i}
\)
with the velocity-energy matrix \(\mathbf{E}\) over 50 random unit
coefficient-space perturbation directions \(\mathbf{q}\).
The nearly perfect linear and rank correlations between
\(\mathbf{q}^{T}\mathbf{G}_{{\rm R},{\rm mean}}\mathbf{q}\)
and
\(\mathbf{q}^{T}\mathbf{E}\mathbf{q}\)
show greater residual sensitivity corresponds to greater velocity-energy error, confirming the residual-certified loss encodes a velocity-aligned coefficient-space metric.
We then trace this metric through the operator-network Jacobian
into its parameter-space training geometry.
Figure~\ref{fig:mechanistic_probe}b measures the full-Hessian deviation from its Gauss--Newton component using
\(
\Delta_{\rm GN}(\mathbf{v})
=
\frac{
\left\|
(\mathbf{H}-\mathbf{H}_{\rm GN})\mathbf{v}
\right\|_2
}{
\left\|
\mathbf{H}_{\rm GN}\mathbf{v}
\right\|_2
},
\)
over 20 random unit operator-network parameter-space directions \(\mathbf{v}\), reporting the \(95\)th percentile at each selected epoch.
The rapid decrease of \(\Delta_{\rm GN}^{\rm p95}\) and final agreement of full and Gauss--Newton directional curvatures show that both NBO-RC and NBO-DD develop local training geometry; this dominance alone cannot explain NBO-RC's advantage. 
Controlled metric swaps under the same Jacobian (Fig.~\ref{fig:mechanistic_probe}c,d) isolate the residual metric: replacing \(\mathbf{G}_{\rm R}\) by MSE for NBO-RC removes the near-perfect velocity-energy-curvature correlation, whereas applying \(\mathbf{G}_{\rm R}\) to NBO-DD restores it.
Thus velocity alignment is determined by the residual-certified metric, not the network architecture, trained parameters, or Jacobian alone. 
Closely matched residual and velocity-energy contours in stiff and weak parameter-space directions (Fig.~\ref{fig:mechanistic_probe}e), together with preserved velocity-energy rank ordering under the residual metric but not MSE (Fig.~\ref{fig:mechanistic_probe}f), further support this conclusion.
Residual certification therefore transfers velocity-relevant NBM-residual structure from coefficient space into operator-network training geometry, directing learning toward enhanced velocity prediction.

\begin{figure}[t!]
  \centering
  \includegraphics[width=\linewidth]{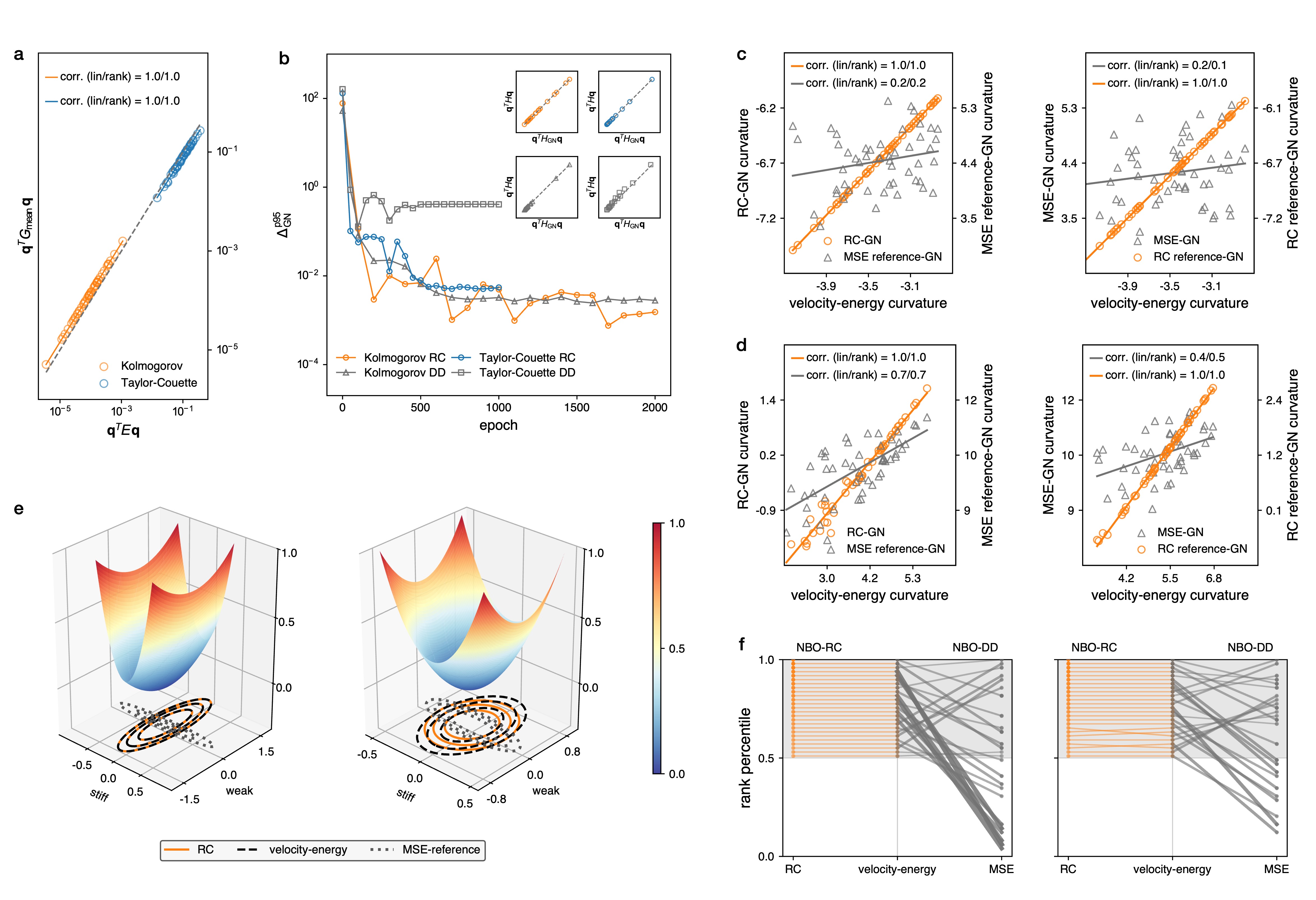}
  \caption{%
    \bfseries Mechanistic probe of residual-certified neural basis operator learning.\,
    \bfseries a,
    \normalfont Directional quadratic forms of the sample-averaged residual Gram \(\mathbf G_{{\rm R},{\rm mean}}\) and the velocity-energy metric \(\mathbf E\), evaluated along 50 random unit directions \(\mathbf q\) in the neural basis coefficient space.
    \bfseries b,
    \normalfont Evolution of the \(95\)th-percentile relative discrepancy
    \(\Delta_{\mathrm{GN}}(\mathbf v)
    =\|(\mathbf H-\mathbf H_{\mathrm{GN}})\mathbf v\|_2/
    \|\mathbf H_{\mathrm{GN}}\mathbf v\|_2\)
    between the full Hessian \(\mathbf H\) and its Gauss--Newton component \(\mathbf H_{\mathrm{GN}}\), evaluated over 20 random unit directions \(\mathbf v\) in the space of trainable network parameters.
    Insets compare the final directional curvatures
    \(\mathbf v^{T}\mathbf H\mathbf v\) and
    \(\mathbf v^{T}\mathbf H_{\mathrm{GN}}\mathbf v\).
    \bfseries c,d,
    \normalfont Velocity-energy curvature vs.\ loss-induced Gauss--Newton (GN) curvature for Kolmogorov flow (\textbf{c}) and Taylor--Couette flow (\textbf{d}).
    In each panel, the left evaluates RC-GN and MSE-reference-GN curvatures using the NBO-RC network Jacobian, whereas the right plot evaluates MSE-GN and RC-reference-GN curvatures using the NBO-DD network Jacobian.
    \bfseries e,
    \normalfont Normalized local quadratic basins in selected stiff and weak network-parameter directions for Kolmogorov flow (left) and Taylor--Couette flow (right), comparing the RC, velocity-energy, and MSE-reference metrics.
    \bfseries f,
    \normalfont Rank-percentile correspondence for directions in the upper half of velocity-energy curvature for Kolmogorov flow (left) and Taylor--Couette flow (right), comparing RC with velocity energy for NBO-RC and velocity energy with MSE for NBO-DD.
    \bfseries Note:\normalfont\ ``corr.\ (lin/rank)'' reports the Pearson linear correlation and Spearman rank correlation, respectively. See Methods and Supplementary Information Sections for the Hessian decomposition, metric construction, and probe procedure.
  }
  \label{fig:mechanistic_probe}
\end{figure}

\section*{Discussion}
\vspace{-15pt}
This work presents NBM as a new framework for solving and learning incompressible Navier--Stokes dynamics. 
Embedding inompressibility in the neural discretization and eliminating pressure from the primary solve yields a velocity-only formulation in neural basis coefficient space. 
We provide a triple-resolution convergence analysis and a posteriori error estimate linking residual to velocity error.
The certified residual also serves as a physically meaningful coefficient-space training metric for operator learning, avoiding repeated spatial computation while achieving substantially improved accuracy, generalization and scalability over DeepONets. 

Several directions can broaden and strengthen NBM.
Localized non-smoothness and nonconvex or complex geometries motivate localized formulations based on domain decomposition, partition-of-unity, or discontinuous-Galerkin-style local--global formulations. 
These could also enable adaptive enrichment by concentrating neural bases and residual resolution where needed.
As these systems grow, scalable large-scale least-squares strategies, including randomization, preconditioning, and high-throughput parallel processing will become increasingly important.
NBM also provides a promising route toward addressing a central scalability bottleneck of existing neural operators: DeepONet, FNO, and Transformer-based architectures remain explicitly coupled to spatial resolution, so computational and memory costs rise rapidly with spatial degrees of freedom or evaluation points, constraining tractable operator-learning scale~\cite{subedi2025operator,li2022transformer,hao2023gnot,cao2021choose}.
By contrast, NBO's coefficient-space construction can support more expressive operator architectures without restoring this dependence.
An important next step is therefore an NBM-powered Transformer-based operator for high-dimensional parameterizations, potentially capturing complex input interactions while avoiding severe complexity growth with spatial resolution and, more broadly, enabling multi-operator learning across wider PDE families.

\section*{Methods}
\vspace{-15pt}
In this section, we first define the six benchmark problems considered in the Results Section, expressed in nondimensional form for clarity. 
We then briefly visit the control-volume formulation and show how classical temporal- and nonlinear convective-term handling schemes integrate into the velocity-only NBM construction for incompressible Newtonian flow.
Finally, we provide concise descriptions of the baseline neural approaches
used for comparison and analyze the computational complexity and scalability
of operator learning.
Further implementation details and mechanistic-analysis procedure underlying Fig.~\ref{fig:mechanistic_probe} are provided in the Supplementary Section.

\vspace{-5pt}
\subsection*{Single-instance benchmark problems}
\vspace{-10pt}
\textbf{Kovasznay flow.} 
The Kovasznay flow is a steady two-dimensional solution of  \eqref{eq:NS} and \eqref{eq:continuity} with \(\partial_t\mathbf{u}=\mathbf{0}\) and \(\mathbf{F}=\mathbf{0}\). We consider the Cartesian domain \(\Omega=[0,1]^2\) with kinematic viscosity \(\nu=0.002\), corresponding to \(Re=500\). The exact velocity and kinematic pressure are
\[ u(x,y)=1-e^{\lambda x}\cos(2\pi y), \qquad v(x,y)=\frac{\lambda}{2\pi}e^{\lambda x}\sin(2\pi y), \qquad p(x,y)=\frac{1}{2}\left(1-e^{2\lambda x}\right), \]
where
\(\lambda = \frac{Re}{2} - \sqrt{\left(\frac{Re}{2}\right)^2 + 4\pi^2}.\)
Dirichlet boundary conditions are imposed on all boundaries from the exact solution.

\textbf{Taylor-Green vortex.}
The Taylor-Green vortex considered here has an unsteady decaying-vortex exact solution of \eqref{eq:NS} and \eqref{eq:continuity} in Cartesian domain \([0,1]^2\) (\(\mathbf F=\mathbf 0\)).
The nondimensional kinematic viscosity is \(\nu=0.001\), giving \(Re=1000\).
The exact velocity \((u,v)\) and kinematic pressure \(p\) are 
\[ u(x,y,t)=-\cos(2\pi x)\sin(2\pi y)e^{-8\pi^2\nu t}, \qquad v(x,y,t)=\sin(2\pi x)\cos(2\pi y)e^{-8\pi^2\nu t}, \] \[ p(x,y,t)=-\frac{1}{4}\left(\cos(4\pi x)+\cos(4\pi y)\right)e^{-16\pi^2\nu t}. \]
NBM and PINN impose Dirichlet boundary conditions on all boundaries from the exact solution.
OpenFOAM uses the built-in periodic boundary setup instead.
All methods use the same initial condition from the exact solution at \(t=0\).

\textbf{Lid-driven cavity.}
The lid-driven cavity is a stationary benchmark without an analytical solution. 
We consider \(\Omega=[-1,1]^2\), with \(\mathbf{F}=\mathbf{0}\), and prescribe a horizontal velocity profile on the top boundary \(y=1\): 
\(
u(x,1)=U_{\mathrm{lid}}\sin^2\!\left(\frac{\pi(x+1)}{2}\right), \, v(x,1)=0. \)
No-slip conditions are imposed on the remaining walls.
Zero pressure is enforced at one arbitrary point for pressure solution uniqueness.
For sufficiently large \(Re\), steady solutions may not exist; therefore, we here take \(U_{\mathrm{lid}}=1\) and \(\nu=0.002\), giving \(Re=1000\).

\textbf{Shear-layer flow.}
The shear-layer flow is an unsteady benchmark problem without an exact solution (\(\mathbf F=\mathbf 0\)).
It is driven by a perturbed inlet shear layer and is associated with Kelvin--Helmholtz instability phenomena.
We consider a Cartesian domain \([-1,1]^2\), with nondimensional kinematic viscosity \(\nu=10^{-6}\), giving a Reynolds number on the order of \(10^4\).
The initial velocity is prescribed as \(\mathbf u(x,y,0)=(0.75,0,0)\).
At the inlet \(x=-1\), the velocity is
\[ u(-1,y,t)=0.75+0.25\tanh\!\left(\frac{y}{0.08}\right)+0.05\sin(2\pi t)\exp\!\left[-\left(\frac{y}{0.08}\right)^2\right], \qquad v(-1,y,t)=0. \]
At the outlet \(x=1\), an open outflow condition is imposed using zero velocity gradient and zero kinematic pressure.
Slip boundary conditions are imposed on the top and bottom boundaries \(y=\pm 1\).

\vspace{-5pt}
\subsection*{Parametric benchmark problems}
\vspace{-10pt}
\textbf{Kolmogorov flow.}
The Kolmogorov flow is driven by a periodic force
\(
\mathbf{F}(y;\chi)
=
\left(
\chi\sin(4\pi y),0
\right),
\)
where \(\chi\) denotes the forcing amplitude. We consider \(\Omega=[-1,1]^2\) with \(\nu=10^{-3}\). The fluid starts from rest,
\(
\mathbf{u}(\mathbf{x},0)
=
\mathbf{0},
\)
and no-slip conditions are imposed on \(\partial\Omega\). The operator input is
\(
\boldsymbol{\zeta}
=
(\chi,t),
\,
\chi\in[0.5,1.5],
\,
t\in[0,3].
\)
The resulting parameter--time domain contains strongly nonlinear vortical dynamics generated by different forcing amplitudes.

\textbf{Taylor-Couette swirling flow.}
Taylor--Couette swirling flow describes the motion between two coaxial cylinders driven by cylinder rotation (\(\mathbf F=\mathbf 0\)).
We use an axisymmetric formulation in cylindrical coordinates \((r,\theta,z)\), with velocity
\(
\mathbf{u}
=
(u_r,u_\theta,u_z).
\)
The meridional computational domain is
\(
(r,z)
\in
[1,1.5]\times[-1.5,1.5],
\)
and the kinematic viscosity is \(\nu=1\). The incompressibility constraint becomes
\(
\frac{1}{r}
\frac{\partial(ru_r)}{\partial r}
+
\frac{\partial u_z}{\partial z}
=
0,
\)
with the corresponding axisymmetric momentum equations obtained from \eqref{eq:NS}. No penetration is imposed on the solid boundaries, with \(u_r=u_z=0\). The azimuthal wall velocities are
\(
u_\theta
=
\Omega_i r
\,\,
\text{at }r=1
\)
and
\(
u_\theta
=
\Omega_o r
\,\,
\text{at }r=1.5.
\)
We set \(\Omega_o=0\) and use the inner-cylinder rotation rate as the operator parameter:
\(
\boldsymbol{\zeta}
=
(\Omega_i,t),
\,\,
\Omega_i\in[100,200],
\,\,
t\in[0,0.2].
\)
For this nondimensional configuration, Taylor-vortex instability develops near \(\Omega_i\approx160\). 
The selected parameter interval therefore spans the transition from circular Couette flow to axisymmetric Taylor-vortex flow while remaining below the wavy-vortex regime. 

\vspace{-5pt}
\subsection*{Control-volume formulation and NBM integration}
\vspace{-10pt}
\textbf{Control-volume formulation.} 
Starting from the conservative momentum balance in \eqref{eq:NS}, integration over a control-volume \(K\) gives the semi-discrete residual used in standard finite-volume formulations:
\begin{equation}
\frac{d}{dt}\int_K \mathbf{u}\,dV
+
\sum_{f\in\partial K} F_f\mathbf{u}_f
-
\nu\sum_{f\in\partial K}
(\nabla\mathbf{u})_f\cdot\mathbf{n}_f\,|f|
+
\sum_{f\in\partial K}
p_f\mathbf{n}_f\,|f|
-
\int_K\mathbf{F}\,dV
=
\mathbf{0},
\label{eq:cv}
\end{equation}
where \(f\) denotes a face of \(K\), \(\mathbf{n}_f\) is the outward unit normal, \(|f|\) is the face measure, $F_f=(\mathbf{u}_f\cdot\mathbf{n}_f)|f|$ is the face flux, $(\nabla\mathbf{u})_f\cdot\mathbf{n}_f$ is the normal velocity-gradient flux, and $p_f$ is the face pressure.
The OpenFOAM solutions used as benchmark references are based on this formulation.

\textbf{NBM integration.}
In NBM, the VCV construction shown in Fig.~\ref{fig:overview}b allows the cell values, face values, and face gradients appearing in \eqref{eq:cv} to be evaluated analytically from the neural basis functions. 
This is a very advantageous feature of NBM, since in conventional finite-volume approaches certain of these quantities, depending on gridding design, are typically obtained through local interpolation/reconstruction and gradient approximation, procedures whose accuracy and stability often depend sensitively on grid arrangement and quality~\cite{ferziger_computational_2020}.
Because the NBM residual retains the control-volume structure of \eqref{eq:cv}, established temporal and convective discretization schemes~\cite{ferziger_computational_2020}developed for finite-volume methods can be incorporated directly into NBM.
For unsteady problems, we adopt the second-order backward differentiation scheme (BDF2) for temporal discretization with second-order accuracy such that
\(
\left.
\frac{d}{dt}
\int_K\mathbf{u}\,dV
\right|^{n+1}
\approx
|K|
\frac{
3\mathbf{u}_K^{n+1}
-
4\mathbf{u}_K^n
+
\mathbf{u}_K^{n-1}
}{
2\Delta t
},
\)
where \(|K|\) is the control-volume measure. 
The convection term is linearized by extrapolating the convecting velocity from the two preceding time levels to maintain the same level of accuracy:
\(
\widehat{F}_f^{\,n+1}
:=
\left[
\left(
2\mathbf{u}_f^{n}
-
\mathbf{u}_f^{n-1}
\right)
\cdot
\mathbf{n}_f
\right]|f|.
\)
Here the hat indicates that $\widehat{F}_f^{\,n+1}$ is an extrapolated flux evaluated from previous time levels.
The sign of $\widehat{F}_f^{\,n+1}$ determines the upwind direction, while the transported value $\mathbf{u}_f^{n+1}$ is reconstructed at the face using a convective scheme such as LUST.
Therefore, the discrete convection contribution is written as
\(\sum_{f\in\partial K}\widehat{F}_f^{\,n+1}\mathbf{u}_f^{n+1}.\)
After substituting the NBM expansion into the control-volume residual, the cell values, face values, and face gradients at \(t^{n+1}\) are linear functions of the unknown velocity-coefficient vector \(\boldsymbol{\alpha}^{n+1}\), whereas the extrapolated fluxes \(\widehat{F}_f^{\,n+1}\) are determined from previously computed states. 
The transient, diffusive, and linearized convective contributions can therefore be assembled into a linear least-squares residual in the velocity coefficients. 
The pressure contribution is eliminated using the velocity-only projection.  
The resulting projected control-volume residual, together with the boundary-condition residuals, is then used to obtain \(\boldsymbol{\alpha}^{n+1}\).

\vspace{-5pt}
\subsection*{Physics-informed neural networks}
\vspace{-10pt}
PINNs are used as neural baselines for the single-instance benchmarks. A fully connected neural network represents
\(
\mathcal{N}_{\boldsymbol{\eta}}
(x,y[,t])
=
\left(
u_{\boldsymbol{\eta}},
v_{\boldsymbol{\eta}},
p_{\boldsymbol{\eta}}
\right),
\)
where \(t\) is included as an input for unsteady problems. Spatial and temporal derivatives are obtained using automatic differentiation.
For the considered examples, the interior residuals are
\(
r_u
=
\partial_tu
+
u\partial_xu
+
v\partial_yu
+
\partial_xp
-
\nu
\left(
\partial_{xx}u
+
\partial_{yy}u
\right),
\)
\(
r_v
=
\partial_tv
+
u\partial_xv
+
v\partial_yv
+
\partial_yp
-
\nu
\left(
\partial_{xx}v
+
\partial_{yy}v
\right),
\)
\(
r_c
=
\partial_xu
+
\partial_yv,
\)
with the temporal terms omitted for steady problems. 
The PINN objective combines the governing-equation, boundary-condition, initial-condition and pressure-gauge losses:
\(
\mathcal{L}_{\rm PINN}
=
\mathcal{L}_{\rm PDE}
+
\mathcal{L}_{\rm BC}
+
\mathcal{L}_{\rm IC}
+
\mathcal{L}_{\rm gauge}.
\)
We note that, for the lid-driven cavity at \(Re=1000\), the standard PINN did not converge to a physically meaningful steady solution under the adopted training protocol, consistent with previously reported difficulties for this problem~\cite{satyadharma2024assessing}. 
We thus then add a sparse velocity-anchor term from the OpenFOAM reference solution, however, making this PINN-based neural solver no longer data-free.

\vspace{-5pt}
\subsection*{DeepONet-based neural operators}
\vspace{-10pt}
\textbf{DeepO.} 
For the parametric benchmarks, we use DeepONet-based neural operators (DeepO) as baseline operator learning methods \cite{karumuri2026physics}. The DeepONet directly represents the parameter--time-to-field map through a branch--trunk architecture. For a parameter--time instance $\zeta=(\xi,t)$ and spatial coordinate $\mathbf{x}$, the branch network maps $\zeta$ to parameter--time-dependent coefficients, while the trunk network maps $\mathbf{x}$ to spatial features. The corresponding outputs are combined through inner products to reconstruct the flow field $\mathbf{s}_{\boldsymbol{\eta}}=(u_{\boldsymbol{\eta}},v_{\boldsymbol{\eta}},p_{\boldsymbol{\eta}})$, yielding $\mathbf{s}_{\boldsymbol{\eta}}(\mathbf{x};\zeta)=\mathrm{DeepO}_{\boldsymbol{\eta}}(\zeta,\mathbf{x})$, where $\boldsymbol{\eta}$ denotes the trainable network parameters.

\textbf{Data-driven vs. physics-informed.}
The same architecture is trained in data-driven (DeepO-DD) and physics-informed (DeepO-PI) settings by changing the loss weights,
\(
\mathcal{L}_{\mathrm{DeepO}} = \lambda_{\mathrm{phys}}\mathcal{L}_{\mathrm{phys}} + \lambda_{\mathrm{BC}}\mathcal{L}_{\mathrm{BC}} + \lambda_{\mathrm{data}}\mathcal{L}_{\mathrm{data}},
\label{eq:lno-loss}
\)
where $\mathcal{L}_{\mathrm{phys}}$ is the discrete-time residual loss based on \eqref{eq:NS}-\eqref{eq:continuity}, $\mathcal{L}_{\mathrm{BC}}$ enforces the prescribed boundary conditions, and $\mathcal{L}_{\mathrm{data}}$ measures the mismatch between the reconstructed velocity field and the reference velocity snapshots. 
The two settings differ only in $\lambda_{\mathrm{phys}}$ and $\lambda_{\mathrm{data}}$: the data-driven setting sets $\lambda_{\mathrm{phys}}=0$, so $\mathcal{L}_{\mathrm{data}}$ is the sole training signal, while the physics-informed setting sets $\lambda_{\mathrm{data}}=0$, so training relies solely on the residual and boundary losses, without reference-snapshot supervision.
Implementation details are provided in the Supplementary Section.

\textbf{Continuous-time vs. discrete-time.}
For time dependence, we formulate a continuous-time version (DeepO-PI-CT), in which the predicted field is a continuous function of $\zeta$, the loss is formed from the continuous-time residual, and temporal derivatives such as $\partial_t\mathbf{u}$ are obtained by automatic differentiation with respect to $t$. 
This continuous-time formulation fails to produce physically meaningful solutions. 
We therefore adopt a discrete-time formulation (DeepO-PI-DT), in which the network predicts the target field at time level $t^{n+1}$: for $\zeta^{n+1}=(\xi,t^{n+1})$,
\(
\mathbf{s}_{\boldsymbol{\eta}}^{n+1}(\mathbf{x})
=
\mathrm{DeepO}_{\boldsymbol{\eta}}
\left(\zeta^{n+1},\mathbf{x}\right).
\)
Cached states from previous time levels are used only inside the loss.
Note: DeepO-DD is snapshot-based and therefore uses temporally discrete training data for time-dependent problems.

\vspace{-5pt}
\subsection*{Computational complexity and scalability analysis of operator learning}
\vspace{-10pt}
\textbf{Operation-count convention.} 
We analyze the computational complexity of NBO-RC and DeepO-PI-DT in terms of multiply--accumulate (MAC) operations associated with dense neural-network layers and the dominant matrix operations entering the training objectives.
For a dense layer with input and output dimensions \(d_{\mathrm{in}}\) and \(d_{\mathrm{out}}\), respectively, one forward evaluation requires \(d_{\mathrm{in}}d_{\mathrm{out}}\) MACs.
Elementwise activations, bias additions, and scalar reductions are omitted as lower-order contributions.
The operation counts below use the Kolmogorov-flow configuration as a representative example, while the subsequent expressions give a more general complexity of the two formulations.
Because NBO-RC and DeepO-PI-DT process the same number of parameter--time training records per epoch, their per-epoch complexity ratio is the same as their per-record complexity ratio.

\textbf{NBO-RC.}
The principal computational advantage of NBO-RC follows from separating spatial discretization from online operator-network training.
Once the neural bases are fixed, the discrete operators entering the projected residual can be assembled before training, and the squared residual for each parameter--time sample can be expressed as a quadratic function of the predicted coefficient vector.
Consequently, each online training update requires only evaluation of the parameter--time-to-coefficient network and the corresponding low-dimensional quadratic form.
For the Kolmogorov-flow example, the coefficient network maps the two-dimensional parameter--time input to \(d_{\mathrm{coef}}=300\) coefficients (after POD reduction) through four hidden layers of width \(768\).
Its forward cost is \(\mathcal{C}_{\mathrm{coef}}^{\mathrm{fwd}}=2(768)+3(768)^2+768(300)\simeq2 \times10^{6}\) MACs per training record.
Including reverse-mode differentiation through the coefficient network and evaluation and differentiation of the \(300\times300\) quadratic residual gives approximately
\(
\mathcal{C}_{\mathrm{NBO}}
\simeq
6\times10^{6}
\)
MAC-equivalent operations per training record.
More generally, taking advantage of the precomputed coefficient-space residual quantities, the online NBO-RC training complexity is
\(
\mathcal{C}_{\mathrm{NBO}}
=
O\!\left(
\mathcal{C}_{\mathrm{coef}}
+
d_{\mathrm{coef}}^{\,2}
\right),
\)
where \(\mathcal{C}_{\mathrm{coef}}\) denotes the forward-and-backward cost of the parameter--time-to-coefficient network.
Importantly, the dependence on the spatial discretization is confined to the offline basis construction and residual precomputation, which are performed once rather than repeated over training epochs.

\textbf{DeepO-PI-DT.}
DeepO-PI-DT retains a trainable spatial reconstruction network inside every optimization step.
For the architecture considered here, both the branch and trunk networks contain five hidden layers of width \(256\), with a DeepO rank \(128\).
For the Kolmogorov formulation, the branch maps the 2-dimensional parameter--time input to \(256\) coefficients corresponding to the streamfunction and pressure outputs, while the trunk maps the 2-dimensional spatial coordinate to \(128\) spatial features.
The forward costs are therefore \(\mathcal{C}_{\mathrm{branch}}^{\mathrm{fwd}}=2(256)+4(256)^2+256(256)\simeq3\times10^{5}\) and \(\mathcal{C}_{\mathrm{trunk}}^{\mathrm{fwd}}=2(256)+4(256)^2+256(128)\simeq3\times10^{5}\) MACs, respectively.
The branch is evaluated once for each parameter--time record, whereas the trunk must be evaluated over the spatial points entering the loss.
In the Kolmogorov physics-informed training configuration, \(m_{\mathrm{phy}}=512\) interior points and \(m_{\mathrm{BC}}=512\) boundary points are evaluated for each training record.
Including one additional spatial evaluation for the pressure gauge and the branch--trunk contractions, the forward field-reconstruction workload is approximately \(\mathcal{C}_{\mathrm{rec}}^{\mathrm{fwd}}=\mathcal{C}_{\mathrm{branch}}^{\mathrm{fwd}}+(m_{\mathrm{phy}}+m_{\mathrm{BC}}+1)\mathcal{C}_{\mathrm{trunk}}^{\mathrm{fwd}}+(m_{\mathrm{phy}}+m_{\mathrm{BC}})(256)\simeq3\times10^{8}\) MACs per training record even before accounting for the spatial derivatives required by the physics objective.
The dominant additional cost of DeepO-PI-DT arises from automatic differentiation.
The Navier--Stokes residual requires multiple first- and higher-order spatial derivatives of the reconstructed fields at every interior physics point, while the boundary loss also requires spatial differentiation of the streamfunction to obtain the velocity.
These derivative computations retain their computational graphs because the resulting physics-informed loss must subsequently be differentiated with respect to the network parameters.
Using the Kolmogorov derivative graph and accounting conservatively for the repeated reverse-mode differentiation and the final parameter backward gives an aggregate automatic-differentiation factor of approximately \(\kappa_{\mathrm{AD}}\simeq30\) relative to the base spatial reconstruction workload, yielding
\(
\mathcal{C}_{\mathrm{DeepO}}
\simeq
9.2\times10^{9}
\)
MAC-equivalent operations per physics-training record.
More generally, the DeepO-PI-DT training complexity can be expressed as
\(
\mathcal{C}_{\mathrm{DeepO}}
=
O\!\left(
\mathcal{C}_{\mathrm{branch}}
+
\kappa_{\mathrm{phy}}\,
m_{\mathrm{phy}}\,
\mathcal{C}_{\mathrm{trunk}}
+
\kappa_{\mathrm{BC}}\,
m_{\mathrm{BC}}\,
\mathcal{C}_{\mathrm{trunk}}
\right),
\)
where \(m_{\mathrm{phy}}\) and \(m_{\mathrm{BC}}\) are the numbers of spatial points used for the physics and boundary losses, respectively, and \(\kappa_{\mathrm{phy}}\) and \(\kappa_{\mathrm{BC}}\) account for the automatic-differentiation work required by the corresponding residuals.
The precise values of these factors depend on the governing equations and field representation, but the structural dependence on repeated spatial-network evaluation and differentiation is common to the physics-informed DeepO formulation.
When the number of physics points is increased proportionally with spatial resolution, \(m_{\mathrm{phy}}\propto N_x\), the dominant online training cost correspondingly scales as
\(
\mathcal{C}_{\mathrm{DeepO}}
=
O\!\left(
\kappa_{\mathrm{phy}}\,
N_x\,
\mathcal{C}_{\mathrm{trunk}}
\right),
\)
whereas the online NBO-RC complexity remains independent of \(N_x\) for a fixed coefficient dimension after the spatial operators have been precomputed.

\textbf{Complexity comparison.}
For the representative Kolmogorov-flow configuration, the resulting per-record complexity ratio is
\(
\frac{\mathcal{C}_{\mathrm{DeepO}}}
{\mathcal{C}_{\mathrm{NBO}}}
\simeq
\frac{9.2\times10^{9}}
{6.3\times10^{6}}
\simeq
1.5\times10^{3}.
\)
Thus, the operation-count analysis predicts a \(>10^3\) orders of magnitude theoretical reduction in online physics-training workload for NBO-RC relative to DeepO-PI-DT.
The same structural advantage applies to the Taylor--Couette example as well.

\vspace{-5pt}
\subsection*{Computational environment} 
\vspace{-10pt}
The single-instance experiments are performed on an Apple M3 system
with 24~GB unified memory.
The parametric benchmarks are performed on the ACES high-performance
computing system at Texas A\&M University.
Offline precomputation is carried out on a CPU compute node (dual Intel Xeon 8468 processors + 512~GB memory) utilizing 96 CPU cores, while operator-network training and inference are performed using a single NVIDIA H100 GPU with 80~GB GPU memory.
All reported training and inference timings for NBO and DeepO are measured on the same H100 GPU environment using double-precision (FP64) arithmetic.

\vspace{-5pt}
\subsection*{Code availability} 
\vspace{-10pt}
The source code and scripts required to reproduce the results will be made publicly available on GitHub upon publication.

\section*{Acknowledgments}
\vspace{-15pt} 
Portions of this research used the advanced computing resources provided by Texas A\&M High Performance Research Computing through allocation MTH260049 from the Advanced Cyberinfrastructure Coordination Ecosystem:
Services \& Support (ACCESS) program, which is supported by U.S. National
Science Foundation grants \#2138259, \#2138286, \#2138307, \#2137603,
and \#2138296~\cite{boerner2023access}. ZW was partially supported by U.S. National Science Foundation grants \#2245097 and \#2513112. MW was partially supported by U.S. National Science Foundation grants \#2603912 and \#2608740.

\putbib[reference]
\end{bibunit}

\clearpage

\makeatletter
\def\bibcite#1#2{%
  \global\@namedef{b@#1.SI}{%
    \hyper@@link[cite]{}{cite.#1.SI}{#2}%
  }%
}
\def\@extra@b@citeb{.SI}
\makeatother

\begin{bibunit}[unsrt]

\begin{center}
{\LARGE
\textbf{Supplementary Information:}\\[0.5em]
Solving and Learning Navier-Stokes Dynamics
with the Neural Basis Method\par}
\end{center}

\vspace{2em}

\setcounter{section}{0}
\renewcommand{\thesection}{S\arabic{section}}
\renewcommand{\thesubsection}{\thesection.\arabic{subsection}}
\renewcommand{\thesubsubsection}{\thesubsection.\arabic{subsubsection}}

\numberwithin{equation}{section}
\renewcommand{\theequation}{\thesection.\arabic{equation}}

\setcounter{lemma}{0}
\renewcommand{\thelemma}{S\arabic{lemma}}

\setcounter{theorem}{0}
\renewcommand{\thetheorem}{S\arabic{theorem}}
\renewcommand{\theHtheorem}{SI.\arabic{theorem}}


\captionsetup[table]{
    font=footnotesize,
    labelfont=bf
}

\setlength{\LTpre}{6pt}
\setlength{\LTpost}{6pt}


\section{Notation}

{\footnotesize
\renewcommand{\arraystretch}{1.12}
\setlength{\tabcolsep}{4pt}

\begin{longtable}{
@{}
>{\raggedright\arraybackslash}p{0.18\textwidth}
>{\raggedright\arraybackslash}p{0.40\textwidth}
>{\raggedright\arraybackslash}p{0.35\textwidth}
@{}
}

\caption{\bfseries Notation used in the main text and Supplementary Information.}
\vspace{-5pt}

\label{tab:notation}\\

\toprule
\textbf{Notation}
&
\textbf{Definition / Meaning}
&
\textbf{Remarks}
\\
\midrule
\endfirsthead

\multicolumn{3}{@{}l}{\footnotesize\textit{Table \thetable\ continued}}\\[2pt]
\toprule
\textbf{Notation}
&
\textbf{Definition / Meaning}
&
\textbf{Remarks}
\\
\midrule
\endhead

\midrule
\multicolumn{3}{r@{}}{\footnotesize\textit{Continued on next page}}
\\
\endfoot

\bottomrule
\endlastfoot

\addlinespace[5pt]
\multicolumn{3}{@{}l}{\textbf{Physical and geometric quantities}}
\\
\addlinespace[2pt]

\(\Omega,\partial\Omega\)
&
Spatial domain and its boundary
&
\(\Omega\subset\mathbb{R}^{d}\).
\\

\(d\)
&
Spatial dimension
&
\(d=1\), \(2\), or \(3\).
\\

\(\mathbf{x}\)
&
Spatial coordinate
&
\(\mathbf{x}=(x,y)\) in 2D Cartesian.
\\

\(x,y\)
&
Cartesian spatial coordinates
&

\\

\(r,\theta,z\)
&
Radial/azimuthal/axial coordinates
&
For axisymmetric Taylor--Couette flow.
\\

\(t,\Delta t,t^n\)
&
Time, time-step, and discrete level \(n\)
&
\\

\(\mathbf{u}\)
&
Velocity field
&
Vector-valued primary variable.
\\

\(u,v\)
&
Cartesian velocity components
&
\(\mathbf{u}=(u,v)\) in 2D.
\\

\(u_r,u_\theta,u_z\)
&
Radial/azimuthal/axial velocity components
&
For axisymmetric Taylor--Couette flow.
\\

\(p\)
&
Kinematic pressure
&
Scalar-valued non-primary variable.
\\

\(\mathbf{F}\)
&
Body-force vector
&
\\

\(\nu\)
&
Kinematic viscosity
&
\\

\(Re\)
&
Reynolds number
&
\\

\(K,\partial K\)
&
Control-volume and its boundary
&
Non-bold; distinguish from matrix \(\mathbf K\).
\\

\(f\)
&
Face of a control-volume
&
\(f\in\partial K\).
\\

\(\mathbf n,\mathbf n_f\)
&
Outward unit normal
&
\(\mathbf n_f\) is the normal associated with face \(f\).
\\

\(|K|,|f|\)
&
Measure of \(K\) and \(f\)
&
\\

\(\mathbf u_K^n,\mathbf u_f^n,p_f\)
&
Velocity at center/face, and pressure at face
&
\(n\) denote discrete time levels.
\\

\(F_f\)
&
Convective face flux
&
\(F_f=(\mathbf u_f\cdot\mathbf n_f)|f|\).
\\

\(\widehat F_f^{\,n+1}\)
&
Extrapolated convecting face flux for \(n+1\)
&
\\

\(\nabla^\perp\)
&
Rotated gradient
&
In 2D,
\(\nabla^\perp\phi=(\partial_y\phi,-\partial_x\phi)^{T}\).
\\

\addlinespace[10pt]
\multicolumn{3}{@{}l}{\textbf{Neural representation and neural basis spaces}}
\\
\addlinespace[2pt]

\(\phi_i\), \(\varphi_j\)
&
Scalar neural basis
&
\\

\(\boldsymbol{\Phi}_i\)
&
Divergence-free vector-valued velocity basis 
&
In 2D,
\(\boldsymbol{\Phi}_i=\nabla^\perp\phi_i\).
\\

\(\mathbf u_{\rm NN}\)
&
Neural-basis representation of velocity field
&
\(\mathbf u_{\rm NN}
=\sum_i\alpha_i\boldsymbol{\Phi}_i\).
\\

\(p_{\rm NN}\)
&
Neural-basis representation of pressure field
&
\(p_{\rm NN}
=\sum_j\beta_j\varphi_j\).
\\

\(\alpha_i,\boldsymbol{\alpha}\)
&
Velocity coefficient and coefficients vector
&
\\

\(\alpha_i^*,\boldsymbol{\alpha}^*\)
&
Velocity coefficient and coefficients vector
&
\(*\) denotes the exact minimizer.
\\

\(\widetilde{\boldsymbol{\alpha}}\)
&
Computed velocity-coefficient vector
&
\\

\(\beta_j,\boldsymbol{\beta}\)
&
Pressure coefficient and coefficients vector
&
\\

\(N_b\)
&
Number of velocity neural basis functions
&
\\

\(N_b^p\)
&
Number of pressure neural basis functions
&
Used when pressure is considered.
\\

\(V_{N_b}\)
&
Divergence-free neural velocity space
&
Spanned by the \(N_b\) sampled bases.
\\

\(\Theta\)
&
Parameter space of the neural-basis generator
&
Used in the proof of Theorem 1.
\\

\(\theta,\theta_i\)
&
Used in the proof of Theorem 1.
&
\(\theta_i\overset{\rm i.i.d.}{\sim}\pi\).
\\

\(\pi\)
&
Sampling probability distribution over \(\Theta\)
&
Used in the proof of Theorem 1.
\\

\addlinespace[10pt]
\multicolumn{3}{@{}l}{\textbf{NBM discretization and pressure-gradient elimination}}
\\
\addlinespace[2pt]

\(N_c\)
&
Number of VCVs or VCV collocation points
&
\\

\(\mathbf A_{\mathbf u}\)
&
Velocity-dependent matrix (NBM discretized)
&
May depend on parameter and time.
\\

\(\mathbf G_p\)
&
Pressure-gradient matrix (NBM discretized)
&
\\

\(\mathbf b\)
&
Known right-hand-side vectors
&
\\

\(\mathbf W\)
&
Discrete residual-weighting matrix
&
\(\mathbf W=\mathbf W^T\succ0\).
\\

\(\|\mathbf r\|_{\mathbf W}\)
&
Weighted discrete residual norm
&
\(\|\mathbf r\|_{\mathbf W}^2
=\mathbf r^T\mathbf W\mathbf r\).
\\

\(\mathbb P_\perp\)
&
Leray-type continuous projector
&
\\

\(\mathcal R(\mathbf u)\)
&
Continuous velocity-only residual operator
&
The exact solution satisfies \(\mathcal R(\mathbf u)=0\).
\\

\(\mathcal G_h\)
&
Discrete pressure-gradient space
&
Defined in the VCV residual space.
\\

\(\mathbf G\)
&
Matrix of discrete pressure-gradient space
&
\\

\(\mathbf P_\perp^{\mathbf W}\)
&
Orthogonal projector onto the complement of \(\mathcal G_h\)
&
Discrete counterpart of \(\mathbb P_\perp\).
\\

\(\mathbf I\)
&
Identity matrix
&
\\

\((\cdot)^\dagger\)
&
Moore--Penrose pseudoinverse
&
\\

\(\operatorname{Col}(\mathbf G)\)
&
Column space of \(\mathbf G\)
&
\\

\(R_K(\mathbf v)\)
&
Integrated momentum residual over VCV \(K\)
&
Before pressure-gradient elimination.
\\

\(R_{K,N_c}(\mathbf v)\)
&
Quadrature-evaluated residual over VCV \(K\)
&
Used in the VCV consistency analysis.
\\

\(\mathbf r^{N_c}(\mathbf v)\)
&
Vector collecting the VCV residuals
&
Discrete unprojected residual vector.
\\

\(\mathbf r^{N_c}_{\perp}(\mathbf v)\)
&
Discrete projected velocity-only residual
&
\(\mathcal R_{N_c}
=\mathbf P_\perp^{\mathbf W}\mathbf r^{N_c}\).
\\

\(\mathbf A_{N_b,N_c}\)
&
Weighted projected least-squares matrix
&
\(\mathbf A_{N_b,N_c}
=\mathbf W^{1/2}
\mathbf P_\perp^{\mathbf W}
\mathbf A_{\mathbf u}\).
\\

\(\mathbf b_{N_c}\)
&
Weighted projected right-hand side
&
\(\mathbf b_{N_c}
=\mathbf W^{1/2}
\mathbf P_\perp^{\mathbf W}
\mathbf b\).
\\

\addlinespace[10pt]
\multicolumn{3}{@{}l}{\textbf{Neural basis operator learning}}
\\
\addlinespace[2pt]

\(\boldsymbol{\xi}\)
&
Physical parameter vector
&
\\

\(\boldsymbol{\zeta}\)
&
Parameter--time input
&
\(\boldsymbol{\zeta}=(\boldsymbol{\xi},t)\).
\\

\(\boldsymbol{\eta}\)
&
Trainable neural-network parameters
&
\\

\(\mathcal N_{\boldsymbol{\eta}}\)
&
Trainable neural mapping
&
For NBO,
\(\boldsymbol{\zeta}
\mapsto
\boldsymbol{\alpha}_{\boldsymbol{\eta}}(\boldsymbol{\zeta})\).
\\

\(\boldsymbol{\alpha}_{\boldsymbol{\eta}}(\boldsymbol{\zeta})\)
&
NBO-predicted velocity-coefficient vector
&
\\

\(\mathcal D_{\rm tr}\)
&
Parameter--time training set
&
Contains training instances \(\boldsymbol{\zeta}\).
\\

\(N_s\)
&
Number of parameter--time samples
&
Also used in sample averaging.
\\

\(\mathbb E_{\boldsymbol{\zeta}\in\mathcal D_{\rm tr}}\)
&
Empirical average over training samples
&
Written as an expectation.
\\

\(\mathcal L_{\rm RC}\)
&
Residual-certified NBO training loss
&
Based on NBM residual.
\\

\(\mathcal L_{\rm DD}\)
&
Data-driven NBO training loss
&
Coefficient-space mean-squared-error.
\\

\(\delta\boldsymbol{\alpha}\)
&
Velocity-coefficient error
&
\\

\(\mathbf G_{\rm R}\)
&
Residual Gram matrix in coefficient space
&
\(\mathbf G_{\rm R}
=\mathbf L_\perp^T\mathbf W\mathbf L_\perp\).
\\

\(\mathbf G_{{\rm R},{\rm mean}}\)
&
Sample-averaged residual Gram matrix
&
\(\mathbf G_{{\rm R},{\rm mean}}
=N_s^{-1}\sum_i\mathbf G_{{\rm R},i}\).
\\

\addlinespace[10pt]
\multicolumn{3}{@{}l}{\textbf{NBM error estimate and discrete residual--error certificate}}
\\
\addlinespace[2pt]

\(V,\|\cdot\|_V\)
&
Velocity solution space and its norm
&
\\

\(\mathbf u_{N_b,N_c}^{*}\)
&
Exact discrete NBM solution for \(N_b\) and \(N_c\)
&
Minimizer of the discrete residual.
\\

\(\widetilde{\mathbf u}_{N_b,N_c}\)
&
Computed discrete NBM solution
&
Includes algebraic least-squares error.
\\

\(\varepsilon_{\rm LS}\)
&
Prescribed least-squares optimality tolerance
&
Controls the algebraic solution error.
\\

\(\sigma_{\min}(\mathbf A)\)
&
Smallest singular value of matrix \(\mathbf A\)
&
Appears in the algebraic-error estimate.
\\

\(C_{\rm basis}\)
&
Stability constant of the neural bases
&
Assumed independent of \(N_b\).
\\

\(q\)
&
Order of accuracy of the VCV quadrature
&
\\

\(\delta\)
&
Failure-probability parameter
&
\\

\(C_1,C_2,C_3\)
&
Positive constants in NBM error estimate
&
Independent of \(N_b\) and \(N_c\).
\\

\(\alpha,\beta\)
&
Constants governing the stretched-exponential neural approximation rate
&
Used in Theorem 1.
\\

\(R_\perp(\boldsymbol{\alpha})\)
&
Discrete projected residual for one linearized implicit time step
&
\\

\(\mathbf L_\perp\)
&
Projected discrete linear-system matrix
&
\\

\(\mathbf g_\perp\)
&
Corresponding projected right-hand side
&
Contains all known contributions.
\\

\(\mathbf u^*,\widetilde{\mathbf u}\)
&
Exact projected discrete velocity and approximate velocity
&
Used in Theorem 2.
\\

\(\mathbf M\)
&
Assembled velocity contribution to the discrete energy matrix
&
\\

\(\mathbf K\)
&
Assembled velocity-gradient contribution to the discrete energy matrix
&
\\

\(\mathbf E\)
&
Discrete velocity-energy matrix
&
\(\mathbf E
=\Delta t^{-1}\mathbf M+\nu\mathbf K\).
\\

\(\|\cdot\|_{\mathcal E}\)
&
Velocity-energy norm
&
\\

\(\mathbf B_\perp\)
&
Energy-scaled projected residual operator
&
\(\mathbf B_\perp
=\mathbf W^{1/2}\mathbf L_\perp\mathbf E^{-1/2}\).
\\

\(\sigma_{\min}(\mathbf B_\perp),
\sigma_{\max}(\mathbf B_\perp)\)
&
Min and max singular values of \(\mathbf B_\perp\)
&
\\

\addlinespace[10pt]
\multicolumn{3}{@{}l}{\textbf{Continuous residual--error certificate}}
\\
\addlinespace[2pt]

\(V_{\rm div}\)
&
Homogeneous divergence-free velocity space
&
\\

\(V_{\rm div}^{*}\)
&
Dual space of \(V_{\rm div}\)
&
\\

\(\mathcal G\)
&
Continuous pressure-gradient space
&
\(\mathcal G=\{\nabla q:q\in Q\}\).
\\

\(Q\)
&
Scalar pressure-function space
&
\\

\(\mathbf u^n\)
&
Frozen convecting velocity in a linearized implicit step
&
Divergence-free by construction.
\\

\(a(\mathbf v,\mathbf w)\)
&
Bilinear form of the projected linearized problem
&
Contains transient, convective, and diffusive contributions.
\\

\(\ell(\mathbf w)\)
&
Linear functional of the projected weak problem
&
Contains forcing, previous-state, and lifted-boundary contributions.
\\

\(r_\perp(\widetilde{\mathbf u})\)
&
Continuous weak projected residual functional
&
Element of \(V_{\rm div}^{*}\).
\\

\(\mathbf e\)
&
Continuous velocity error
&
\(\mathbf e
=\widetilde{\mathbf u}-\mathbf u^*\).
\\

\((\cdot,\cdot)_{L^2}\)
&
\(L^2\) inner product
&
For the continuous weak formulation.
\\

\(H^1,L^2,L^\infty\)
&
Sobolev and Lebesgue function spaces
&
\\

\(\Gamma_{\rm out}\)
&
Outflow portion of the domain boundary
&
Used in the outflow-boundary argument.
\\

\addlinespace[10pt]
\multicolumn{3}{@{}l}{\textbf{Neural approximation and random-coverage analysis}}
\\
\addlinespace[2pt]

\(\mathbf W^{k,\infty}(\Omega)\)
&
Vector-valued Sobolev space
&
\(\mathbf W^{k,\infty}(\Omega)
=[W^{k,\infty}(\Omega)]^d\).
\\

\(k\)
&
Sobolev derivative order
&
Used in the approximation analysis.
\\

\(C_V\)
&
Constant relating \(V\)-norm to Sobolev norm
&
Independent of the represented field.
\\

\(\mathcal D_{\rm all}\)
&
Admissible scalar neural dictionary
&
\(\mathcal D_{\rm all}
=\{\phi_\theta:\theta\in\Theta\}\).
\\

\(M\)
&
Number of terms in an ideal finite neural approximant
&
Distinct from the actual sampled basis count \(N_b\).
\\

\(\mathcal N_M(\mathcal D_{\rm all})\)
&
Scalar \(M\)-term neural approximation class
&
Used in the best-approximation analysis.
\\

\(\mathcal N_M^{\rm div}(\mathcal D_{\rm all})\)
&
Divergence-free vector-valued \(M\)-term neural approximation class
&
Constructed from \(\boldsymbol{\Phi}_{\theta_j}\).
\\

\((Q,R)\)
&
Constants characterizing analytic regularity
&
Defined through
\(|f|_{W^{s,\infty}}
\le Q R^{-s}s!\).
\\

\(s\)
&
Derivative order in the analytic-regularity
&
\(s=0,1,2,\ldots\).
\\

\(\psi\)
&
Scalar potential generating a 2D divergence-free velocity
&
\(\mathbf u=\nabla^\perp\psi\).
\\

\(A_\ell,W_\ell,P_\ell,b_\ell\)
&
Layer activation, feedforward weight, residual-branch projection, and bias quantities
&
Used in the neural-approximation proof.
\\

\(t_\ell,a_\ell,z_\ell,s_\ell\)
&
Affine preactivation, scaled preactivation, nonlinear activation, and residual-branch output
&
Layerwise architecture quantities.
\\

\(\alpha_\ell,\beta_\ell\)
&
Layerwise scaling and residual-branch coefficients
&
Architecture-specific quantities.
\\

\(\widehat f_N\)
&
Constructive two-hidden-layer tanh neural approximant
&
Used in the approximation proof.
\\

\(N\)
&
Network-width parameter in the constructive approximation result
&
Distinct from \(N_b\) and \(N_c\).
\\

\(M_N\)
&
Number of second-hidden-layer features associated with \(\widehat f_N\)
&
Bounded proportionally to \(N\).
\\

\(C_{\rm arch}\)
&
Architecture-dependent constant
&
Used to relate \(M_N\) to \(N\).
\\

\(C_{\rm app}\)
&
Neural best-approximation constant
&
\\

\(u_M^*,a_j^*,\theta_j^*\)
&
Selected \(M\)-term approximant, its coefficients, and neural-atom parameters
&
Used in the random-coverage analysis.
\\

\(B_j\)
&
Neighborhood of selected neural parameter \(\theta_j^*\)
&
\(B_j\subset\Theta\).
\\

\(c_p\)
&
Lower-bound constant for the probability mass of \(B_j\)
&
\(\pi(B_j)\ge c_p/M\).
\\

\(\varepsilon_M\)
&
Maximum neural-atom discrepancy within neighborhood \(B_j\)
&
Controls coverage error.
\\

\(A_M\)
&
Sum of absolute coefficients of the selected \(M\)-term approximant
&
\(A_M=\sum_j|a_j^*|\).
\\

\(E_j\)
&
Event that none of the \(N_b\) sampled parameters lies in \(B_j\)
&
Used in the union-bound argument.
\\

\(\widetilde\theta_j,\widetilde u_M\)
&
Sampled parameter covering the \(j\)-th target atom and corresponding approximation
&
Used in the random-coverage proof.
\\

\(C_{\rm cov},C_b\)
&
Coverage-error constant and combined approximation constant
&
\(C_b=C_{\rm app}+C_{\rm cov}\).
\\

\(M_{N_b}\)
&
Choice of the ideal approximation size expressed as a function of \(N_b\)
&
Used to obtain the \(N_b\)-dependent high-probability estimate.
\\

\(h_c\)
&
Characteristic VCV size
&
For a quasi-uniform VCV covering,
\(N_c\asymp h_c^{-d}\).
\\

\(\|\mathcal R(\cdot)\|_Y\)
&
Continuous residual norm
&
\\

\(\|\mathcal R(\cdot)\|_{Y,N_c}\)
&
VCV-discretized residual norm
&
\\

\(C_{\rm vcv}\)
&
VCV discretization consistency constant
&
Appears in the \(N_c^{-q/d}\) error term.
\\

\(L_{\rm R}\)
&
Local Lipschitz constant of the continuous residual operator
&
Used in the proof of Theorem 1.
\\

\(C_{\rm stab}\)
&
Local residual-to-solution stability constant
&
Used in the proof of Theorem 1.
\\

\(C_{\rm LS}\)
&
Least-squares algebraic-error constant
&
\\

\(\kappa(\mathbf A_{N_b,N_c})\)
&
Condition number of the least-squares matrix
&
\\

\(v_{N_b}\)
&
Trial approximation in \(V_{N_b}\)
&
Constructed in the proof of Theorem 1.
\\

\addlinespace[10pt]
\multicolumn{3}{@{}l}{\textbf{A posteriori pressure recovery}}
\\
\addlinespace[2pt]

\(g_x,g_y\)
&
Target pressure-gradient components obtained from the momentum equation
&
Used for a posteriori pressure recovery.
\\

\(\mathbf g_x,\mathbf g_y\)
&
Vectors of target pressure-gradient components
&
Evaluated over VCV centers.
\\

\(\mathbf P_x,\mathbf P_y\)
&
Matrices of pressure-basis spatial derivatives
&
Used in pressure-gradient fitting.
\\

\(\mathbf P_{\rm bc}\)
&
Matrix of pressure-basis values at boundary
&
Used in the pressure-recovery.
\\

\(\lambda_{\rm bc}\)
&
Weight of the pressure boundary/gauge term
&
Used in the pressure-recovery.
\\

\(\Delta x,\Delta y\)
&
Cartesian VCV spacings
&
For a uniform grid,
\(|K|=\Delta x\Delta y\).
\\

\(\mathbf x_K,\mathbf x_f\)
&
VCV-center and boundary-face coordinates
&
Used in pressure recovery.
\\

\addlinespace[10pt]
\multicolumn{3}{@{}l}{\textbf{Benchmark-specific and diagnostic quantities}}
\\
\addlinespace[2pt]

\(\lambda\)
&
Decay parameter
&
For the analytical Kovasznay solution.
\\

\(U_{\rm lid}\)
&
Prescribed lid-velocity scale
&
Lid-driven cavity benchmark.
\\

\(\chi\)
&
Forcing amplitude
&
Parameter for Kolmogorov flow.
\\

\(\Omega_i,\Omega_o\)
&
Inner- and outer-cylinder angular velocities
&
\(\Omega_i\) is the Taylor--Couette parameter.
\\

\(\gamma\)
&
LUST blending parameter
&
Used in the convective discretization.
\\

\(\mu\)
&
Generic physical parameter
&
\(\mu=\chi\) or \(\Omega_i\) in the generalization tests.
\\

\(U_{\rm rms}\)
&
Root-mean-square velocity statistic
&
Used in single-instance benchmarks.
\\

\(E(k)\)
&
Kinetic-energy spectrum
&
\(k\) denotes spectral wavenumber.
\\

\(e_{\rm rel}\)
&
Relative \(L^2\) velocity error
&
Used in parametric benchmark.
\\

\(\varepsilon_{\rm rel}\)
&
Relative residual measure
&
Used in single-instance refinement.
\\

\addlinespace[10pt]
\multicolumn{3}{@{}l}{\textbf{PINN and DeepONet baseline notation}}
\\
\addlinespace[2pt]

\(u_{\boldsymbol{\eta}},
v_{\boldsymbol{\eta}},
p_{\boldsymbol{\eta}}\)
&
Neural-network predictions
&
Used for PINN and DeepO baselines.
\\

\(r_u,r_v,r_c\)
&
PINN residuals of the \(u\)-momentum, \(v\)-momentum, and continuity equations
&
Interior physics residuals.
\\

\(\mathcal L_{\rm PINN}\)
&
Total PINN objective
&
\\

\(\mathcal L_{\rm PDE},\mathcal L_{\rm BC},\)
\(\mathcal L_{\rm IC},\mathcal L_{\rm gauge}\)
&
PINN PDE, boundary-condition, initial-condition, and pressure-gauge losses
&
Baseline-specific quantities.
\\

\(N_{\rm PDE},
N_{\rm BC},
N_{\rm IC},
\newline
N_{\rm anchor}\)
&
Numbers of PDE, boundary, initial, and velocity-anchor training points
&
PINN configuration quantities.
\\

\(N_z\)
&
Latent/reduced representation dimension
&
Used in NBO.
\\

\addlinespace[10pt]
\multicolumn{3}{@{}l}{\textbf{Computational-complexity notation}}
\\
\addlinespace[2pt]

\(d_{\rm in},d_{\rm out}\)
&
Input and output dimensions of a dense neural-network layer
&
A dense forward evaluation requires
\(d_{\rm in}d_{\rm out}\) MACs.
\\

\(d_{\rm coef}\)
&
Dimension of NBO coefficient
&
\\

\(N_x\)
&
Number of spatial evaluation points
&
\\

\(\mathcal C_{\rm coef}^{\rm fwd}\)
&
Forward cost of NBO network
&
Measured in MACs.
\\

\(\mathcal C_{\rm coef}\)
&
Forward-and-backward cost of NBO network
&
\\

\(\mathcal C_{\rm NBO}\)
&
Online NBO-RC training cost per record
&
\(O(\mathcal C_{\rm coef}+d_{\rm coef}^{2})\).
\\

\(\mathcal C_{\rm branch}^{\rm fwd},
\mathcal C_{\rm trunk}^{\rm fwd}\)
&
DeepO branch- and trunk forward workload
&
Per network evaluation.
\\

\(\mathcal C_{\rm rec}^{\rm fwd}\)
&
DeepO forward field-reconstruction workload
&
Before physics-derivative calculations.
\\

\(\mathcal C_{\rm DeepO}\)
&
DeepO physics-training workload
&
Reported in MAC-equivalent operations.
\\

\(m_{\rm phy},m_{\rm BC}\)
&
Numbers of physics/interior and boundary spatial points per training record
&
Enter the online DeepO cost directly.
\\

\(\kappa_{\rm AD}\)
&
Aggregate automatic-differentiation factor
&
\\

\(\kappa_{\rm phy},\kappa_{\rm BC}\)
&
Automatic-differentiation factors for the physics and boundary objectives
&
\\

\addlinespace[10pt]
\multicolumn{3}{@{}l}{\textbf{Mechanistic-analysis notation}}
\\
\addlinespace[2pt]

\(\mathbf q_j\)
&
Random unit direction in coefficient space
&
For coefficient-space metric alignment.
\\

\(\mathbf v_j\)
&
Random unit direction in operator-network parameter space
&
Used in Hessian and Gauss--Newton analyses.
\\

\(\mathbf J_i\)
&
Jacobian of predicted coefficients w.r.t. trainable network parameters for sample \(i\)
&
\(\mathbf J_i
=
\partial
\boldsymbol{\alpha}_{\boldsymbol{\eta}}(\boldsymbol{\zeta}_i)/
\partial\boldsymbol{\eta}\).
\\

\(\mathcal B,B\)
&
Mini-batch of parameter--time instances
&
\(B=|\mathcal B|\).
\\

\(\ell_i\)
&
Samplewise coefficient-space loss
&
Used in the Hessian decomposition.
\\

\(\mathbf C_i\)
&
Coefficient-space curvature matrix of sample \(i\)
&
\(\mathbf C_i
=\nabla_{\boldsymbol{\alpha}}^2\ell_i\).
\\

\(\mathbf H\)
&
Full Hessian of the training objective w.r.t. network parameters
&
\(\mathbf H
=\mathbf H_{\rm GN}+\mathbf H_{\rm nl}\).
\\

\(\mathbf H_{\rm GN}\)
&
Gauss--Newton component of the Hessian
&
\\

\(\mathbf H_{\rm nl}\)
&
Non-Gauss--Newton component of the Hessian
&
\\

\(\Delta_{\rm GN}(\mathbf v)\)
&
Relative discrepancy between the full-Hessian and Gauss--Newton actions along direction \(\mathbf v\)
&
To quantify Gauss--Newton dominance.
\\

\(\Delta_{\rm GN}^{\rm p95}\)
&
95th percentile of \(\Delta_{\rm GN}\) over sampled parameter-space directions
&
Reported during training.
\\

\(\mathbf C_i^{\rm RC},
\mathbf C_i^{\rm DD}\)
&
Coefficient-space curvature matrices for residual-certified and data-driven objectives
&
\(\mathbf C_i^{\rm RC}
=2\mathbf G_{{\rm R},i}\),
\(\mathbf C_i^{\rm DD}=2\mathbf I\).
\\

\(\mathbf H_{\rm GN}^{\rm RC},
\mathbf H_{\rm GN}^{\rm DD}\)
&
Gauss--Newton matrices induced by the RC and DD coefficient-space metrics
&
Used in controlled metric comparisons.
\\

\(\delta\boldsymbol{\alpha}_i(\mathbf v)\)
&
Coefficient perturbation induced by parameter-space direction \(\mathbf v\)
&
\(\delta\boldsymbol{\alpha}_i(\mathbf v)
=\mathbf J_i\mathbf v\).
\\

\(\kappa_{\rm RC}(\mathbf v)\)
&
Residual-certified Gauss--Newton directional curvature
&
Uses the residual Gram metric.
\\

\(\kappa_{\rm MSE}(\mathbf v)\)
&
MSE Gauss--Newton directional curvature
&
For Euclidean coefficient-space metric.
\\

\(\kappa_E(\mathbf v)\)
&
Velocity-energy directional curvature
&
Uses matrix \(\mathbf E\).
\\

\(\boldsymbol{\eta}^{*}\)
&
Trained operator-network parameter vector
&
\\

\(s_1,s_2\)
&
Scalar displacements along two selected parameter-space directions
&
Coordinates on the local two-dimensional plane.
\\

\(\mathbf v_1,\mathbf v_2\)
&
Selected stiff and weak directions
&
Used in the local quadratic analysis.
\\

\(\mathbf K_Q\)
&
\(2\times2\) local quadratic-form matrix associated with metric \(Q\)
&
Used to visualize the local basin geometry.
\\

\(\mathbf Q_i\)
&
Samplewise coefficient-space metric in the local quadratic analysis
&
\\

\(N_{\rm dir}\)
&
Number of sampled directions
&
Used to normalize percentile ranks.
\\
\end{longtable}

} 

\newpage

\section{NBM error estimates}

\subsection{Basic settings}
Let $\Omega\subset\mathbb{R}^d$ be a bounded spatial domain,
and let
\[
V=
\left\{
\mathbf{v}\in H^1(\Omega)^d:
\nabla\cdot\mathbf{v}=0
\right\}
\]
denote the divergence-free velocity space equipped
with the norm
\[
\|\mathbf{v}\|_V
=
\|\mathbf{v}\|_{H^1(\Omega)^d}.
\]

For an integer $k\ge1$, define
\[
\mathbf W^{k,\infty}(\Omega)
:=
[W^{k,\infty}(\Omega)]^d.
\]
Since $\Omega$ is bounded, we have
\[
\|\mathbf w\|_V
\le
C_V\|\mathbf w\|_{\mathbf W^{k,\infty}(\Omega)},
\qquad
\mathbf w\in V\cap\mathbf W^{k,\infty}(\Omega),
\]
where $C_V>0$ is independent of $\mathbf w$.
Let
\[
\mathcal D_{\rm all}
=
\{\phi_\theta:\theta\in\Theta\}
\]
denote the admissible scalar neural dictionary generated by the prescribed
two-hidden-layer tanh residual architecture\cite{wang2026solving,he2016deep}. For $M\ge1$, define the
corresponding scalar $M$-term scalar approximation class
\[
\mathcal N_M(\mathcal D_{\rm all})
=
\left\{
\sum_{j=1}^{M}
a_j\phi_{\theta_j}
:
a_j\in\mathbb R,\;
\theta_j\in\Theta
\right\}.
\]

For the velocity-only formulation, let
$\boldsymbol{\Phi}_\theta$ denote the divergence-free vector-valued neural basis
generated from $\phi_\theta$. In two dimensions, this basis is given by
\[
\boldsymbol{\Phi}_\theta
=
\nabla^\perp\phi_\theta
=
\begin{bmatrix}
\partial_y\phi_\theta\\
-\partial_x\phi_\theta
\end{bmatrix},
\]
so that $\nabla\cdot\boldsymbol{\Phi}_\theta=0$.
The corresponding vector-valued $M$-term approximation class is
\[
\mathcal N_M^{\rm div}(\mathcal D_{\rm all})
=
\left\{
\sum_{j=1}^{M}
a_j\boldsymbol{\Phi}_{\theta_j}
:
a_j\in\mathbb R,\;
\theta_j\in\Theta
\right\}.
\]

The random NBM velocity trial space is denoted by
\[
V_{N_b}
=
\operatorname{span}
\left\{
\boldsymbol{\Phi}_{\theta_1},\dots,
\boldsymbol{\Phi}_{\theta_{N_b}}
\right\},
\qquad
\theta_i\overset{\mathrm{i.i.d.}}{\sim}\pi,
\]
where $\pi$ is the sampling distribution on $\Theta$, specified according to the sampling strategy introduced in \cite{wang2026solving}.

Let \(Y\) denote the continuous residual space equipped with the norm
\(\|\cdot\|_Y\). The continuous velocity-only residual operator is
defined by
\[
\mathcal R:V\to Y,
\qquad
\mathcal R(\mathbf v)
:=
\mathbb P_\perp
\left[
\partial_t\mathbf v
+
\nabla\cdot(\mathbf v\otimes\mathbf v)
-
\nu\Delta\mathbf v
-
\mathbf F
\right].
\]
The exact solution \(\mathbf u\) satisfies
\(
\mathcal R(\mathbf u)=0.
\)

At the discrete level, the domain is partitioned into $N_c$ virtual control
volumes (VCVs). For each VCV $K$, let $R_K(\bv)$ denote the integrated
momentum residual associated with
\(
\partial_t \bv
+
\nabla\cdot(\bv\otimes\bv)
-
\nu\Delta\bv
-
\mathbf F,
\)
prior to the removal of the pressure-gradient component. Collecting the VCV
residuals gives the discrete residual vector
\[
\mathbf r^{N_c}(\bv)
=
\bigl(R_K(\bv)\bigr)_K.
\]

We equip the VCV residual space with the weighted $\ell^2$ norm
\(
\|\mathbf r\|_{\mathbf W}^2
:=
\mathbf r^T\mathbf W\mathbf r,
\)
where $\mathbf W=\mathbf W^T\succ0$ specifies the discrete weighting, or
equivalently the discrete measure, on the VCV residuals. Different choices
of $\mathbf W$ may incorporate VCV volume weights, nonuniform residual
weighting, or other discrete weighting strategies. The simplest choice is
$\mathbf W=\mathbf I$, which gives the standard Euclidean (discrete
$\ell^2$) residual norm.

The continuous projector $\mathbb P_\perp$ is realized in the VCV residual
space by the corresponding $\mathbf W$-orthogonal projector
$\mathbf P_\perp^{\mathbf W}$ onto the complement of the discrete
pressure-gradient space. Accordingly, the discrete velocity-only residual
is defined by
\[
\mathbf r^{N_c}_{\perp}(\bv)
:=
\mathbf P_\perp^{\mathbf W}\mathbf r^{N_c}(\bv).
\]

The exact discrete NBM solution is therefore
\[
\u_{N_b,N_c}^{*}
=
\arg\min_{\bv\in V_{N_b}}
\|\mathbf r^{N_c}_{\perp}(\bv)\|_{\mathbf W}^2,
\]
and the computed least-squares solution is denoted by
$\widetilde{\u}_{N_b,N_c}$.

\subsection{Spectral-like neural approximation}

We first establish the deterministic approximation property underlying the
random neural basis analysis. We assume that the relevant scalar solution field,
or the scalar potential generating the divergence-free velocity field,
admits an analytic extension to an open neighborhood of a rectangular
domain containing \(\Omega\). After a fixed affine transformation, this
domain may be identified with \([0,1]^d\).

A function \(f\) is said to be \((Q,R)\)-analytic if there exist constants
\(Q,R>0\) such that
\[
|f|_{W^{s,\infty}([0,1]^d)}
\le
Q R^{-s}s!,
\qquad
s=0,1,2,\ldots.
\]

The scalar neural bases employed by NBM are generated by the residual
architecture introduced in our previous work~\cite{wang2026solving}. For
each hidden layer,
\[
t_\ell
=
A_{\ell-1}W_\ell^{T}
+
\mathbf{1}b_\ell^{T},
\qquad
a_\ell
=
t_\ell\odot\alpha_\ell^{T},
\]
\[
z_\ell
=
\tanh(a_\ell),
\qquad
s_\ell
=
\beta_\ell A_{\ell-1}P_\ell^{T},
\qquad
A_\ell
=
s_\ell+z_\ell,
\]
where \(\beta_\ell\in\mathbb R\) controls the residual branch. Setting
\[
\beta_\ell=0,
\qquad
\alpha_\ell=\mathbf{1},
\]
reduces each residual layer to a standard tanh feedforward layer.
Consequently, the admissible two-hidden-layer NBM residual-network class
contains the standard two-hidden-layer tanh network class as a special
case. Therefore, constructive approximation bounds for standard
two-hidden-layer tanh networks also provide upper bounds for the best
approximation error of the admissible NBM architecture.

\begin{lemma}[Spectral-like neural approximation for analytically regular solutions]
\label{lem:S1_neural_approx}
Suppose that the exact scalar solution \(u\), or, for a two-dimensional
divergence-free velocity field $\u$, a scalar potential \(\psi\) satisfying
\[
\u=\nabla^\perp\psi,
\]
is \((Q,R)\)-analytic.

Then there exist constants
\(
C_{\rm app}>0,
\,
\alpha>0,
\,
\beta=\frac{1}{d+1},
\)
independent of \(M\), such that
\[
\inf_{v\in\mathcal N_M(\mathcal D_{\rm all})}
\|u-v\|_{H^1(\Omega)}
\le
C_{\rm app}
\exp\!\left(-\alpha M^\beta\right)
\]
for a scalar solution, and
\[
\inf_{\bv\in\mathcal N_M^{\rm div}(\mathcal D_{\rm all})}
\|\u-\bv\|_V
\le
C_{\rm app}
\exp\!\left(-\alpha M^\beta\right)
\]
for the divergence-free velocity representation.
\end{lemma}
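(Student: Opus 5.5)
The plan is to reduce Lemma~\ref{lem:S1_neural_approx} to a known constructive approximation result for two-hidden-layer $\tanh$ networks, and then use the embedding of that class into the admissible NBM architecture described just before the lemma. Concretely, I would invoke a De~Ryck--Lanthaler--Mishra type estimate. It says that for a $(Q,R)$-analytic $f$ on $[0,1]^d$ and every $N$, there is a two-hidden-layer $\tanh$ network $\widehat f_N$ with width $O(N^{d})$ in the first layer and $M_N\le C_{\rm arch}N$ second-layer features such that
\[
\|f-\widehat f_N\|_{W^{k,\infty}([0,1]^d)}\le C\,\exp(-c N)
\]
for fixed $k$. This follows by combining piecewise Taylor polynomials of degree $\sim N$ on a grid of spacing $\sim R$, whose analytic error is $\lesssim Q(\text{const})^{-N}$ uniformly in derivatives up to order $k$, with the exact-up-to-$\epsilon$ $\tanh$ emulation of monomials and of a partition of unity. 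The key bookkeeping step is to read off the representation $\widehat f_N=\sum_{j=1}^{M}a_j\phi_{\theta_j}$. Here each $\phi_{\theta_j}$ is one output neuron of the second hidden layer, viewed as an element of $\mathcal D_{\rm all}$ with the shared first layer absorbed into $\theta_j$. I would count $M$ as the total number of such atoms, including the first-layer width. This gives $M\asymp N^{d+1}$, so $N\asymp M^{1/(d+1)}$, which is exactly where $\beta=1/(d+1)$ comes from.

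For the scalar statement, I would then use $\|g\|_{H^1(\Omega)}\le |\Omega|^{1/2}\|g\|_{W^{1,\infty}}$ on the bounded domain, which holds after the fixed affine map to $[0,1]^d$. The constants change only by factors depending on the map. Together with the setting $\beta_\ell=0$, $\alpha_\ell=\mathbf 1$, which places the standard network in $\mathcal N_M(\mathcal D_{\rm all})$, this yields
\[
\inf_{v\in\mathcal N_M}\|u-v\|_{H^1}\le C_{\rm app}\exp(-\alpha M^{1/(d+1)}).
\]
The monotonicity $\mathcal N_M\subset\mathcal N_{M'}$ for $M\le M'$ handles values of $M$ that are not of the form $N^{d+1}$, at the cost of adjusting $\alpha$.

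For the divergence-free statement, I would apply the same construction to the potential $\psi$ with $k=2$. Writing $\widehat\psi=\sum_j a_j\phi_{\theta_j}$ gives $\nabla^\perp\widehat\psi=\sum_j a_j\boldsymbol\Phi_{\theta_j}\in\mathcal N_M^{\rm div}(\mathcal D_{\rm all})$, which is automatically divergence-free. The error satisfies
\[
\|\u-\nabla^\perp\widehat\psi\|_V\le C_V\|\psi-\widehat\psi\|_{W^{2,\infty}}.
\]
It therefore suffices that the scalar approximation holds in $W^{2,\infty}$, which the analytic Taylor construction provides because derivatives of analytic functions are again analytic with modified $(Q,R)$.

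I expect the main obstacle to be the middle step: controlling the derivatives of the $\tanh$ emulation error, not merely its values, while keeping the network size polynomial in $N$. The weights grow like $\epsilon^{-1}$, and derivative errors can blow up. I would first try to cite the Sobolev-norm version of the $\tanh$ approximation theorem directly, which handles $W^{k,\infty}$ with a size overhead of only $\log$ factors absorbed into $\alpha$. If that is unavailable, the fallback is to choose the emulation tolerance $\epsilon=e^{-cN}$ so that its $W^{2,\infty}$ contribution is subexponentially dominated, accepting a smaller $\alpha$.
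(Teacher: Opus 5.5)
Your proposal follows essentially the same route as the paper. You invoke the De~Ryck--Lanthaler--Mishra Sobolev-norm estimate for two-hidden-layer \(\tanh\) networks, which the paper cites directly as Corollary~5.6 in \(W^{m,\infty}\), so your concern about derivative blow-up in the emulation is already handled there. You embed it via \(\beta_\ell=0,\ \alpha_\ell=\mathbf 1\), read off the second-layer neurons as atoms with the shared first layer absorbed into \(\theta_j\), pass from \(W^{k,\infty}\) to \(H^1\), and apply the scalar bound to \(\psi\) with one extra derivative.

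The one point to fix is your bookkeeping for \(\beta\). Atoms in \(\mathcal N_M(\mathcal D_{\rm all})\) are only the second-layer features. Under your stated estimate the honest count is therefore \(M=M_N\le C_{\rm arch}N\), which taken at face value would even give \(\beta=1\). Your \(M\asymp N^{d+1}\) ``including the first-layer width'' is a harmless overcount, valid only by \(\mathcal N_M\subset\mathcal N_{M'}\). In the paper, \(\beta=1/(d+1)\) comes instead directly from the cited rate \(C_0N^{m/(d+1)}\exp[-\alpha_0N^{1/(d+1)}\log N]\), with \(N\) the overall width parameter and \(M_N\le C_{\rm arch}N\).
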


\begin{proof}
For a \((Q,R)\)-analytic function \(f\), Corollary~5.6 of
De Ryck, Lanthaler, and Mishra
\cite{deryck2021approximation}
shows that, for every sufficiently large \(N\), there exists a
two-hidden-layer tanh neural network \(\widehat f_N\), with hidden-layer
widths bounded by \(O(N)\), satisfying
\[
\|f-\widehat f_N\|_{W^{m,\infty}([0,1]^d)}
\le
C_0
N^{m/(d+1)}
\exp\!\left[
-\alpha_0
N^{1/(d+1)}
\log N
\right],
\]
where \(C_0,\alpha_0>0\) are independent of \(N\). We take \(m=k\) for
direct scalar approximation and \(m=k+1\) for the two-dimensional
divergence-free velocity construction.

Because the standard two-hidden-layer tanh architecture is a special case
of the admissible NBM residual architecture, the same constructive network
\(\widehat f_N\) belongs to the admissible NBM network class. Moreover,
the output of a two-hidden-layer network is a linear combination of its
second-hidden-layer features. Since the second-hidden-layer width is
\(O(N)\), there exists a constant \(C_{\rm arch}>0\), independent of \(N\),
such that
\[
\widehat f_N
=
\sum_{j=1}^{M_N}
a_j\phi_{\theta_j},
\qquad
M_N\le C_{\rm arch}N.
\]
Hence
\(
\widehat f_N
\in
\mathcal N_{M_N}(\mathcal D_{\rm all}).
\)

For a prescribed sufficiently large \(M\), choose
\(
N
=
\left\lfloor
\frac{M}{C_{\rm arch}}
\right\rfloor.
\)
Then \(M_N\le M\). For a scalar solution \(u=f\),
\[
\inf_{v\in\mathcal N_M(\mathcal D_{\rm all})}
\|u-v\|_{H^1(\Omega)}
\le
\|u-\widehat f_N\|_{H^1(\Omega)}
\le
C_V
\|f-\widehat f_N\|_{W^{k,\infty}(\Omega)},
\]
and therefore after adjusting constants
\[
\inf_{v\in\mathcal N_M(\mathcal D_{\rm all})}
\|u-v\|_{H^1(\Omega)}
\le
C_1
M^{k/(d+1)}
\exp\!\left[
-\alpha_1
M^{1/(d+1)}
\log M
\right].
\]

For the two-dimensional divergence-free velocity, let
\[
\u=\nabla^\perp\psi,
\qquad
\u_M=\nabla^\perp\widehat\psi_N.
\]
Since differentiation is a bounded map from
\(W^{k+1,\infty}\) to \(W^{k,\infty}\),
\[
\|\u-\u_M\|_{W^{k,\infty}(\Omega)}
\le
C
\|\psi-\widehat\psi_N\|_{W^{k+1,\infty}(\Omega)}.
\]
Using the assumed control of the \(V\)-norm then gives
\[
\inf_{\bv\in\mathcal N_M^{\rm div}(\mathcal D_{\rm all})}
\|\u-\bv\|_V
\le
C_2
M^{(k+1)/(d+1)}
\exp\!\left[
-\alpha_2
M^{1/(d+1)}
\log M
\right].
\]

In either case, the algebraic factor can be absorbed into a weaker
stretched-exponential bound. More precisely, for every fixed \(r\ge0\),
there exist constants \(C,c>0\) such that
\[
M^r
\exp\!\left[
-\alpha_2M^{1/(d+1)}\log M
\right]
\le
C
\exp\!\left[
-cM^{1/(d+1)}
\right]
\]
for all sufficiently large \(M\). Enlarging \(C\) to cover the finitely
many remaining values of \(M\) yields
\[
\inf_{v\in\mathcal N_M(\mathcal D_{\rm all})}
\|u-v\|_{H^1(\Omega)}
\le
C_{\rm app}
\exp\!\left[
-\alpha M^{1/(d+1)}
\right],
\]
with the same conclusion for
\(\mathcal N_M^{\rm div}(\mathcal D_{\rm all})\).
Thus
\(
\beta=\frac{1}{d+1},
\)
which proves the lemma.
\end{proof}

\begin{remark}[Role of analytic regularity]
The analyticity assumption is used only to obtain the
stretched-exponential best-\(M\)-term rate in
Lemma~\ref{lem:S1_neural_approx}. If the solution has only finite
Sobolev regularity, constructive tanh-network approximation results still
yield convergence, but generally at an algebraic rate with respect to
network width~\cite{deryck2021approximation}. Thus, analytic regularity is
not required for neural approximation itself; it is the additional
regularity that produces the spectral-like term used in the present error
estimate.
\end{remark}

\subsection{Random coverage of an \texorpdfstring{\(M\)}{M}-term neural approximant}

From this point onward, we focus exclusively on the divergence-free
velocity formulation. Accordingly, all approximation spaces, random trial
spaces, and error estimates are stated directly in the vector-valued
velocity space \(V\). The scalar counterpart can be treated analogously
and is therefore omitted. Recall that, in two dimensions, the
divergence-free neural basis is constructed from the scalar neural basis
through
\[
\boldsymbol{\Phi}_{\theta}
=
\nabla^\perp \phi_{\theta}.
\]

\begin{lemma}[Random coverage of an \(M\)-term neural approximant]
\label{lem:S2_random_coverage}
Let
\(
\mathbf u_M^\ast
=
\sum_{j=1}^{M}
a_j^\ast
\boldsymbol{\Phi}_{\theta_j^\ast}
\in
\mathcal N_M^{\rm div}(\mathcal D_{\rm all})
\)
be an \(M\)-term divergence-free neural approximant satisfying
\(
\|\mathbf u-\mathbf u_M^\ast\|_V
\le
C_{\rm app}e^{-\alpha M^\beta}.
\)
For a prescribed \(\varepsilon_M>0\), assume that there exists a constant
\(c_p>0\), independent of \(M\), such that, for each
\(j=1,\ldots,M\), there exists a neighborhood
\(B_j\subset\Theta\) satisfying
\(
\pi(B_j)\ge \frac{c_p}{M},
\)
and
\(
\|\boldsymbol{\Phi}_{\theta}
-
\boldsymbol{\Phi}_{\theta_j^\ast}\|_V
\le
\varepsilon_M,
\,
\forall \theta\in B_j.
\)
Define
\(
A_M
=
\sum_{j=1}^{M}|a_j^\ast|.
\)
Then, for any \(0<\delta<1\), if
\(
N_b
\ge
\frac{M}{c_p}
\log\frac{M}{\delta},
\)
the random trial space \(V_{N_b}\)
satisfies, with probability at least \(1-\delta\),
\[
\inf_{\mathbf v\in V_{N_b}}
\|\mathbf u-\mathbf v\|_V
\le
C_{\rm app}e^{-\alpha M^\beta}
+
A_M\varepsilon_M.
\]
If, in addition,
\(
A_M\varepsilon_M
\le
C_{\rm cov}e^{-\alpha M^\beta},
\)
then
\[
\inf_{\mathbf v\in V_{N_b}}
\|\mathbf u-\mathbf v\|_V
\le
C_b e^{-\alpha M^\beta},
\]
where
\(
C_b
=
C_{\rm app}+C_{\rm cov}.
\)
\end{lemma}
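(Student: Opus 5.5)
The proof is a coupon-collector style union bound followed by a triangle inequality. For each $j=1,\ldots,M$, let $E_j$ be the event that none of the $N_b$ i.i.d.\ samples $\theta_1,\dots,\theta_{N_b}\sim\pi$ lands in the neighborhood $B_j$. By independence and the mass assumption $\pi(B_j)\ge c_p/M$,
\[
\Pr(E_j)=\bigl(1-\pi(B_j)\bigr)^{N_b}\le\Bigl(1-\frac{c_p}{M}\Bigr)^{N_b}\le\exp\!\Bigl(-\frac{c_pN_b}{M}\Bigr).
\]
The union bound then gives $\Pr\bigl(\bigcup_j E_j\bigr)\le M\exp(-c_pN_b/M)$. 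Under the hypothesis $N_b\ge \frac{M}{c_p}\log\frac{M}{\delta}$, this is at most $M\cdot\delta/M=\delta$. Hence, with probability at least $1-\delta$, every $B_j$ contains at least one sampled parameter.

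Work on this good event. For each $j$, pick one index $i_j$ with $\theta_{i_j}\in B_j$ and set $\widetilde\theta_j:=\theta_{i_j}$. The indices $i_j$ need not be distinct. Define
\[
\widetilde{\mathbf u}_M:=\sum_{j=1}^{M}a_j^\ast\boldsymbol{\Phi}_{\widetilde\theta_j}.
\]
This lies in $V_{N_b}$: if several $j$ share the same sample, their coefficients simply add, and the span is closed under this. The triangle inequality and the covering property $\|\boldsymbol{\Phi}_{\widetilde\theta_j}-\boldsymbol{\Phi}_{\theta_j^\ast}\|_V\le\varepsilon_M$ give
\[
\inf_{\mathbf v\in V_{N_b}}\|\mathbf u-\mathbf v\|_V\le\|\mathbf u-\mathbf u_M^\ast\|_V+\sum_{j=1}^M|a_j^\ast|\,\bigl\|\boldsymbol{\Phi}_{\theta_j^\ast}-\boldsymbol{\Phi}_{\widetilde\theta_j}\bigr\|_V\le C_{\rm app}e^{-\alpha M^\beta}+A_M\varepsilon_M,
\]
which is the first claim. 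The second claim follows by substituting $A_M\varepsilon_M\le C_{\rm cov}e^{-\alpha M^\beta}$ and setting $C_b=C_{\rm app}+C_{\rm cov}$.

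The argument uses only the stated hypotheses, so the probabilistic steps are routine. The one point to check is measurability of the good event. $E_j$ is a cylinder event on the product space, so it suffices that the $B_j$ are measurable, which we take as part of the assumption that $\pi(B_j)$ is defined.

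The real content sits outside this lemma, when it is later combined with Lemma~S1. There one must verify that neighborhoods with $\pi(B_j)\gtrsim 1/M$ exist and that $A_M\varepsilon_M$ is exponentially small, which requires bounding the coefficient sum $A_M$. For the present statement these are hypotheses, so nothing further is needed.
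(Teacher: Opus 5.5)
Your proposal is correct and follows essentially the same argument as the paper's proof: you define the miss events $E_j$, bound $\Pr(E_j)\le\exp(-c_pN_b/M)$, apply a union bound under $N_b\ge\frac{M}{c_p}\log\frac{M}{\delta}$, and on the good event compare $\mathbf u$ with $\widetilde{\mathbf u}_M=\sum_j a_j^\ast\boldsymbol{\Phi}_{\theta_{i_j}}$ via the triangle inequality. Your remarks that the indices $i_j$ need not be distinct and that the $B_j$ must be measurable are harmless refinements the paper leaves implicit.
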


\begin{proof}
For each \(j=1,\ldots,M\), define the event
\[
E_j
=
\left\{
\theta_i\notin B_j
\text{ for all }
i=1,\ldots,N_b
\right\},
\]
that is, the event that none of the \(N_b\) sampled parameters lies in
\(B_j\). Since the samples are i.i.d.,
\[
\mathbb P(E_j)
=
\left(1-\pi(B_j)\right)^{N_b}.
\]
Using
\(
\pi(B_j)\ge \frac{c_p}{M},
\)
we obtain
\[
\mathbb P(E_j)
\le
\left(1-\frac{c_p}{M}\right)^{N_b}
\le
\exp\left(-\frac{c_pN_b}{M}\right).
\]
Therefore, by the union bound,
\[
\mathbb P\left(
\bigcup_{j=1}^{M}E_j
\right)
\le
M
\exp\left(-\frac{c_pN_b}{M}\right).
\]
If
\(
N_b
\ge
\frac{M}{c_p}
\log\frac{M}{\delta},
\) 
then
\[
M
\exp\left(-\frac{c_pN_b}{M}\right)
\le
\delta.
\]
Hence, with probability at least \(1-\delta\), every neighborhood
\(B_j\) contains at least one sampled parameter.

On this event, for each \(j=1,\ldots,M\), choose an index
\(i_j\in\{1,\ldots,N_b\}\) such that
\(
\theta_{i_j}\in B_j.
\)
Define
\[
\widetilde{\mathbf u}_M
=
\sum_{j=1}^{M}
a_j^\ast
\boldsymbol{\Phi}_{\theta_{i_j}}.
\]
Since
\(
\boldsymbol{\Phi}_{\theta_{i_j}}\in V_{N_b},
\)
we have
\(
\widetilde{\mathbf u}_M\in V_{N_b}.
\)
Therefore,
\[
\inf_{\mathbf v\in V_{N_b}}
\|\mathbf u-\mathbf v\|_V
\le
\|\mathbf u-\widetilde{\mathbf u}_M\|_V.
\]
By the triangle inequality,
\[
\|\mathbf u-\widetilde{\mathbf u}_M\|_V
\le
\|\mathbf u-\mathbf u_M^\ast\|_V
+
\|\mathbf u_M^\ast-\widetilde{\mathbf u}_M\|_V.
\]
The first term satisfies
\[
\|\mathbf u-\mathbf u_M^\ast\|_V
\le
C_{\rm app}e^{-\alpha M^\beta}.
\]
For the second term,
\[
\mathbf u_M^\ast-\widetilde{\mathbf u}_M
=
\sum_{j=1}^{M}
a_j^\ast
\left(
\boldsymbol{\Phi}_{\theta_j^\ast}
-
\boldsymbol{\Phi}_{\theta_{i_j}}
\right).
\]
Hence,
\[
\|\mathbf u_M^\ast-\widetilde{\mathbf u}_M\|_V
\le
\sum_{j=1}^{M}
|a_j^\ast|
\,
\left\|
\boldsymbol{\Phi}_{\theta_j^\ast}
-
\boldsymbol{\Phi}_{\theta_{i_j}}
\right\|_V.
\]
Since \(\theta_{i_j}\in B_j\),
\(
\left\|
\boldsymbol{\Phi}_{\theta_j^\ast}
-
\boldsymbol{\Phi}_{\theta_{i_j}}
\right\|_V
\le
\varepsilon_M.
\)
Therefore,
\[
\|\mathbf u_M^\ast-\widetilde{\mathbf u}_M\|_V
\le
\left(
\sum_{j=1}^{M}|a_j^\ast|
\right)
\varepsilon_M
=
A_M\varepsilon_M.
\]
Combining the two estimates gives
\[
\inf_{\mathbf v\in V_{N_b}}
\|\mathbf u-\mathbf v\|_V
\le
C_{\rm app}e^{-\alpha M^\beta}
+
A_M\varepsilon_M.
\]
If
\(
A_M\varepsilon_M
\le
C_{\rm cov}e^{-\alpha M^\beta},
\)
then
\[
\inf_{\mathbf v\in V_{N_b}}
\|\mathbf u-\mathbf v\|_V
\le
\left(
C_{\rm app}+C_{\rm cov}
\right)
e^{-\alpha M^\beta}
=
C_b e^{-\alpha M^\beta}.
\]
This proves the lemma.
\end{proof}

\subsection{Expressing the coverage estimate in terms of \texorpdfstring{\(N_b\)}{Nb}}
\label{sec:coverage}

The coverage condition in Lemma~\ref{lem:S2_random_coverage} is
\(
N_b
\ge
\frac{M}{c_p}
\log\frac{M}{\delta}.
\)
For a given $0<\delta<1$, let
\[
M_{N_b}
:=
\left\lfloor
\frac{c_p}{2}
\frac{N_b}{\log(N_b/\delta)}
\right\rfloor .
\]
It can be shown that there exist sufficiently large \(N_b\), such that \(1\leq M_{N_b}\le N_b\), and
\[
\log\frac{M_{N_b}}{\delta}
\le
\log\frac{N_b}{\delta}.
\]
Therefore,
\[
\frac{M_{N_b}}{c_p}
\log\frac{M_{N_b}}{\delta}
\le
\frac{1}{c_p}
\left(
\frac{c_p}{2}
\frac{N_b}{\log(N_b/\delta)}
\right)
\log\frac{N_b}{\delta}
=
\frac{N_b}{2}
\le
N_b.
\]
Thus \(N_b\) satisfies the coverage condition for \(M=M_{N_b}\). 

Moreover, for all sufficiently large \(N_b\),
\(
\frac{c_p}{2}
\frac{N_b}{\log(N_b/\delta)}
\ge 2.
\)
Hence,
\[
M_{N_b}
=
\left\lfloor
\frac{c_p}{2}
\frac{N_b}{\log(N_b/\delta)}
\right\rfloor
\ge
\frac{c_p}{4}
\frac{N_b}{\log(N_b/\delta)}.
\]
Since \(\beta>0\), it follows that
\[
e^{-\alpha M_{N_b}^{\beta}}
\le
\exp\left[
-\widetilde{\alpha}
\left(
\frac{N_b}{\log(N_b/\delta)}
\right)^\beta
\right],
\]
where
\[
\widetilde{\alpha}
=
\alpha
\left(\frac{c_p}{4}\right)^\beta.
\]
Renaming \(\widetilde{\alpha}\) as \(\alpha\), we obtain
\[
\inf_{\mathbf v\in V_{N_b}}
\|\mathbf u-\mathbf v\|_V
\le
C_b
\exp\left[
-\alpha
\left(
\frac{N_b}{\log(N_b/\delta)}
\right)^\beta
\right].
\]

\subsection{Consistency of the VCV residual discretization}
Let the VCVs form a quasi-uniform covering of \(\Omega\) with
characteristic size \(h_c\), so that
\[
N_c\asymp h_c^{-d}.
\]
Let
\(
\mathcal V_{N_b}\subset V_{N_b}
\)
be a bounded admissible set of divergence-free velocity fields.
For \(\mathbf v\in\mathcal V_{N_b}\), let
\(
\mathbf r_{\rm ex}^{N_c}(\mathbf v)
=
\bigl(R_{K,\rm ex}(\mathbf v)\bigr)_K
\)
denote the VCV momentum residual vector obtained using exact cell and face
integration, prior to the removal of the pressure-gradient component, and
define
\[
\mathbf r_{\perp,\rm ex}^{N_c}(\mathbf v)
=
\mathbf P_\perp^{\mathbf W}
\mathbf r_{\rm ex}^{N_c}(\mathbf v).
\]
Recall that the quadrature-evaluated residual used by NBM is
\[
\mathbf r_\perp^{N_c}(\mathbf v)
=
\mathbf P_\perp^{\mathbf W}
\mathbf r^{N_c}(\mathbf v).
\]

\begin{lemma}[Consistency of the VCV residual discretization]
\label{lem:S3_vcv_consistency}

Assume that the momentum-residual integrands are sufficiently regular on
the VCVs and that the VCV quadrature is uniformly \(q\)-th order
consistent on \(\mathcal V_{N_b}\), in the sense that
\[
\left|
R_{K,\rm ex}(\mathbf v)-R_K(\mathbf v)
\right|
\le
C_q |K| h_c^q,
\qquad
\forall \mathbf v\in\mathcal V_{N_b},
\]
where \(C_q>0\) is independent of \(K\) and \(N_c\).
Assume further that the discrete weighting matrix satisfies
\[
\lambda_{\max}(\mathbf W)
\le
C_W h_c^{-d},
\]
where \(C_W>0\) is independent of \(N_c\).
Then there exists a constant \(C_{\rm vcv}>0\), independent of \(N_c\),
such that
\[
\left|
\left\|
\mathbf r_{\perp,\rm ex}^{N_c}(\mathbf v)
\right\|_{\mathbf W}
-
\left\|
\mathbf r_\perp^{N_c}(\mathbf v)
\right\|_{\mathbf W}
\right|
\le
C_{\rm vcv} h_c^q,
\qquad
\forall \mathbf v\in\mathcal V_{N_b}.
\]
Since \(N_c\asymp h_c^{-d}\), this is equivalently
\[
\left|
\left\|
\mathbf r_{\perp,\rm ex}^{N_c}(\mathbf v)
\right\|_{\mathbf W}
-
\left\|
\mathbf r_\perp^{N_c}(\mathbf v)
\right\|_{\mathbf W}
\right|
\le
C_{\rm vcv} N_c^{-q/d},
\qquad
\forall \mathbf v\in\mathcal V_{N_b}.
\]
For the second-order midpoint-type VCV discretization considered here,
\(q=2\) in smooth regions.
\end{lemma}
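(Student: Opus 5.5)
The plan is to reduce the statement to a bound on the unprojected residual vectors and then pass the bound through the projector. First, by the reverse triangle inequality,
\[
\left|
\left\|\mathbf r_{\perp,\rm ex}^{N_c}(\mathbf v)\right\|_{\mathbf W}
-
\left\|\mathbf r_\perp^{N_c}(\mathbf v)\right\|_{\mathbf W}
\right|
\le
\left\|
\mathbf P_\perp^{\mathbf W}\bigl(\mathbf r_{\rm ex}^{N_c}(\mathbf v)-\mathbf r^{N_c}(\mathbf v)\bigr)
\right\|_{\mathbf W}.
\]
Since $\mathbf P_\perp^{\mathbf W}$ is the $\mathbf W$-orthogonal projector onto the complement of $\mathcal G_h$, it is idempotent and self-adjoint in the $\mathbf W$-inner product. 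Concretely, $\mathbf W\mathbf P_\perp^{\mathbf W}=(\mathbf P_\perp^{\mathbf W})^T\mathbf W$, which can be checked directly from the realization $\mathbf I-\mathbf G(\mathbf G^T\mathbf W\mathbf G)^\dagger\mathbf G^T\mathbf W$. It therefore has $\mathbf W$-operator norm at most one, and the right-hand side is bounded by $\|\mathbf r_{\rm ex}^{N_c}(\mathbf v)-\mathbf r^{N_c}(\mathbf v)\|_{\mathbf W}$. This step is exactly where the choice of a $\mathbf W$-orthogonal, rather than Euclidean, projector matters.

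Next, bound the weighted norm of the componentwise quadrature error $\mathbf d:=\mathbf r_{\rm ex}^{N_c}-\mathbf r^{N_c}$. We use $\|\mathbf d\|_{\mathbf W}^2\le\lambda_{\max}(\mathbf W)\|\mathbf d\|_2^2$ together with the per-VCV consistency hypothesis $|d_K|\le C_q|K|h_c^q$, where vector-valued residual components are absorbed into a constant depending on $d$. This gives
\[
\|\mathbf d\|_2^2\le C_q^2h_c^{2q}\sum_K|K|^2
\le C_q^2h_c^{2q}\,\max_K|K|\,\sum_K|K|.
\]
Quasi-uniformity gives $\max_K|K|\lesssim h_c^d$, and the covering property gives $\sum_K|K|\lesssim|\Omega|$, up to a bounded overlap constant. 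Hence $\|\mathbf d\|_2^2\lesssim h_c^{2q+d}$. Combined with $\lambda_{\max}(\mathbf W)\le C_Wh_c^{-d}$, this yields $\|\mathbf d\|_{\mathbf W}\le C_{\rm vcv}h_c^{q}$ with $C_{\rm vcv}=C_q(C_W C_{\rm qu}|\Omega|)^{1/2}$. This constant is independent of $N_c$ and, by the uniformity assumption over the bounded set $\mathcal V_{N_b}$, independent of $\mathbf v$.

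Finally, $N_c\asymp h_c^{-d}$ gives $h_c\asymp N_c^{-1/d}$, so $h_c^q\lesssim N_c^{-q/d}$ after enlarging $C_{\rm vcv}$. For the midpoint-type rule, a Taylor expansion about the cell and face centres gives a local error $O(|K|h_c^2)$ for smooth integrands, so $q=2$.

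The main obstacle is the counting step. It relies on the scaling $\sum_K|K|^2\lesssim h_c^d$ matching the weight bound $\lambda_{\max}(\mathbf W)\lesssim h_c^{-d}$. If the VCVs overlap substantially or are not quasi-uniform, one would need a bounded-overlap constant. I would state that constant explicitly as part of the quasi-uniform covering assumption rather than try to derive it.
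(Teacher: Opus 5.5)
Your proposal is correct and follows the paper's proof: the same bound $\|\mathbf d\|_{\mathbf W}^2\le\lambda_{\max}(\mathbf W)\|\mathbf d\|_2^2$, the same counting step $\sum_K|K|^2\le\max_K|K|\sum_K|K|$, the $\mathbf W$-non-expansiveness of $\mathbf P_\perp^{\mathbf W}$, and the reverse triangle inequality. The paper applies these steps in the opposite order, first bounding the unprojected difference and then projecting. Your bounded-overlap caveat is a sensible refinement, because the paper uses $\sum_K|K|=|\Omega|$, which is exact only when the VCVs form a genuine partition, as in its basic setting.
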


\begin{proof}
For any \(\mathbf v\in\mathcal V_{N_b}\), define
\[
\mathbf e^{N_c}(\mathbf v)
:=
\mathbf r_{\rm ex}^{N_c}(\mathbf v)
-
\mathbf r^{N_c}(\mathbf v).
\]
By the assumed local VCV consistency,
\[
|e_K^{N_c}(\mathbf v)|
=
\left|
R_{K,\rm ex}(\mathbf v)-R_K(\mathbf v)
\right|
\le
C_q |K| h_c^q.
\]
Hence,
\[
\begin{aligned}
\left\|
\mathbf e^{N_c}(\mathbf v)
\right\|_{\mathbf W}^2
&=
\bigl(\mathbf e^{N_c}(\mathbf v)\bigr)^T
\mathbf W
\mathbf e^{N_c}(\mathbf v)
\\
&\le
\lambda_{\max}(\mathbf W)
\left\|
\mathbf e^{N_c}(\mathbf v)
\right\|_2^2
\\
&\le
C_W h_c^{-d}
\sum_K
\left|
R_{K,\rm ex}(\mathbf v)-R_K(\mathbf v)
\right|^2
\\
&\le
C h_c^{-d+2q}
\sum_K |K|^2.
\end{aligned}
\]
By quasi-uniformity,
\(
\max_K |K|
\le
C h_c^d,
\)
and, since the VCVs cover \(\Omega\),
\(
\sum_K |K|=|\Omega|.
\)
Therefore,
\[
\sum_K |K|^2
\le
\left(\max_K |K|\right)
\sum_K |K|
\le
C h_c^d |\Omega|.
\]
It follows that
\(
\left\|
\mathbf e^{N_c}(\mathbf v)
\right\|_{\mathbf W}^2
\le
C h_c^{2q},
\)
and hence
\[
\left\|
\mathbf r_{\rm ex}^{N_c}(\mathbf v)
-
\mathbf r^{N_c}(\mathbf v)
\right\|_{\mathbf W}
\le
C h_c^q.
\]

Since \(\mathbf P_\perp^{\mathbf W}\) is a
\(\mathbf W\)-orthogonal projector, it is non-expansive in the
\(\mathbf W\)-norm. Thus,
\[
\begin{aligned}
\left\|
\mathbf r_{\perp,\rm ex}^{N_c}(\mathbf v)
-
\mathbf r_\perp^{N_c}(\mathbf v)
\right\|_{\mathbf W}
&=
\left\|
\mathbf P_\perp^{\mathbf W}
\left(
\mathbf r_{\rm ex}^{N_c}(\mathbf v)
-
\mathbf r^{N_c}(\mathbf v)
\right)
\right\|_{\mathbf W}
\\
&\le
\left\|
\mathbf r_{\rm ex}^{N_c}(\mathbf v)
-
\mathbf r^{N_c}(\mathbf v)
\right\|_{\mathbf W}
\\
&\le
C h_c^q.
\end{aligned}
\]
The reverse triangle inequality therefore gives
\[
\left|
\left\|
\mathbf r_{\perp,\rm ex}^{N_c}(\mathbf v)
\right\|_{\mathbf W}
-
\left\|
\mathbf r_\perp^{N_c}(\mathbf v)
\right\|_{\mathbf W}
\right|
\le
C h_c^q.
\]
Finally, using
\(
N_c\asymp h_c^{-d},\)
we obtain
\(
h_c^q\asymp N_c^{-q/d},
\)
and hence
\[
\left|
\left\|
\mathbf r_{\perp,\rm ex}^{N_c}(\mathbf v)
\right\|_{\mathbf W}
-
\left\|
\mathbf r_\perp^{N_c}(\mathbf v)
\right\|_{\mathbf W}
\right|
\le
C_{\rm vcv} N_c^{-q/d}.
\]
This proves the lemma.
\end{proof}

\subsection{Proof of Theorem 1}

\begin{proof}
By Lemmas~\ref{lem:S1_neural_approx} and
\ref{lem:S2_random_coverage}, together with the choice
\(M=M_{N_b}\) in Section~\ref{sec:coverage}, with probability at least
\(1-\delta\), there exists
\(
\mathbf v_{N_b}\in V_{N_b}
\)
such that
\[
\|\mathbf u-\mathbf v_{N_b}\|_V
\le
C_b
\exp\left[
-\alpha
\left(
\frac{N_b}{\log(N_b/\delta)}
\right)^\beta
\right].
\]
Since
\(
\mathcal R(\mathbf u)=0,
\)
the local Lipschitz continuity of \(\mathcal R\) gives
\[
\|\mathcal R(\mathbf v_{N_b})\|_Y
=
\|\mathcal R(\mathbf v_{N_b})
-
\mathcal R(\mathbf u)\|_Y
\le
L_{\rm R}
\|\mathbf v_{N_b}-\mathbf u\|_V.
\]
Hence,
\[
\|\mathcal R(\mathbf v_{N_b})\|_Y
\le
L_{\rm R} C_b
\exp\left[
-\alpha
\left(
\frac{N_b}{\log(N_b/\delta)}
\right)^\beta
\right].
\]

By the VCV consistency estimate of
Lemma~\ref{lem:S3_vcv_consistency},
\[
\|\mathbf r_\perp^{N_c}(\mathbf v_{N_b})\|_{\mathbf W}
\le
\|\mathcal R(\mathbf v_{N_b})\|_Y
+
C_{\rm vcv}N_c^{-q/d}.
\]
Therefore,
\[
\|\mathbf r_\perp^{N_c}(\mathbf v_{N_b})\|_{\mathbf W}
\le
L_{\rm R} C_b
\exp\left[
-\alpha
\left(
\frac{N_b}{\log(N_b/\delta)}
\right)^\beta
\right]
+
C_{\rm vcv}N_c^{-q/d}.
\]

Recall that the ideal discrete NBM minimizer is
\[
\mathbf u_{N_b,N_c}^{\ast}
=
\arg\min_{\mathbf v\in V_{N_b}}
\|\mathbf r_\perp^{N_c}(\mathbf v)\|_{\mathbf W}^2.
\]
Since minimizing the squared norm is equivalent to minimizing the norm,
\[
\|\mathbf r_\perp^{N_c}
(\mathbf u_{N_b,N_c}^{\ast})\|_{\mathbf W}
\le
\|\mathbf r_\perp^{N_c}
(\mathbf v_{N_b})\|_{\mathbf W}.
\]
Consequently,
\[
\|\mathbf r_\perp^{N_c}
(\mathbf u_{N_b,N_c}^{\ast})\|_{\mathbf W}
\le
L_{\rm R} C_b
\exp\left[
-\alpha
\left(
\frac{N_b}{\log(N_b/\delta)}
\right)^\beta
\right]
+
C_{\rm vcv}N_c^{-q/d}.
\]

Applying Lemma~\ref{lem:S3_vcv_consistency} once more gives
\[
\|\mathcal R(\mathbf u_{N_b,N_c}^{\ast})\|_Y
\le
\|\mathbf r_\perp^{N_c}
(\mathbf u_{N_b,N_c}^{\ast})\|_{\mathbf W}
+
C_{\rm vcv}N_c^{-q/d}.
\]
Thus,
\[
\|\mathcal R(\mathbf u_{N_b,N_c}^{\ast})\|_Y
\le
L_{\rm R} C_b
\exp\left[
-\alpha
\left(
\frac{N_b}{\log(N_b/\delta)}
\right)^\beta
\right]
+
2C_{\rm vcv}N_c^{-q/d}.
\]

For sufficiently large \(N_b\) and \(N_c\), the preceding approximation
lies in the neighborhood in which the local residual stability estimate
holds. Therefore,
\[
\|\mathbf u-\mathbf u_{N_b,N_c}^{\ast}\|_V
\le
C_{\rm stab}
\|\mathcal R(\mathbf u_{N_b,N_c}^{\ast})\|_Y,
\]
and hence
\[
\|\mathbf u-\mathbf u_{N_b,N_c}^{\ast}\|_V
\le
C_{\rm stab}L_{\rm R} C_b
\exp\left[
-\alpha
\left(
\frac{N_b}{\log(N_b/\delta)}
\right)^\beta
\right]
+
2C_{\rm stab}C_{\rm vcv}N_c^{-q/d}.
\]

It remains to estimate the algebraic error associated with the
least-squares solve. Write
\[
\mathbf u_{N_b,N_c}^{\ast}
=
\sum_{i=1}^{N_b}
\alpha_i^\ast
\boldsymbol{\Phi}_{\theta_i},
\qquad
\widetilde{\mathbf u}_{N_b,N_c}
=
\sum_{i=1}^{N_b}
\widetilde{\alpha}_i
\boldsymbol{\Phi}_{\theta_i}.
\]
By the assumed uniform stability of the neural basis representation,
\[
\begin{aligned}
\left\|
\widetilde{\mathbf u}_{N_b,N_c}
-
\mathbf u_{N_b,N_c}^{\ast}
\right\|_V
&=
\left\|
\sum_{i=1}^{N_b}
(\widetilde{\alpha}_i-\alpha_i^\ast)
\boldsymbol{\Phi}_{\theta_i}
\right\|_V
\\
&\le
C_{\rm basis}
\left\|
\widetilde{\boldsymbol{\alpha}}
-
\boldsymbol{\alpha}^{\ast}
\right\|_2.
\end{aligned}
\]
Since $\balpha^*$ satisfies the normal equations
and $\mathbf A_{N_b,N_c}$ has full column rank,
the prescribed least-squares optimality tolerance yields
\[
\left\|
\widetilde{\boldsymbol{\alpha}}
-
\boldsymbol{\alpha}^{\ast}
\right\|_2
\le
\frac{\varepsilon_{\rm LS}}
{\left[
\sigma_{\min}
\left(
A_{N_b,N_c}
\right)
\right]^2},
\]
we obtain
\[
\left\|
\widetilde{\mathbf u}_{N_b,N_c}
-
\mathbf u_{N_b,N_c}^{\ast}
\right\|_V
\le
C_{\rm basis}
\frac{\varepsilon_{\rm LS}}
{\left[
\sigma_{\min}
\left(
A_{N_b,N_c}
\right)
\right]^2}.
\]

Finally, by the triangle inequality,
\[
\begin{aligned}
\|\mathbf u-\widetilde{\mathbf u}_{N_b,N_c}\|_V
&\le
\|\mathbf u-\mathbf u_{N_b,N_c}^{\ast}\|_V
+
\|\mathbf u_{N_b,N_c}^{\ast}
-
\widetilde{\mathbf u}_{N_b,N_c}\|_V
\\
&\le
C_{\rm stab}L_{\rm R} C_b
\exp\left[
-\alpha
\left(
\frac{N_b}{\log(N_b/\delta)}
\right)^\beta
\right]
\\
&\quad
+
2C_{\rm stab}C_{\rm vcv}N_c^{-q/d}
+
C_{\rm basis}
\frac{\varepsilon_{\rm LS}}
{\left[
\sigma_{\min}
\left(
A_{N_b,N_c}
\right)
\right]^2}.
\end{aligned}
\]
Setting
\[
C_1=C_{\rm stab}L_{\rm R}C_b,
\qquad
C_2=2C_{\rm stab}C_{\rm vcv},
\qquad
C_3=C_{\rm basis},
\]
yields
\[
\boxed{
\|\mathbf u-\widetilde{\mathbf u}_{N_b,N_c}\|_V
\le
C_1
\exp\left[
-\alpha
\left(
\frac{N_b}{\log(N_b/\delta)}
\right)^\beta
\right]
+
C_2N_c^{-q/d}
+
C_3
\frac{\varepsilon_{\rm LS}}
{\left[
\sigma_{\min}
\left(
A_{N_b,N_c}
\right)
\right]^2}
}.
\]
This proves the theorem.
\end{proof}

\newpage

\section{NBM residual--error certificate at neural discrete level}

\subsection{Proof of Theorem~2}
\begin{proof}
\(\boldsymbol{\alpha}^\ast\) is the exact solution of a discrete step, thus
\[
  R_\perp(\widetilde{\boldsymbol{\alpha}})
  =
  \mathbf L_\perp\widetilde{\boldsymbol{\alpha}}-\mathbf g_\perp
  =
  \mathbf L_\perp
  (\widetilde{\boldsymbol{\alpha}}-\boldsymbol{\alpha}^\ast)
  =
  \mathbf L_\perp\delta\boldsymbol{\alpha} .
\]
Since \(\mathbf E\) is positive definite, let
\[
  \mathbf q:=\mathbf E^{1/2}\delta\boldsymbol{\alpha}.
\]
Then, by the definition of the discrete energy norm,
\[
  \|\mathbf q\|_2
  =
  \|\mathbf e_{\mathbf u}\|_{\mathcal E}.
\]
Moreover, we have
\[
  \|R_\perp(\widetilde{\boldsymbol{\alpha}})\|_{W}
  =
  \|\mathbf W^{1/2}\mathbf L_\perp\delta\boldsymbol{\alpha}\|_2
  =
  \|\mathbf B_\perp\mathbf q\|_2 .
\]
Since \(\mathbf B_\perp\) has full column rank, its smallest singular value is
positive. Hence
\[
  \sigma_{\min}(\mathbf B_\perp)\|\mathbf q\|_2
  \le
  \|\mathbf B_\perp\mathbf q\|_2
  \le
  \sigma_{\max}(\mathbf B_\perp)\|\mathbf q\|_2 .
\]
Substituting
\[
  \|\mathbf q\|_2
  =
  \|\mathbf e_{\mathbf u}\|_{\mathcal E},
  \qquad
  \|\mathbf B_\perp\mathbf q\|_2
  =
  \|R_\perp(\widetilde{\boldsymbol{\alpha}})\|_{W},
\]
gives
\[
  \sigma_{\min}(\mathbf B_\perp)
  \|\mathbf e_{\mathbf u}\|_{\mathcal E}
  \le
  \|R_\perp(\widetilde{\boldsymbol{\alpha}})\|_{W}
  \le
  \sigma_{\max}(\mathbf B_\perp)
  \|\mathbf e_{\mathbf u}\|_{\mathcal E}.
\]
This proves the theorem.
\end{proof}
\newpage

\section{NBM residual--error certificate at continuous level}
We provide the continuous weak counterpart of the discrete certificate stated in Theorem~2.  
Let \(\Omega\subset\mathbb R^d\), \(d=2\) or \(3\), be a bounded Lipschitz
domain. After lifting any nonhomogeneous velocity boundary data, define the
homogeneous divergence-free velocity space
\[
  V_{\rm div}
  :=
  \left\{
  \mathbf v\in H^1(\Omega)^d:
  \nabla\cdot\mathbf v=0,\;
  \mathbf v \text{ satisfies the homogeneous velocity boundary conditions}
  \right\}.
\]
The space $V_{\rm div}$ is a subspace of $V$,
with homogeneous velocity boundary conditions imposed.
The NBM neural basis space $V_{N_b}\subset V$
is divergence-free by construction, while velocity
boundary conditions are enforced through the boundary
residuals rather than by the individual neural basis functions.
Let
\(
  \mathcal G
  :=
  \{\nabla q:\ q\in Q\}
\)
denote the continuous pressure-gradient space.
The continuous Leray projection removes the component of the momentum residual in \(\mathcal G\). 
In weak form,
this is equivalently obtained by restricting the residual to test functions in
\(V_{\rm div}\), because pressure-gradient components vanish against
divergence-free test fields.

Consider one linearized implicit Navier--Stokes step with frozen convective velocity \(\mathbf u^n\).
In the NBM formulation applied here, \(\mathbf u^n\) is represented in the same divergence-free space, and therefore
\(
  \nabla\cdot\mathbf u^n=0
\)
holds by construction. We assume that \(\mathbf u^n\in L^\infty(\Omega)^d\) and
that the boundary conditions make the convection term skew-symmetric on
\(V_{\rm div}\):
\[
  ((\mathbf u^n\cdot\nabla)\mathbf v,\mathbf v)_{L^2}=0,
  \qquad
  \forall \mathbf v\in V_{\rm div}.
\]
This holds for periodic boundaries, no-slip or slip walls, and prescribed-velocity boundaries after lifting the prescribed data.

Define the continuous energy norm
\(
  \|\mathbf v\|_{\mathcal E}^{2}
  :=
  \frac{1}{\Delta t}\|\mathbf v\|_{L^2(\Omega)}^2
  +
  \nu\|\nabla\mathbf v\|_{L^2(\Omega)}^2 .
\)
Let \(\mathbf u^\ast\in V_{\rm div}\) denote the exact solution of the
continuous projected problem:
\[
  a(\mathbf u^\ast,\mathbf w)
  =
  \ell(\mathbf w),
  \qquad
  \forall \mathbf w\in V_{\rm div} ,
\]
where
\[
  a(\mathbf v,\mathbf w)
  :=
  \frac{1}{\Delta t}(\mathbf v,\mathbf w)_{L^2}
  +
  ((\mathbf u^n\cdot\nabla)\mathbf v,\mathbf w)_{L^2}
  +
  \nu(\nabla\mathbf v,\nabla\mathbf w)_{L^2},
\]
and
\[
  \ell(\mathbf w)
  :=
  (\mathbf F,\mathbf w)_{L^2}
  +
  \frac{1}{\Delta t}(\mathbf u^n,\mathbf w)_{L^2}.
\]
Additional known terms generated by boundary lifting may be absorbed into
\(\ell\).
For an approximation \(\widetilde{\mathbf u}\in V_{\rm div}\), define the weak projected residual \(r_\perp(\widetilde{\mathbf u})\in V_{\rm div}^\ast\) by
\[
  \langle r_\perp(\widetilde{\mathbf u}),\mathbf w\rangle
  :=
  a(\widetilde{\mathbf u},\mathbf w)-\ell(\mathbf w),
  \qquad
  \forall \mathbf w\in V_{\rm div}.
\]
Here \(V_{\rm div}^{\ast}\) denotes the dual space of \(V_{\rm div}\); hence
\(r_\perp(\widetilde{\mathbf u})\) is a weak residual functional acting on
divergence-free test functions.
The corresponding dual norm is measured with respect to the energy norm:
\[
  \|r_\perp(\widetilde{\mathbf u})\|_{V_{\rm div}^\ast}
  :=
  \sup_{\mathbf w\in V_{\rm div},\,\mathbf w\neq 0}
  \frac{
  |\langle r_\perp(\widetilde{\mathbf u}),\mathbf w\rangle|
  }{
  \|\mathbf w\|_{\mathcal E}
  }.
\]
The following continuous estimate applies to velocity approximations satisfying the prescribed boundary conditions after boundary lifting.
\begin{theorem}
\label{thm:si_continuous_projected_oseen_residual_error}
Let 
\(
  \mathbf e
  :=
  \widetilde{\mathbf u}-\mathbf u^\ast.
\)
Then
\(
  \|\mathbf e\|_{\mathcal E}
  \le
  \|r_\perp(\widetilde{\mathbf u})\|_{V_{\rm div}^\ast}
  \le
  \left(
    1+
    \|\mathbf u^n\|_{L^\infty(\Omega)}
    \sqrt{\frac{\Delta t}{\nu}}
  \right)
  \|\mathbf e\|_{\mathcal E}.
\)
\end{theorem}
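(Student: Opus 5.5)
The argument is the standard Lax--Milgram / Babu\v{s}ka-type residual--error equivalence for the linearized projected (Oseen-type) step. I will work entirely on \(V_{\rm div}\) with the energy norm \(\|\cdot\|_{\mathcal E}\), and use two properties of the bilinear form \(a\): coercivity with constant one, and boundedness with the constant that appears in the statement.

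The first step is the error equation. Since \(a(\mathbf u^\ast,\mathbf w)=\ell(\mathbf w)\) for all \(\mathbf w\in V_{\rm div}\), linearity of \(a\) in its first argument gives
\[
\langle r_\perp(\widetilde{\mathbf u}),\mathbf w\rangle = a(\mathbf e,\mathbf w)\quad \text{for all } \mathbf w\in V_{\rm div}.
\]
Here \(\mathbf e=\widetilde{\mathbf u}-\mathbf u^\ast\in V_{\rm div}\), because both functions satisfy the lifted boundary data and are divergence-free.

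For the lower bound, I test with \(\mathbf w=\mathbf e\). The skew-symmetry assumption \(((\mathbf u^n\cdot\nabla)\mathbf e,\mathbf e)_{L^2}=0\) removes the convective term, so
\[
a(\mathbf e,\mathbf e)=\tfrac{1}{\Delta t}\|\mathbf e\|_{L^2}^2+\nu\|\nabla\mathbf e\|_{L^2}^2=\|\mathbf e\|_{\mathcal E}^2 .
\]
Hence \(\|\mathbf e\|_{\mathcal E}^2=\langle r_\perp,\mathbf e\rangle\le \|r_\perp\|_{V_{\rm div}^\ast}\|\mathbf e\|_{\mathcal E}\). Dividing by \(\|\mathbf e\|_{\mathcal E}\) gives the lower bound; the case \(\mathbf e=0\) is trivial.

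For the upper bound, I bound \(|a(\mathbf e,\mathbf w)|\) term by term. The symmetric part \(\tfrac{1}{\Delta t}(\mathbf e,\mathbf w)+\nu(\nabla\mathbf e,\nabla\mathbf w)\) is an inner product inducing \(\|\cdot\|_{\mathcal E}\). By Cauchy--Schwarz in that inner product, it is at most \(\|\mathbf e\|_{\mathcal E}\|\mathbf w\|_{\mathcal E}\).

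The convective term is the only delicate point. I place the gradient on \(\mathbf e\) and bound
\[
|((\mathbf u^n\cdot\nabla)\mathbf e,\mathbf w)|\le \|\mathbf u^n\|_{L^\infty}\|\nabla\mathbf e\|_{L^2}\|\mathbf w\|_{L^2}.
\]
Then I use \(\|\nabla\mathbf e\|_{L^2}\le \nu^{-1/2}\|\mathbf e\|_{\mathcal E}\) and \(\|\mathbf w\|_{L^2}\le (\Delta t)^{1/2}\|\mathbf w\|_{\mathcal E}\), which follow directly from the definition of the energy norm. This produces exactly the factor \(\|\mathbf u^n\|_{L^\infty}\sqrt{\Delta t/\nu}\). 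Taking the supremum over \(\mathbf w\) then gives
\[
\|r_\perp(\widetilde{\mathbf u})\|_{V_{\rm div}^\ast}\le\left(1+\|\mathbf u^n\|_{L^\infty}\sqrt{\Delta t/\nu}\right)\|\mathbf e\|_{\mathcal E}.
\]

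The main thing to get right is the choice, in the convective bound, of which factor carries the gradient. If the gradient is placed on \(\mathbf e\) and the \(L^2\) norm on \(\mathbf w\), the resulting constant is \(\sqrt{\Delta t/\nu}\), as stated. The alternative, integrating by parts to move the gradient onto \(\mathbf w\) using \(\nabla\cdot\mathbf u^n=0\), gives the same scaling but would additionally require justifying the vanishing of the boundary terms. I would therefore avoid it.

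I would also note that skew-symmetry is needed only for the lower bound, and that the well-posedness of \(\mathbf u^\ast\) is taken as given in the statement.
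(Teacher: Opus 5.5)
Your proposal is correct and follows the paper's proof essentially line for line. The shared steps are the error identity $\langle r_\perp(\widetilde{\mathbf u}),\mathbf w\rangle=a(\mathbf e,\mathbf w)$; testing with $\mathbf w=\mathbf e$ plus skew-symmetry for the lower bound; and, for the upper bound, Cauchy--Schwarz in the energy inner product together with the convective estimate $\|\mathbf u^n\|_{L^\infty}\|\nabla\mathbf e\|_{L^2}\|\mathbf w\|_{L^2}$ (your remarks that $\mathbf e\in V_{\rm div}$ and that $\mathbf e=0$ is trivial just make explicit what the paper leaves implicit).
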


\begin{proof}
Since \(\mathbf u^\ast\) satisfies the projected weak problem,
\[
  a(\mathbf u^\ast,\mathbf w)
  =
  \ell(\mathbf w),
  \qquad
  \forall \mathbf w\in V_{\rm div}.
\]
Therefore,
\[
  \langle r_\perp(\widetilde{\mathbf u}),\mathbf w\rangle
  =
  a(\widetilde{\mathbf u},\mathbf w)
  -
  a(\mathbf u^\ast,\mathbf w)
  =
  a(\mathbf e,\mathbf w),
  \qquad
  \forall \mathbf w\in V_{\rm div}.
\]
Hence
\[
  \langle r_\perp(\widetilde{\mathbf u}),\mathbf w\rangle
  =
  \frac{1}{\Delta t}(\mathbf e,\mathbf w)_{L^2}
  +
  \nu(\nabla\mathbf e,\nabla\mathbf w)_{L^2}
  +
  ((\mathbf u^n\cdot\nabla)\mathbf e,\mathbf w)_{L^2}.
\]
Taking \(\mathbf w=\mathbf e\), we obtain
\[
  \langle r_\perp(\widetilde{\mathbf u}),\mathbf e\rangle
  =
  \frac{1}{\Delta t}\|\mathbf e\|_{L^2}^{2}
  +
  \nu\|\nabla\mathbf e\|_{L^2}^{2}
  +
  ((\mathbf u^n\cdot\nabla)\mathbf e,\mathbf e)_{L^2}.
\]
By the skew-symmetry assumption,
\[
  ((\mathbf u^n\cdot\nabla)\mathbf e,\mathbf e)_{L^2}=0.
\]
Therefore,
\[
  \langle r_\perp(\widetilde{\mathbf u}),\mathbf e\rangle
  =
  \frac{1}{\Delta t}\|\mathbf e\|_{L^2}^{2}
  +
  \nu\|\nabla\mathbf e\|_{L^2}^{2}
  =
  \|\mathbf e\|_{\mathcal E}^{2}.
\]
Using the definition of the dual norm gives
\[
  \|r_\perp(\widetilde{\mathbf u})\|_{V_{\rm div}^\ast}
  \ge
  \frac{
  |\langle r_\perp(\widetilde{\mathbf u}),\mathbf e\rangle|
  }{
  \|\mathbf e\|_{\mathcal E}
  }
  =
  \|\mathbf e\|_{\mathcal E}.
\]
This proves the lower bound.

For the upper bound, let \(\mathbf w\in V_{\rm div}\). 
The time and diffusion parts satisfy 
\[
  \left|
  \frac{1}{\Delta t}(\mathbf e,\mathbf w)_{L^2}
  +
  \nu(\nabla\mathbf e,\nabla\mathbf w)_{L^2}
  \right|
  \le
  \|\mathbf e\|_{\mathcal E}
  \|\mathbf w\|_{\mathcal E}
\]
by Cauchy--Schwarz in the energy inner product.
Similarly, for the convective part,
\[
  |((\mathbf u^n\cdot\nabla)\mathbf e,\mathbf w)_{L^2}|
  \le
  \|\mathbf u^n\|_{L^\infty}
  \|\nabla\mathbf e\|_{L^2}
  \|\mathbf w\|_{L^2}.
\]
Since
\[
  \|\nabla\mathbf e\|_{L^2}
  \le
  \frac{1}{\sqrt{\nu}}\|\mathbf e\|_{\mathcal E},
  \qquad
  \|\mathbf w\|_{L^2}
  \le
  \sqrt{\Delta t}\|\mathbf w\|_{\mathcal E},
\]
we have
\[
  |((\mathbf u^n\cdot\nabla)\mathbf e,\mathbf w)_{L^2}|
  \le
  \|\mathbf u^n\|_{L^\infty}
  \sqrt{\frac{\Delta t}{\nu}}
  \|\mathbf e\|_{\mathcal E}
  \|\mathbf w\|_{\mathcal E}.
\]
Combining the time, diffusion and convection estimates yields
\[
  |\langle r_\perp(\widetilde{\mathbf u}),\mathbf w\rangle|
  \le
  \left(
    1+
    \|\mathbf u^n\|_{L^\infty(\Omega)}
    \sqrt{\frac{\Delta t}{\nu}}
  \right)
  \|\mathbf e\|_{\mathcal E}
  \|\mathbf w\|_{\mathcal E}.
\]
Taking the supremum over all nonzero \(\mathbf w\in V_{\rm div}\) gives
\[
  \|r_\perp(\widetilde{\mathbf u})\|_{V_{\rm div}^\ast}
  \le
  \left(
    1+
    \|\mathbf u^n\|_{L^\infty(\Omega)}
    \sqrt{\frac{\Delta t}{\nu}}
  \right)
  \|\mathbf e\|_{\mathcal E}.
\]
Together with the lower bound, this proves the theorem.
\end{proof}

For general outflow boundaries, the theorem also applies because the normal velocity is outward.
The self-convection contribution on the outflow boundary is
\[
  ((\mathbf u^n\cdot\nabla)\mathbf v,\mathbf v)_{L^2(\Omega)}
  =
  \frac12
  \int_{\Gamma_{\rm out}}
  (\mathbf u^n\cdot\mathbf n)|\mathbf v|^2\,ds
  \ge 0 .
\]
Hence this boundary contribution does not weaken the positive time-diffusion energy term, so the lower bound remains valid. The upper bound is also
unchanged, because its proof only uses the boundedness estimate
\(
  |((\mathbf u^n\cdot\nabla)\mathbf e,\mathbf w)_{L^2}|
  \le
  \|\mathbf u^n\|_{L^\infty}
  \|\nabla\mathbf e\|_{L^2}
  \|\mathbf w\|_{L^2} ,
\)
together with the energy norm. 
For a neutral outflow/stress boundary, the stress boundary contribution is zero in the weak formulation. 
Therefore both bounds in
Theorem~\ref{thm:si_continuous_projected_oseen_residual_error}
hold. 
Together with the boundary settings mentioned above, the continuous
residual--error equivalence applies broadly to the boundary configurations commonly used in practice.
\newpage

\section{Additional details of velocity-only construction}

\subsection{A posteriori pressure recovery}

Pressure is not included among the primary unknowns of the
velocity-only solve. 
When needed for further calculations, it can be computed a posteriori from the obtained velocity field.
Different approaches can be designed for this purpose. 
Here we show one example for global recovery. 
We evaluate pressure-gradient target from the momentum equation at the VCV centers:
\begin{equation*}
g_x
=
\rho
\left(
\nu\Delta u
-
\partial_t u
-
u\partial_xu
-
v\partial_yu
\right),
\end{equation*}
and
\begin{equation*}
g_y
=
\rho
\left(
\nu\Delta v
-
\partial_t v
-
u\partial_xv
-
v\partial_yv
\right).
\end{equation*}
For the recovery, we can represent the pressure using a scalar neural basis:
\(
p_{\rm NN}(\mathbf{x})
=
\sum_{j=1}^{N_b^{p}}
\beta_j\varphi_j(\mathbf{x}).
\)
Let
\(
\mathbf{P}_{x}
=
\left[
\partial_x\varphi_j(\mathbf{x}_K)
\right]_{K,j}
\) 
and 
\(\mathbf{P}_{y}
=
\left[
\partial_y\varphi_j(\mathbf{x}_K)
\right]_{K,j},
\)
and let
\(
\mathbf{P}_{\rm bc}
=
\left[
\varphi_j(\mathbf{x}_f)
\right]_{f,j}
\)
contain the pressure-basis values at the boundary faces. 
The pressure coefficients are determined from the gradient-fitting problem
\begin{equation*}
\boldsymbol{\beta}^{*}
=
\arg\min_{\boldsymbol{\beta}}
\left\{
|K|
\left\|
\mathbf{P}_{x}\boldsymbol{\beta}
-
\mathbf{g}_{x}
\right\|_{2}^{2}
+
|K|
\left\|
\mathbf{P}_{y}\boldsymbol{\beta}
-
\mathbf{g}_{y}
\right\|_{2}^{2}
+
\lambda_{\rm bc}|f|
\left\|
\mathbf{P}_{\rm bc}\boldsymbol{\beta}
\right\|_{2}^{2}
\right\},
\end{equation*}
where \(|K|=\Delta x\Delta y\) is the uniform VCV volume, 
\(|f|=\Delta y\) is the boundary-face measure, and \(\lambda_{bc}\) is the boundary weight. 
\newpage

\section{Benchmark configurations}
\subsection{Single-instance benchmark configurations}
\subsubsection{NBM/OpenFOAM configurations}
\begin{table}[h!]
\centering
\caption{\bfseries NBM/OpenFOAM configurations for the single-instance benchmarks.}
\label{tab:single_instance_config}
\vspace{10pt}

{a. neural basis and spatial configurations}

\vspace{3pt}

\begin{tabular}{lcccc}
\hline
Benchmark
& \(Re\)
& \(N_b\)
& NBM \(N_c\)
& OpenFOAM mesh \\
\hline
Kovasznay
& \(500\)
& \(500\)
& \(50\times50\)
& \(50\times50\) \\

Taylor--Green
& \(1000\)
& \(500\)
& \(50\times50\)
& \(50\times50\) \\

Lid-driven cavity
& \(1000\)
& \(1000\)
& \(100\times100\)
& \(100\times100\) \\

Shear-layer
& \(O(10^{4})\)
& \(1000\)
& \(100\times100\)
& \(100\times100\) \\
\hline
\end{tabular}

\vspace{25pt}

{b. temporal and convective configurations}

\vspace{3pt}

\begin{tabular}{lccc}
\hline
Benchmark
& Temporal discretization
& Convective discretization
& Reported state \\
\hline
Kovasznay
& Steady
& LUST, \(\gamma=0.25\)
& Steady solution \\

Taylor--Green
& BDF2, \(\Delta t=0.001\)
& LUST, \(\gamma=0.25\)
& \(t=0.01\) \\

Lid-driven cavity
& Steady
& LUST, \(\gamma=0.25\)
& Steady solution \\

Shear-layer
& BDF2, \(\Delta t=1.0\)
& LUST, \(\gamma=0.25\)
& \(t=30\) \\
\hline
\end{tabular}

\end{table}

Precision is default FP64.  

\newpage

\subsubsection{PINN configurations}

\begin{table}[h!]
\centering
\caption{\bfseries PINN configurations for the single-instance benchmarks.}
\label{tab:single_instance_pinn_config}
\vspace{10pt}
{a. network configurations}

\vspace{3pt}

\begin{tabular}{lcccc}
\hline
Benchmark
& Hidden layers
& Width
& Activation \\
\hline
Kovasznay
& \(5\)
& \(96\)
& Tanh \\

Taylor--Green
& \(5\)
& \(96\)
& Tanh \\

Lid-driven cavity
& \(6\)
& \(256\)
& SiLU \\

Shear-layer
& \(8\)
& \(256\)
& SiLU \\
\hline
\end{tabular}

\vspace{25pt}

{b. collocation-point configurations}

\vspace{3pt}

\begin{tabular}{lcccc}
\hline
Benchmark
& \(N_{\rm PDE}\)
& \(N_{\rm BC}\)
& \(N_{\rm IC}\)
& \(N_{\rm anchor}\) \\
\hline
Kovasznay
& \(2500\)
& \(200\)
& --
& -- \\

Taylor--Green
& \(10000\)
& \(2000\)
& \(2500\)
& -- \\

Lid-driven cavity
& \(10000\)
& \(400\)
& --
& \(500\) \\

Shear-layer
& \(30000\)
& \(12000\)
& \(10000\)
& -- \\
\hline
\end{tabular}

\vspace{25pt}

{c. collocation-point sampling}

\vspace{3pt}

\begin{tabular}{lccc}
\hline
Benchmark
& PDE points
& Boundary points
& Initial points \\
\hline
Kovasznay
& Sobol
& Uniform
& -- \\

Taylor--Green
& Sobol
& Sobol
& Sobol \\

Lid-driven cavity
& Mixed Sobol
& Uniform
& -- \\

Shear-layer
& Mixed Sobol
& Sobol
& Sobol \\
\hline
\end{tabular}

\vspace{25pt}

{d. optimization configurations (lr: learning rate)}

\vspace{3pt}

\setlength{\tabcolsep}{4pt}
\begin{tabular}{
    l
    p{0.37\textwidth}
    p{0.37\textwidth}
    p{0.37\textwidth}
}
\hline
Benchmark
& Adam training
& L-BFGS \\
\hline

Kovasznay
&
\(20000\) iterations,
lr = \(10^{-3}\)
&
\(2000\) iterations,
lr = \(1.0\) \\

Taylor--Green
&
\(5000\)-iteration pretraining at lr = \(10^{-3}\),
followed by \(15000\) iterations at lr = \(5\times10^{-4}\)
&
\(2000\) iterations,
lr = \(1.0\) \\

Lid-driven cavity
&
\(20000\) iterations
at lr = \(2\times10^{-5}\)
&
\(2000\) iterations,
lr = \(0.5\) \\

Shear-layer
&
\(2000\)-iteration pretraining at lr = \(5\times10^{-4}\),
followed by \(50000\) iterations at lr = \(10^{-4}\)
&
\(2000\) iterations,
lr = \(0.5\) \\
\hline
\end{tabular}

\end{table}

Precision is default FP64.  

\newpage

\subsection{Parametric benchmark configurations}
\subsubsection{NBO configurations}
\begin{table}[h!]
\centering
\caption{\bfseries NBO configurations for the parametric benchmarks. 
}
\label{tab:parametric_nbo_config}
\vspace{10pt}
{a. neural basis and spatial configurations}

\vspace{3pt}

\begin{tabular}{lccc}
\hline
Benchmark
& \(N_b\)
& NBM \(N_c\)
& Latent dimension \(N_z\) \\
\hline
Kolmogorov
& \(1000\)
& \(50\times50\)
& \(300\) \\

Taylor--Couette
& \(1000\)
& \(30\times180\)
& \(300\) \\
\hline
\end{tabular}

\vspace{25pt}

{b. parameter--time dataset configurations}

\vspace{3pt}

\begin{tabular}{lcccc}
\hline
Benchmark
& Parameter range
& Time interval
& \(\Delta t\)
& \(N_s\) \\
\hline
Kolmogorov
& \(\chi\in[0.5,1.5]\)
& \(t\in[0,3]\)
& \(0.01\)
& \(4186\) \\

Taylor--Couette
& \(\Omega_i\in[104.082,195.918]\)
& \(t\in[0,0.2]\)
& \(0.002\)
& \(4312\) \\
\hline
\end{tabular}

\vspace{25pt}

{c. network configurations}

\vspace{3pt}

\begin{tabular}{lcccc}
\hline
Benchmark
& Hidden layers
& Width
& Activation \\
\hline
Kolmogorov
& \(4\)
& \(768\)
& Tanh \\

Taylor--Couette
& \(4\)
& \(768\)
& Tanh \\
\hline
\end{tabular}

\vspace{25pt}

{d. optimization configurations}

\vspace{3pt}

\begin{tabular}{lcccc}
\hline
Benchmark
& Epochs
& Batch size
& Learning rate
& Default split \\
\hline
Kolmogorov
& \(1000\)
& \(64\)
& \(10^{-3}\)
& \(80{:}20\) \\

Taylor--Couette
& \(1000\)
& \(16\)
& \(10^{-3}\)
& \(80{:}20\) \\
\hline
\end{tabular}

\end{table}

NBO-RC and NBO-DD use identical configurations and differ only in their training objectives.
Precision is default FP64.  

\newpage

\subsubsection{DeepO configurations}

\begin{table}[h!]
\centering
\caption{\bfseries DeepO configurations for the parametric benchmarks.}
\label{tab:parametric_DeepO_config}

{a. parameter--time modules}

\vspace{3pt}

\begin{tabular}{lcc}
\hline
Configuration
& Input
& Parameter--time module \\
\hline
Kolmogorov DeepO-PI-CT
& \((\chi,t)\)
& Two \(4\times128\) MLPs \\

Kolmogorov DeepO-PI-DT
& \((\chi,t)\)
& One \(4\times128\) MLP \\

Kolmogorov DeepO-DD
& \((\chi,t)\)
& One \(4\times128\) MLP \\

Taylor--Couette DeepO-PI-DT
& \((\Omega_i,t)\)
& One \(4\times128\) MLP \\

Taylor--Couette DeepO-DD
& \((\Omega_i,t)\)
& One \(4\times128\) MLP \\
\hline
\end{tabular}

\vspace{25pt}

{b. field-reconstruction DeepONet architectures}

\vspace{3pt}

\begin{tabular}{lcccc}
\hline
Benchmark
& Branch network
& Trunk network
& Rank \\
\hline
Kolmogorov
& \(5\times256\)
& \(5\times256\)
& \(128\) \\

Taylor--Couette
& \(5\times256\)
& \(5\times256\)
& \(128\) \\
\hline
\end{tabular}

\vspace{25pt}

{c. mini-batch and spatial-sampling configurations. The parameter–time batch size denotes the total number of parameter–time states in each mini-batch. For Kolmogorov DeepO-PI-CT, the batch consists of 4 parameter samples and 8 time samples.}

\vspace{3pt}

\begin{tabular}{lccc}
\hline
Configuration
& Parameter--time batch
& Interior points
& Boundary points \\
\hline
Kolmogorov DeepO-PI-CT
& \(32\) 
& \(512\) per state
& \(128\) per side \& state \\

Kolmogorov DeepO-PI-DT
& \(64\) 
& \(512\) per state
& \(128\) per side \& state \\

Kolmogorov DeepO-DD
& \(64\) 
& \(512\) per state
& \(128\) per side \& state \\

Taylor--Couette DeepO-PI-DT
& \(32\) 
& \(512\) per state
& \(128\) per side \& state \\

Taylor--Couette DeepO-DD
& \(32\) 
& \(512\) per state
& \(128\) per side \& state \\
\hline
\end{tabular}

\vspace{25pt}

{d. optimization configurations}

\vspace{3pt}

\begin{tabular}{lccc}
\hline
Configuration
& Training amount
& Optimizer
& Learning rate \\
\hline
Kolmogorov DeepO-PI-CT
& \(500000\) iterations
& AdamW
& \(10^{-3}\) \\

Kolmogorov DeepO-PI-DT
& \(2000\) epochs
& AdamW
& \(10^{-3}\) \\

Kolmogorov DeepO-DD
& \(2000\) epochs
& AdamW
& \(10^{-3}\) \\

Taylor--Couette DeepO-PI-DT
& \(1000\) epochs
& AdamW
& \(10^{-3}\) \\

Taylor--Couette DeepO-DD
& \(1000\) epochs
& AdamW
& \(10^{-3}\) \\
\hline
\end{tabular}

\vspace{25pt}

{e. training-objective weights}

\vspace{3pt}

\begin{tabular}{lcccc}
\hline
Configuration
& \(\lambda_{\rm PDE}\)
& \(\lambda_{\rm wall}\)
& \(\lambda_{\rm IC}\)
& \(\lambda_{\rm data}\) \\
\hline
Kolmogorov DeepO-PI-CT
& \(1\)
& \(10\)
& \(10\)
& \(0\) \\

Kolmogorov DeepO-PI-DT
& \(1\)
& \(10\)
& \(0\)
& \(0\) \\

Kolmogorov DeepO-DD
& \(0\)
& \(10\)
& \(0\)
& \(1\) \\

Taylor--Couette DeepO-PI-DT
& \(1\)
& \(1\)
& \(0\)
& \(0\) \\

Taylor--Couette DeepO-DD
& \(0\)
& \(1\)
& \(0\)
& \(1\) \\
\hline
\end{tabular}

\end{table}

All models use Tanh activation and FP64 precision.

\subsubsection{Generalization evaluation protocols}
For the random-split generalization tests, the available
parameter--time samples
\(\boldsymbol{\zeta}_{i,j}=(\mu_i,t_j)\), with
\(\mu=\chi\) for Kolmogorov flow and
\(\mu=\Omega_i\) for Taylor--Couette flow, were randomly partitioned
using training--test ratios of
\(80{:}20\), \(20{:}80\), \(1{:}99\), and \(0.1{:}99.9\).
The same partition was used for NBO-RC, NBO-DD, DeepO-PI-DT, and DeepO-DD
at each ratio, so that all methods were trained and evaluated using
identical parameter--time samples. The test errors reported in
Figs.~\ref{fig:kolmogorov}g and \ref{fig:taylor_couette}f were computed
over all held-out samples in the corresponding test partition.
For the parameter-generalization tests, training was restricted to
the interior parameter intervals
\(\chi\in[0.6,1.3]\) for Kolmogorov flow and
\(\Omega_i\in[104.082,195.918]\) for Taylor--Couette flow.
The ID test sets contained previously unseen parameter values sampled
from these interior intervals, whereas the OOD test sets contained
parameter values sampled from the exterior portions of the full
parameter domains:
\(\chi\in[0.5,0.6)\cup(1.3,1.5]\) and
\(\Omega_i\in[100,104.082)\cup(195.918,200]\).
ID and OOD errors were evaluated at the final simulation times,
\(t=3\) for Kolmogorov flow and \(t=0.2\) for Taylor--Couette flow,
as reported in Figs.~\ref{fig:taylor_couette}h and \ref{fig:taylor_couette}g.

\newpage

\section{Mechanistic analysis of certificate-induced operator-network training geometry}
\textbf{Coefficient-space metric alignment.}
For Fig.~6a, we evaluate
\(
\mathbf{q}_j^{T}
\mathbf{G}_{{\rm R},{\rm mean}}
\mathbf{q}_j
\)
and
\(
\mathbf{q}_j^{T}
\mathbf{E}
\mathbf{q}_j
\)
along random unit directions \(\mathbf{q}_j\) in the velocity-coefficient space. These quadratic forms measure the residual and velocity-energy sensitivities of the same coefficient perturbation. Pearson correlation is computed from the base-10 logarithms of the finite positive quadratic forms, whereas Spearman correlation is computed from their rank ordering.

\textbf{Hessian decomposition.}
For either NBO-RC or NBO-DD, let
\(
\mathbf{J}_i
=
\frac{
\partial
\boldsymbol{\alpha}_{\boldsymbol{\eta}}(\boldsymbol{\zeta}_i)
}{
\partial\boldsymbol{\eta}
}
\)
denote the Jacobian of the predicted velocity coefficients with respect to the trainable operator-network parameters.
Let \(\mathcal{B}\) denote a batch of sampled parameter--time instances, with \(B=|\mathcal{B}|\).
For the corresponding samplewise coefficient-space loss \(\ell_i\), the parameter-space Hessian decomposes as
\(
\mathbf{H}
=
\mathbf{H}_{\rm GN}
+
\mathbf{H}_{\rm nl},
\)
where
\(
\mathbf{H}_{\rm GN}
=
\frac{1}{B}
\sum_{i\in\mathcal{B}}
\mathbf{J}_i^{T}
\mathbf{C}_i
\mathbf{J}_i,
\qquad
\mathbf{C}_i
=
\nabla_{\boldsymbol{\alpha}}^{2}\ell_i,
\)
is the Gauss--Newton component.
It is obtained by locally linearizing the nonlinear coefficient map
\(\boldsymbol{\eta}\mapsto
\boldsymbol{\alpha}_{\boldsymbol{\eta}}\)
and retaining the coefficient-space curvature \(\mathbf{C}_i\); consequently, it depends only on the first parameter derivatives \(\mathbf{J}_i\) of the operator-network output.
The remaining non-Gauss--Newton component is
\(
\mathbf{H}_{\rm nl}
=
\frac{1}{B}
\sum_{i\in\mathcal{B}}
\sum_{a=1}^{N_b}
\left.
\frac{\partial\ell_i}
{\partial\alpha_a}
\right|_{
\boldsymbol{\alpha}
=
\boldsymbol{\alpha}_{\boldsymbol{\eta}}(\boldsymbol{\zeta}_i)
}
\nabla_{\boldsymbol{\eta}}^{2}
\alpha_{\boldsymbol{\eta},a}(\boldsymbol{\zeta}_i).
\)
This term arises from the second derivatives of the nonlinear operator-network map with respect to its trainable parameters.
Here, \(a\) indexes the \(N_b\) velocity coefficients, and
\(\alpha_{\boldsymbol{\eta},a}(\boldsymbol{\zeta}_i)\)
denotes the \(a\)th component of
\(\boldsymbol{\alpha}_{\boldsymbol{\eta}}(\boldsymbol{\zeta}_i)\).
The coefficient-space derivatives
\(\mathbf{C}_i\)
and
\(\partial\ell_i/\partial\alpha_a\)
are evaluated at
\(\boldsymbol{\alpha}
=
\boldsymbol{\alpha}_{\boldsymbol{\eta}}(\boldsymbol{\zeta}_i)\).
The non-Gauss--Newton component is weighted by the coefficient-space gradient of the corresponding training loss.
Consequently, provided that the second parameter derivatives of the operator-network output remain bounded, this component becomes small as the coefficient-space loss gradient decreases during training.
This provides the mechanism by which the local training geometry becomes increasingly dominated by the Gauss--Newton component as optimization proceeds.
For Fig.~\ref{fig:mechanistic_probe}b, the previously defined discrepancy
\(\Delta_{\rm GN}(\mathbf{v}_j)\)
is evaluated at the recorded training epochs along sampled unit directions \(\mathbf{v}_j\) in the operator-network parameter space.
We report its \(95\)th percentile over these directions.
Full Hessian--vector products are computed using automatic differentiation, whereas Gauss--Newton--vector products are computed by applying the coefficient-space curvature \(\mathbf{C}_i\) to each Jacobian--vector product and then performing the Jacobian-transpose pullback.
The final-epoch insets compare the directional curvatures
\(
\mathbf{v}_j^{T}\mathbf{H}\mathbf{v}_j
\)
and
\(
\mathbf{v}_j^{T}\mathbf{H}_{\rm GN}\mathbf{v}_j.
\)

\textbf{Loss-induced Gauss--Newton components.}
Although NBO-RC and NBO-DD share the same Hessian decomposition, their Gauss--Newton components arise from the different coefficient-space metrics established above.
Their samplewise coefficient-space curvatures are
\(
\mathbf{C}_i^{\rm RC}
=
2\mathbf{G}_{{\rm R},i}
\)
and
\(
\mathbf{C}_i^{\rm DD}
=
2\mathbf{I},
\)
respectively.
The corresponding Gauss--Newton components are therefore
\(
\mathbf{H}_{\rm GN}^{\rm RC}
=
\frac{2}{B}
\sum_{i\in\mathcal{B}}
\mathbf{J}_i^{T}
\mathbf{G}_{{\rm R},i}
\mathbf{J}_i
\)
and
\(
\mathbf{H}_{\rm GN}^{\rm DD}
=
\frac{2}{B}
\sum_{i\in\mathcal{B}}
\mathbf{J}_i^{T}
\mathbf{J}_i.
\)
Fixed positive normalization factors used in the implemented RC and DD objectives are retained in the numerical calculations but omitted here for clarity.
Thus, even when the full Hessians of both models are well represented by their Gauss--Newton components, NBO-RC and NBO-DD retain different parameter-space curvature geometries because their Jacobians pull back different coefficient-space metrics.

\textbf{Comparison of coefficient-space metrics.}
To understand the effect of the coefficient-space loss metric, we hold the network parameters and Jacobian fixed and apply different metrics to the same Jacobian--vector products. 
For a parameter-space direction $\mathbf v$, the induced coefficient perturbation for the $i$th sample is
\(\delta\boldsymbol{\alpha}_i(\mathbf v)
=
\mathbf J_i\mathbf v.\)
We evaluate the RC Gauss--Newton curvature, MSE Gauss--Newton curvature, and velocity-energy curvature as
\(
\kappa_{\mathrm{RC}}(\mathbf v)
=
\frac{2}{B}
\sum_{i\in\mathcal B}
\delta\boldsymbol{\alpha}_i(\mathbf v)^{T}
\mathbf G_i
\delta\boldsymbol{\alpha}_i(\mathbf v)
\), 
\(\kappa_{\mathrm{MSE}}(\mathbf v)
=
\frac{2}{B}
\sum_{i\in\mathcal B}
\left\|
\delta\boldsymbol{\alpha}_i(\mathbf v)
\right\|_2^2\), and 
\(
\kappa_E(\mathbf v)
=
\frac{1}{B}
\sum_{i\in\mathcal B}
\delta\boldsymbol{\alpha}_i(\mathbf v)^{T}
\mathbf E
\delta\boldsymbol{\alpha}_i(\mathbf v).
\)
The fixed normalization factors associated with the implemented RC and MSE objectives are retained in the numerical evaluations but are omitted here for clarity.
For the NBO-RC network, $\kappa_{\mathrm{RC}}$ is the Gauss--Newton curvature of the training objective, while $\kappa_{\mathrm{MSE}}$ is evaluated as a reference curvature using the same NBO-RC Jacobian. 
For the NBO-DD network, $\kappa_{\mathrm{MSE}}$ is the Gauss--Newton curvature of the training objective, while $\kappa_{\mathrm{RC}}$ is evaluated as a reference curvature using the same NBO-DD Jacobian. 
Thus, each comparison holds the network parameters, parameter directions, and Jacobian--vector products fixed and changes only the coefficient-space metric. 
Figure~\ref{fig:mechanistic_probe}c reports this comparison for Kolmogorov flow, and Figure~\ref{fig:mechanistic_probe}d reports the corresponding comparison for Taylor--Couette flow. 
Pearson correlation is computed from the finite positive log-curvatures, while Spearman correlation is computed from their rank order.

\textbf{Local quadratic geometry.}
For Figure~\ref{fig:mechanistic_probe}e, we examine the local quadratic geometry for Kolmogorov flow (left) and Taylor--Couette flow (right). 
For each benchmark, we use the Jacobian of the corresponding trained NBO-RC network and consider a two-dimensional plane centered at the trained parameter vector \(\boldsymbol{\eta}^{*}\), defined by \(\boldsymbol{\eta}(s_1,s_2)=\boldsymbol{\eta}^{*}+s_1\mathbf{v}_1+s_2\mathbf{v}_2\), where \(s_1,s_2\in\mathbb{R}\) denote scalar displacements along the parameter-space directions \(\mathbf{v}_1\) and \(\mathbf{v}_2\), respectively.
For each metric \(Q\in\{\mathrm{RC},E,\mathrm{MSE}\}\), its local quadratic form on this plane is therefore
\(\begin{bmatrix}s_1&s_2\end{bmatrix}
\mathbf K_Q
\begin{bmatrix}s_1&s_2\end{bmatrix}^{T}\),
where the \(2\times2\) matrix \(\mathbf K_Q\) has entries
\((\mathbf K_Q)_{ab}
=\frac{1}{B}\sum_{i\in\mathcal B}
\delta\boldsymbol{\alpha}_i(\mathbf v_a)^T
\mathbf Q_i
\delta\boldsymbol{\alpha}_i(\mathbf v_b)\),
\(a,b\in\{1,2\}\).
Here, \(\mathbf Q_i=\mathbf G_i\), \(\mathbf E\), and \(\mathbf I\) for the RC, velocity-energy, and MSE-reference metrics, respectively, with the corresponding loss-normalization factors retained.
For each benchmark, the displayed plane is expressed in the eigenvector coordinates of \(\mathbf K_{\mathrm{RC}}\). 
The eigenvectors associated with the larger and smaller eigenvalues define the stiff and weak directions, respectively.
Each quadratic form is shifted and rescaled independently to \([0,1]\), so Figure~\ref{fig:mechanistic_probe}e compares the shape and orientation of the local quadratic basins rather than their absolute curvature magnitudes.

\textbf{Rank-percentile correspondence.}
For Figure~\ref{fig:mechanistic_probe}f, we evaluate the directional rank correspondence separately for Kolmogorov flow (left) and Taylor--Couette flow (right). For each benchmark, we convert each set of directional curvatures to percentile ranks,
\(
r_j
=
\frac{
\operatorname{rank}(\kappa_j)-1
}{
N_{\mathrm{dir}}-1
},
\)
where average ranks are assigned to tied values. We retain directions in the upper half of the velocity-energy ranking, $r_E\geq 0.5$. The RC-to-velocity-energy correspondence is computed using the NBO-RC network, while the velocity-energy-to-MSE correspondence is computed separately using the NBO-DD network. The displayed connections report the median rank correspondences within equal-width percentile bins. The two segments therefore represent rank relations obtained from their respective trained networks rather than three metrics evaluated using a single common network Jacobian.

\putbib[reference]
\end{bibunit}

\end{document}